\definecolor{istablue}{RGB}{37,59,144}
\definecolor{darkgreen}{RGB}{0,100,53}    
\definecolor{istagreen}{RGB}{0,100,53}    
\definecolor{istalightgreen}{RGB}{110,193,108} 
\definecolor{istalightgreen2}{RGB}{113,187,111}    
\definecolor{istalightazure}{RGB}{218,238,239}    
\definecolor{istablack}{RGB}{26,26,24}       
\definecolor{istablue}{RGB}{37,59,144}    
\definecolor{istaazure}{RGB}{0,154,163}    
\definecolor{istalightorange}{RGB}{254,218,163}    
\definecolor{istalightorange2}{RGB}{252,215,184}    
\definecolor{istared}{RGB}{225,9,44}    
\definecolor{istaorange}{RGB}{247,166,0}    
\definecolor{istaorange2}{RGB}{244,152,0}
\definecolor{ana}{RGB}{0,100,200}   
\definecolor{mar}{RGB}{50,100,0}   
\definecolor{all}{RGB}{200,100,0}
\newcommand{\nb}[1]{\marginpar{\scriptsize #1}}
\newcommand\xxx[1]{\nb{\textcolor{istared}{#1}}}
\renewcommand{\xxx}[1]{}
\definecolor{darkgreen}{rgb}{0.0, 0.6, 0.13}
\newcommand{\diff}[2]{{\textcolor{red}{\sout{#1}}{\textcolor{darkgreen}{#2}}}}
\renewcommand{\diff}[2]{{#2}}
\newcommand{\del}[1]{\diff{#1}{}}
\newcounter{notes}
\tikzset{
->, 
>=stealth, 
node distance=2cm, 
every state/.style={thick, fill=gray!10,ellipse}, 
interm state/.style={thick, fill=gray!10, draw, rectangle, inner sep=6}, 
initial text=$ $, 
}
\definecolor{light-gray}{gray}{0.925}
\global\mdfdefinestyle{example}{
	linecolor=light-gray,
	backgroundcolor=light-gray,
	innerleftmargin=15,innerrightmargin=15,
	innertopmargin=15,innerbottommargin=15
}
\newcommand{\HLTL}{\mathsf{HyperLTL}}
\newcommand{\HTwoLTL}{\mathsf{Hyper}^2\mathsf{LTL}}
\newcommand{\HQPTL}{\mathsf{HyperQPTL}}
\newcommand{\LTL}{\mathsf{LTL}}
\newcommand{\QPTL}{\mathsf{QPTL}}
\newcommand{\TeamLTL}{\mathsf{TeamLTL}}
\newcommand{\logic}{\TFOL}
\newcommand{\generatedSet}[1]{\llbracket #1 \rrbracket}
\newcommand{\Quant}{\mathbb{Q}}
\newcommand{\QExists}{\mathbb{E}}
\newcommand{\setTraces}{\text{T}}
\newcommand{\allSetTraces}{\mathbb{T}}
\newcommand{\setsetTraces}{\textbf{T}}
\newcommand{\traceVar}{\pi}
\newcommand{\traceAssign}{\Pi}
\newcommand{\emptyAssign}{\traceAssign^{\emptyset}}
\newcommand{\flatT}[1]{\langle  #1 \rangle}
\newcommand{\Prop}{\mathcal{X}}
\newcommand{\PropY}{\mathcal{Y}}
\newcommand{\trace}{\tau}
\newcommand{\val}{v}
\newcommand{\Bool}{\mathbf{2}}
\newcommand{\AllVals}{\Bool^{\Prop}}
\newcommand{\Until}{\mathbin\mathbf{U}}
\newcommand{\Next}{\mathbf{X}}
\newcommand{\VarTrace}{{\Var_{\traceS}}}
\newcommand{\VarTime}{\Var_{\nat}}
\newcommand{\FOL}{\text{FO}}
\newcommand{\freeV}{\text{free}}
\newcommand{\FOLOrder}{{\FOL[<]}\xspace}
\newcommand{\timeS}{\nat_{<}}
\newcommand{\traceS}{\allSetTraces}
\newcommand{\UnaryP}{X}
\newcommand{\BinaryP}{X}
\newcommand{\TrPred}{T}
\newcommand{\Var}{\mathcal{V}}
\newcommand{\constrQ}[2]{#1 #2\!\mathbin{::}\!\TrPred}
\newcommand{\SOneS}{\texttt{S1S}\xspace}
\newcommand{\toMSO}{\texttt{toS1S}}
\newcommand{\toTime}{\texttt{toSuppSet}}
\newcommand{\toBool}{\texttt{toBool}}
\newcommand{\toHyper}{\texttt{toHyper}}
\newcommand{\AssignFirst}{\traceAssign_1}
\newcommand{\AssignSecond}{\traceAssign_2}
\newcommand{\AssignT}{\traceAssign_{\traceS}}
\newcommand{\AssignN}{\traceAssign_{\nat}}
\newcommand{\Succ}{\text{Succ}}
\newcommand{\TFOL}{{\FOL[<,\allSetTraces]}}
\newcommand{\TraceTFOL}{\setsetTraces\text{-}\TFOL}
\newcommand{\cTraceTFOL}{\setsetTraces\text{-}{\FOL[<,\constrQ{}{\allSetTraces}]}}
\newcommand{\timeVar}{i}
\newcommand{\modelsLTL}{\mathop{\models_{\texttt{\scriptsize LTL}}}}
\newcommand{\notmodelsLTL}{\mathop{\not\models_{\texttt{\scriptsize LTL}}}}
\newcommand{\modelsQPTL}{\mathop{\models_{\texttt{\scriptsize QPTL}}}}
\newcommand{\modelsHyper}{\mathop{\models_{\texttt{\scriptsize HQ}}}}
\newcommand{\notmodelsHyper}{\mathop{\not\models_{\texttt{\scriptsize HQ}}}}
\newcommand{\modelsTFOL}{\mathop{\models_{ \allSetTraces}}}
\newcommand{\notmodelsTFOL}{\mathop{\not\models_{ \allSetTraces}}}
\newcommand{\modelsSOneS}{\mathop{\models_{\texttt{\scriptsize S1S}}}}
\newcommand{\notmodelsSOneS}{\mathop{\not\models_{\texttt{\scriptsize S1S}}}}
\newcommand{\tAnd}{\text{ and }}
\newcommand{\tOr}{\text{ or }}
\newcommand{\tIf}{\text{ if }}
\newcommand{\tIff}{\text{ iff }}
\newcommand{\tSt}{\text{ s.t.\ }}
\newcommand{\Def}{\stackrel{\mathclap{\text{def}}}{=}}
\newcommand{\nat}{\mathbb{N}}
\newcommand{\LLL}{\mathord{\ldots}}
\newcommand{\As}{\mathop{:}}
\title{Flavors of Quantifiers in Hyperlogics}
\author{Marek Chalupa}{IST Austria, Klosterneuburg, Austria}{Marek.Chalupa@ist.ac.at}{https://orcid.org/0000-0003-1132-5516}{}
\author{Thomas A. Henzinger}{IST Austria, Klosterneuburg, Austria}{tah@ist.ac.at}{https://orcid.org/0000-0002-2985-7724}{}
\author{Ana {Oliveira da Costa}}{IST Austria, Klosterneuburg, Austria}{ana.costa@ist.ac.at}{https://orcid.org/0000-0002-8741-5799}{}
\authorrunning{M. Chalupa, T. A. Henzinger and A. Oliveira da Costa} 
\keywords{Hyperproperties, Satisfiability, First-order Logic, S1S} 
\begin{document}

\maketitle

\begin{abstract}
\emph{Hypertrace logic} is a sorted first-order logic with separate sorts for time and execution traces.
Its formulas specify hyperproperties, which are properties relating multiple traces.
In this work, we extend hypertrace logic by introducing trace quantifiers that range over the set of all possible traces.
In this extended logic, formulas can quantify over two kinds of trace variables: \emph{constrained trace variables}, which range over a fixed set of traces defined by the model, and \emph{unconstrained trace variables}, which can be assigned to any trace.
In comparison, hyperlogics such as HyperLTL have only constrained trace quantifiers.
We use hypertrace logic to study how different quantifier patterns affect the decidability of the satisfiability problem.
We prove that hypertrace logic without constrained trace quantifiers is equivalent to monadic second-order logic of one successor (\(\SOneS\)), and therefore satisfiable, and that the trace-prefixed fragment (all trace quantifiers precede all time quantifiers) is equivalent to \(\HQPTL\).
Moreover, we show that all hypertrace formulas where the only alternation between constrained trace quantifiers is from an existential to a universal quantifier are equisatisfiable to formulas without constraints on their trace variables and, therefore, decidable as well.
Our framework allows us to study also time-prefixed hyperlogics, for which we provide new decidability and undecidability results.
%
%
\end{abstract}


\section{Introduction}
\label{sec:intro}

Temporal logics offer a powerful formalism for specifying \emph{trace properties}, which describe sets of allowed sequences of events (or system states)  that a system can exhibit over time.
Within the linear-time spectrum, Linear Temporal Logic (\(\LTL\)) has been particularly successful in system verification, striking a practical balance between expressiveness and decidability. 
Its extension with propositional quantifiers, known as Quantified Propositional Temporal Logic (\(\QPTL\)), extends \(\LTL\)'s expressive power to capture all regular trace properties while preserving decidability.
The connection between temporal logics and classical first-order logic for specifying linear-time properties was first explored in the seminal work of Kamp~\cite{Kamp1968}. 
This line of research established that \(\LTL\) is expressively equivalent to first-order logic of order (\(\FOLOrder\)), where the order relation \(<\) is interpreted over the natural numbers.
In the case of \(\QPTL\), it was proven to be expressively equivalent to monadic second-order logic of order of one successor (\SOneS) \cite{qptl21202}.
However, these logics are inherently limited to reasoning about individual execution traces and cannot express properties that involve comparing multiple executions. 
In particular, they are not capable of capturing \emph{hyperproperties}.

To address this limitation of trace-based formalisms in expressing hyperproperties, numerous extensions have been proposed in the literature.
Notably, \(\HLTL\) and \(\HQPTL\) extend \(\LTL\) and \(\QPTL\), respectively, with quantification over traces of the system under verification.
An alternative approach involves extending \(\FOLOrder\) with capabilities to compare multiple traces.
\emph{Hypertrace logic}, introduced in \cite{bartocci2022flavors}, extends \(\FOLOrder\) into a two-sorted first-order logic, with separate sorts for time and traces of a given system, allowing only binary predicates over traces and time.
%

In this work, we propose an extension of hypertrace logic that includes both \emph{constrained trace quantifiers}, ranging over all traces in the set given as a model, and \emph{unconstrained trace quantifiers}, ranging over the universe of all traces.
Properties with a mix of constrained and unconstrained trace quantification occur naturally in many areas of system verification.
For example, in reactive systems,  \emph{input enableness} requires that \emph{``the system must produce an output for any possible input''} which can be specified as:
\begin{align}
\label{intro:input_enable}
    \forall \traceVar\ \constrQ{\exists}{\traceVar'}\ \forall i\ \big((\mathrm{input}(\traceVar,i){\mathop{\leftrightarrow}}\mathrm{input}(\traceVar',i)) \wedge \mathrm{outputs}(\traceVar',i)\big).
\end{align}
By using the unconstrained quantifier,  \(\forall \traceVar\), we ensure that all possible input values are taken into account, not just those accepted by the reactive system.
By linking {the universally quantified} inputs to those instantiated by the constrained existential quantifier, \(\constrQ{\exists}{\traceVar'}\), we guarantee that {every possible} input {is} in at least one system execution.

\newcommand{\LIN}{linearizability\xspace}
\newcommand{\linear}{\textit{seq}}
\newcommand{\EvPrec}{\textit{Order}}
Another example, can be found in the specification of consistency properties in concurrent systems like
\emph{linearizability} \cite{LinearHerlihyWing90}, which requires all histories of invocations and response events to shared resources to be consistent with some sequential execution of those operations. 
While verifying for \LIN, it is irrelevant whether the implementation of a concurrent data structure exhibits linear histories because all that matters is that its histories can be successfully related to the ``idealized'' sequential implementation.
In its essence, linearizability can be captured by the following hyperproperty:
\begin{align}
\constrQ{\forall}{\traceVar}\  \exists \traceVar' \ (\varphi_{\text{linear}}(\traceVar') \wedge    
  \varphi_{\text{equivOrder}}(\traceVar,\traceVar')).
\label{def:intro:linear}
\end{align}
The formula above universally quantifies over the observed call history of the shared object while requiring the existence of a linear trace (\(\varphi_{\text{linear}}(\traceVar')\)) that is equivalent to the observed one and respects the precedence order defined between the observed (\(\varphi_{\text{equivOrder}}(\traceVar,\traceVar')\)).

We prove that the fragment of hypertrace logic without constrained trace quantifiers, named \emph{unconstrained hypertrace logic}, is expressively equivalent to \(\SOneS\).
It follows directly that unconstrained hypertrace logic has a decidable satisfiability problem.
We show how to translate hypertrace formulas whose only constrained trace quantifier alternation is from existential to universal quantification into an equisatisfiable unconstrained hypertrace formula.
Hence, this fragment also has a decidable satisfiability problem.
This fragment places no limitations on the positioning of unconstrained or temporal quantifiers after the constrained quantification.
For the fragment where all trace quantifiers occur before time quantifiers, we prove it to be expressively equivalent to \(\HQPTL\).
While the majority of satisfiability results for hyperlogics in the literature focus on the trace-prefix fragment, we extend this line of work by also analyzing the fragment of hypertrace logic where time quantifiers occur first.
For the time-prefix fragment, we prove a new undecidability result by presenting a reduction from the non-halting problem for 2-counter Minsky machines.
Our main contribution is in highlighting the role of different quantifiers, including unconstrained trace quantification, plays in defining hyperlogics with a decidable satisfiability problem.

\vspace{-2mm}
\section{Logics of Time and Order}
\label{sec:pre}
\vspace{-2mm}
In this section, we present the necessary theoretical background, including definitions and prior results, which will form the basis for the results introduced in the next sections.

Let \(\Prop\) be a finite set of boolean variables.
A \emph{valuation} is a partial mapping assigning boolean values, \(\Bool\mathop{=}\{0,1\}\), to variables in \(\Prop\), that is,  \(\val\As \Prop \rightarrow \Bool\).
We denote by \(\val[x \mapsto b]\) the valuation resulting from
updating the value of \(x\) in \(\val\) to \(b\),  and
the \emph{set of all valuations of \(\Prop\)}  by \(\AllVals\).
We freely treat a valuation as a set of variables where suitable.
In particular, we write $x\mathop{\in}\val$ when $\val(x)\mathop{=}1$.
%
A \emph{trace \(\trace\) over variables \(\Prop\)}  is a sequence of valuations in \(\AllVals\).
The set of all \emph{finite} traces over \(\Prop\)  is denoted  \((\AllVals)^*\),
while the set of all
\emph{infinite} traces over \(\Prop\) is denoted by \((\AllVals)^\omega\).
%
%
%
%
%
For a trace \(\trace = v_0 v_1 \dots\) and an index \(i\) within its length (i.e., \(i < |\trace|\)), we adopt the following indexing notations:
\(\trace[i] = v_i\), \(\trace[i\ldots] = v_i v_{i+1}\dots\), and \(\trace[\ldots i] = v_0 v_1\dots v_{i-1}\).
When an index falls outside a trace length (i.e., \(j \geq |\trace|\)), we adopt the convention that
\(\trace[j\ldots]\) is the empty trace, and \(\trace[\ldots j] = \trace\).
Two traces, \(\trace\) and \(\trace'\), agree on their valuation of a set of propositional variables \(\PropY\) at position \(i\), denoted \(\trace[i]\mathop{=}_{\PropY} \trace'[i]\), iff
\((\trace[i]\mathop{\cap}\PropY) \mathop{=}(\trace'[i]\mathop{\cap}\PropY)\).
We generalize it to equality between two traces of equal size with respect to a given set of variables \(\PropY\), denoted \(\trace\mathop{=}_{{\PropY}}\trace'\), when, for all \(i\mathop{<}|\trace|\), \(\trace[i]\mathop{=}_{\PropY}\trace'[i]\).

\emph{Trace properties} define set of traces satisfying a target specification. 
Formally, a trace property \(\setTraces\) over \(\Prop\)  is a set of traces over \(\Prop\); for infinite traces,
that is, \(\setTraces \mathop{\subseteq} (\AllVals)^\omega\). 
A trace \(\trace\) satisfies a trace property \(\setTraces\) iff it is one of its elements, that is \(\trace\mathop{\in} \setTraces\).
We refer to the set of all trace properties as \(\allSetTraces \mathop{=} \Bool^{(\AllVals)^\omega}\).
For systems represented by a set of traces \(S\), where each trace corresponds to one of the system's execution, the system is said to satisfy the trace property \(\setTraces\) if and only if all its traces are in \(\setTraces\); that is,  \(S\mathop{\subseteq} \setTraces\).
A \emph{hyperproperty} \(\setsetTraces\) defines a set of systems (or, equivalently a set of trace properties) satisfying a given specification, \(\setsetTraces\mathop{\subseteq}\allSetTraces\).
Hence, a system \(S\) satisfies a hyperproperty \(\setsetTraces\) if and only if \(S\mathop{\in}\setsetTraces\).

\vspace{-2mm}
\subsection{Linear Temporal Logics}
\label{sec:LTL}
\vspace{-2mm}

A successful formalism to specify trace properties is Linear Temporal Logic (\(\LTL\)), introduced by Pnueli in \cite{Pnueli77}.
LTL formulas are defined by the grammar:
\(\varphi ::= \ a \, |\, \neg \varphi \, |\, \varphi \vee \varphi \, | \, \Next \varphi \, | \, \varphi \Until \varphi\)
where \(a\mathop{\in} \Prop\) 
is a propositional variable, and \(\Next\) (``next'') and \(\Until\) (``until'') are temporal modalities.
%
%
Given a LTL formula \(\varphi\) and an infinite trace \(\trace \mathop{\in} (\Bool^{\Prop})^\omega\), the trace \(\trace\) satisfies \(\varphi\), denoted \(\trace \modelsLTL \varphi\), inductively over the structure of \(\varphi\), as follows:
\[
\begin{split}
	&\trace \modelsLTL a \tIff a\mathop{\in}\trace[0] \hspace{57.5mm}
	\trace \modelsLTL \neg  \psi \tIff \trace \notmodelsLTL  \psi \\
	&  \trace \modelsLTL  \psi_1 \vee  \psi_2 \tIff 
	\trace \modelsLTL  \psi_1 \tOr \trace \modelsLTL  \psi_2 \hspace{25mm}
	\trace \modelsLTL \Next  \psi \tIff  \trace[1\ldots] \modelsLTL  \psi\\
	&\trace \modelsLTL  \psi_1\! \Until\!  \psi_2 \tIff 
	\text{exists } j\mathop{\geq} 0 \tSt \trace[j\ldots] \modelsLTL  \psi_2 \tAnd
	\text{for all } 0\mathop{\leq} j'\mathop{<}j, \trace[j'\ldots] \modelsLTL  \psi_1.
\end{split}
\]
%
%
\(\QPTL\) extends \(\LTL\) with propositional quantification.
Concretely, \(\QPTL\) formulas may include subformulas with propositional quantifiers (e.g., \(\exists q\ \varphi\), where \(q\mathop{{\in}}\Prop\) and \(\varphi\) is a \(\QPTL\) formula) interpreted as:
\(
\trace \modelsQPTL \exists q\ \psi \tIff \text{ exists } \trace'
\tSt
\trace\mathop{=}_{\Prop\mathop{\setminus}\{q\}} \trace'
\tAnd \trace' \modelsQPTL  \psi.
\)
All other subformulas are interpreted as for \(\LTL\).

\vspace{-2mm}
\subsection{Monadic Logics of Order}
\label{sec:S1S}
\vspace{-2mm}

Monadic second-order logic of one successor (\(\SOneS\)) is a widely used formalism for reasoning about regular properties of infinite sequences.
S1S formulas \(\varphi\) are defined by the grammar:
\begin{align}
\varphi ::= \exists \UnaryP\, \varphi \ |\  \exists \timeVar\, \varphi \ |\  \neg \varphi \ |\ \varphi \vee \varphi \ |\ \timeVar = \timeVar \ | \ \Succ(\timeVar, \timeVar) \ |\ \UnaryP(\timeVar)
\label{def:secondOrder}
\end{align}
where \(\UnaryP\) is a second-order variable from a set of variables \(\Var_2\), \(\timeVar\) is a first-order variable over a set of variables \(\Var_1\) and 
\(\Succ\) is a successor function.
The set of first and second order variables are disjoint; that is, 
\(\Var_1 \mathop{\cap} \Var_2 \mathop{=} \emptyset\).
Unless specified otherwise, we use uppercase letters for second-order variables, 
\(\Var_2\mathop{=}\{X, Y, \ldots \}\), and lowercase letter for first-order variables, 
\(\Var_1\mathop{=}\{i, j, \ldots\}\).
The set of free variables of a formula (i.e., not bounded to a quantifier) and closed formulas (no free variables) are defined as usual.

We interpret \(\SOneS\) formulas over the natural numbers, where the \(\Succ(n,n')\) is interpreted as usual: \(\Succ(n,n')\) iff \(n'\mathop{=}n+1\).
We adopt the following abbreviations:
\begin{itemize}
    \item \(X\mathop{\subseteq} Y \ \Def\  \forall i\ (X(i) \rightarrow Y(i))\);
    \item \(\mathrm{SuccClosed}(X)\ \Def\ \forall i\ \forall i'\ ((X(i) \wedge \Succ(i,i')) \rightarrow X(i')) \);
    \item \(x\mathop{\leq}y\  \Def\ \forall Z ((Z(x) \wedge \mathrm{SuccClosed}(Z)) \rightarrow Z(y))\);
    \item \(x\mathop{<}y\  \Def\ x\mathop{\leq}y \land \neg(x = y)\);
    \item \(i = 0 \ \Def\ \forall j\ (i=j \vee i < j) \).
\end{itemize}
The semantics of \SOneS formulas is defined inductively with respect to two assignments: one for first-order variables, \(\AssignFirst\As \Var_1 \mapsto \nat \), and another for second-order variables, \(\AssignSecond\As \Var_2 \mapsto \Bool^{\nat}\). 
For any assignment \(\traceAssign\),  \(\traceAssign[x\mapsto v]\) denotes the assignment that is the same as \(\traceAssign\) except for the value of \(x\) that is updated to \(v\).
We define that \((\AssignFirst, \AssignSecond)\) satisfies \(\varphi\) inductively, as follows:
\begin{align*}
&(\AssignFirst, \AssignSecond) \modelsSOneS \exists X\ \varphi \tIff
\text{exists } S\mathop{\subseteq}\nat \tSt (\AssignFirst, \AssignSecond[X\mapsto S]) \modelsSOneS \varphi\\
&(\AssignFirst, \AssignSecond) \modelsSOneS \exists i\ \varphi \tIff
\text{exists } k\mathop{\in}\nat \tSt (\AssignFirst[i\mapsto k], \AssignSecond) \modelsSOneS \varphi\\
&(\AssignFirst, \AssignSecond) \modelsSOneS \neg \varphi \tIff 
(\AssignFirst, \AssignSecond) \notmodelsSOneS \varphi\\
&(\AssignFirst, \AssignSecond) \modelsSOneS  \varphi \vee \varphi' \tIff 
(\AssignFirst, \AssignSecond) \modelsSOneS \varphi \tOr (\AssignFirst, \AssignSecond) \modelsSOneS \varphi'\\
&(\AssignFirst, \AssignSecond) \modelsSOneS  i=j \tIff \AssignFirst(i) = \AssignFirst(j)\\
&(\AssignFirst, \AssignSecond) \modelsSOneS  \Succ(i,i') \tIff \AssignFirst(i')\mathop{=} \AssignFirst(i)+1\\
&(\AssignFirst, \AssignSecond) \modelsSOneS  \UnaryP(i) \tIff \AssignFirst(i) \mathop{\in} \AssignSecond({\UnaryP})
\end{align*}
%
\SOneS formulas define a set with all their satisfying assignments:
\(\generatedSet{\varphi} \mathop{=} \{(\AssignFirst, \AssignSecond)
\, |\, (\AssignFirst, \AssignSecond)\! \modelsSOneS\! \varphi \}\).

\begin{theorem}[\cite{S1SBuchi60}]
\label{thm:s1s:dec}
For all \(\SOneS\) formulas \(\varphi\),
it is decidable to determine whether \(\generatedSet{\varphi}\mathop{\neq}\emptyset\).
\end{theorem}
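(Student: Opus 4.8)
\section*{Proof proposal for Theorem~\ref{thm:s1s:dec}}

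The plan is to reduce the nonemptiness of \(\generatedSet{\varphi}\) to the emptiness problem for Büchi automata, following the Büchi--Elgot--Trakhtenbrot correspondence between \(\SOneS\) and \(\omega\)-regular languages. First I would fix a finite set \(V\supseteq\freeV(\varphi)\) of free variables and encode an assignment \((\AssignFirst,\AssignSecond)\) restricted to \(V\) as an \(\omega\)-word over the alphabet \(\Bool^{V}\): the \(n\)-th letter records, for each second-order variable \(X\in V\), whether \(n\in\AssignSecond(X)\), and, for each first-order variable \(i\in V\), whether \(\AssignFirst(i)=n\), with the well-formedness side-condition that every first-order track contains exactly one \(1\). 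I would then prove, by induction on the structure of \(\varphi\), that the set of encodings of satisfying assignments over \(V\) is recognized by a Büchi automaton \(\mathcal{A}_\varphi^{V}\) that one can compute effectively from \(\varphi\).

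The inductive construction handles each syntactic form of the grammar in~\eqref{def:secondOrder}. The atomic cases \(\timeVar=\timeVar'\), \(\Succ(\timeVar,\timeVar')\), and \(\UnaryP(\timeVar)\) are recognized by small automata that scan the relevant tracks and check the positional constraint (e.g.\ the unique \(1\) on track \(\timeVar'\) is one step after the unique \(1\) on track \(\timeVar\)); disjunction \(\varphi\vee\varphi'\) is handled by the (effective) closure of Büchi-recognizable languages under union after padding both automata to a common variable set; negation \(\neg\varphi\) is handled by the (effective) closure under complementation, intersected with the well-formedness language to discard ill-formed words; and existential quantification \(\exists X\,\varphi\) or \(\exists\timeVar\,\varphi\) is handled by projecting away the corresponding track, i.e.\ applying the length-preserving homomorphism \(\Bool^{V}\to\Bool^{V\setminus\{X\}}\), which preserves Büchi-recognizability. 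Once \(\mathcal{A}_\varphi\) (over \(V=\freeV(\varphi)\)) is built, \(\generatedSet{\varphi}\neq\emptyset\) holds iff \(L(\mathcal{A}_\varphi)\neq\emptyset\), and nonemptiness of a Büchi automaton is decidable in linear time by searching for a reachable accepting state lying on a cycle.

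I expect the main obstacle to be the negation step, i.e.\ establishing effective closure of Büchi-recognizable \(\omega\)-languages under complementation. Unlike automata on finite words, the subset construction does not work here, and the determinization route would need Muller/Rabin acceptance; the cleaner self-contained argument is Büchi's original one, which associates to a Büchi automaton \(\mathcal{A}\) a finite congruence \(\sim_{\mathcal{A}}\) on \((\Bool^{V})^{*}\) refining the transition behaviour, observes via Ramsey's theorem that every \(\omega\)-word admits a factorization \(u\,w\,w\,w\cdots\) with \(u,w\) in fixed congruence classes, and shows that the language, and likewise its complement, is a finite union of sets of the form \(U\,W^{\omega}\) with \(U,W\) congruence classes — each of which is Büchi-recognizable, and whose membership in \(L(\mathcal{A})\) can be decided class-wise. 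The remaining closure properties (union, projection, intersection with the regular well-formedness language) and the emptiness check are routine automata-theoretic constructions that I would state without detailed verification.
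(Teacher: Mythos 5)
Your proposal is correct and is exactly the classical Büchi argument that the paper relies on: Theorem~\ref{thm:s1s:dec} is stated as a citation of \cite{S1SBuchi60} with no proof given in the paper, and your encoding of assignments as $\omega$-words, the inductive automaton construction with Ramsey-based complementation for negation, and the reduction to Büchi emptiness is precisely the content of that cited result.
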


In his seminal work~\cite{Kamp1968}, Kamp studied of relative expressiveness between the classical first-order approach to specify linear-time -- the first-order logic of order, \(\FOLOrder\) -- and LTL. 
The first-order logic \(\FOLOrder\) is interpreted over labeled linear orders with all uninterpreted predicates being unary.
Formally, \(\FOLOrder\) formulas \(\varphi\) are defined by the grammar:
\begin{align}
\varphi ::= \exists \timeVar\, \varphi \ |\  \neg \varphi \ |\ \varphi \vee \varphi \ |\  \timeVar < \timeVar \ |\ \timeVar = \timeVar \ | \ \UnaryP(\timeVar)
\label{def:folOrder}
\end{align}
%
where \(\timeVar\) is a first-order variable, \(=\) is equality, \(<\) is the order, and \(\BinaryP\) is predicate from a given set of monadic predicates \(\Prop\).
%

We work with interpretations of \(\FOLOrder\) over labeled linear-orders defined by a tuple \((\Lambda ,<, \mathcal{I})\) where
\(<\) defines a linear order over the domain \(\Lambda\), and \(\mathcal{I}\) is a function that associates to each predicate symbol a subset of elements of $\Lambda$.
From a trace \(\trace\), defined over the set of propositional variables \(\Prop\), we can derive the labeled linear-order \((\nat,<, \mathcal{I}_{\trace})\) {over} the set of unary predicates 
\(\{\UnaryP_a\ |\ a \mathop{\in} \Prop\}\), 
where \(<\) is interpreted as usual for natural numbers and 
\({\mathcal{I}_{\trace}(\UnaryP_a)=\{j\, |\, a\mathop{\in}\trace[j]\}}\). 
Using {this interpretation}, it is straightforward to translate LTL formulas to equivalent \(\FOLOrder\) formulas interpreted over \((\nat,<)\). 
Hence, \(\FOLOrder\) subsumes LTL.
The only question remaining is whether LTL subsumes \(\FOLOrder\) interpreted over \((\nat,<)\).
%
Kamp proved in \cite{Kamp1968} that LTL with both past and future temporal operators is {equivalent to} 
\(\FOLOrder\) over Dedekind
complete orders.
Later,  Gabbay et al.~\cite{gabbay1980temporal} proved that when considering only future operators, LTL is complete for \(\FOLOrder\) under the interpretation of \(<\) over the natural numbers.
For the remaining of the manuscript, we are only interested in the linear order over natural numbers.

\begin{theorem}[\cite{Kamp1968,gabbay1980temporal}]
\label{thm:ltl_FOOrder}
\(\LTL\) and \(\FOLOrder\) interpreted over \((\nat,<)\) are equally expressive.
\end{theorem}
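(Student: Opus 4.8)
The plan is to establish the two inclusions separately. One direction is essentially already done in the text: given a trace $\trace$ and its labeled linear order $(\nat,<,\mathcal{I}_\trace)$, an $\LTL$ formula translates into an $\FOLOrder$ formula with one free variable $\timeVar$ by rewriting $a$ as $\UnaryP_a(\timeVar)$, $\Next\psi$ using the $\FOLOrder$-definable successor ($\timeVar<\timeVar'\wedge\neg\exists\timeVar''(\timeVar<\timeVar''<\timeVar')$), and $\psi_1\Until\psi_2$ as a bounded existential over points $\geq\timeVar$; instantiating $\timeVar$ at the least element yields an $\FOLOrder$ sentence equivalent to the $\LTL$ formula. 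So all the content is in showing that $\LTL$ subsumes $\FOLOrder$ over $(\nat,<)$, and the plan is to route this through $\LTL$ augmented with the past modalities $\mathbf{Y}$ (``yesterday'') and $\mathbf{S}$ (``since''), which I call $\LTL{+}\mathrm{Past}$, in two steps.

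Step 1 (expressive completeness of $\LTL{+}\mathrm{Past}$ for $\FOLOrder$). Show that every $\FOLOrder$ formula with at most one free variable $\timeVar$ is equivalent, over $(\nat,<)$ with $\timeVar$ read as the point of evaluation, to an $\LTL{+}\mathrm{Past}$ formula. This proceeds by induction on quantifier rank: atoms are trivial ($\UnaryP_a(\timeVar)\mapsto a$, $\timeVar=\timeVar$ is a tautology, $\timeVar<\timeVar$ is unsatisfiable), Boolean cases are immediate, and for $\exists\timeVar'\,\psi(\timeVar,\timeVar')$ one first puts the body into a normal form that resolves the order relation between $\timeVar$ and $\timeVar'$ into the three cases $\timeVar'<\timeVar$, $\timeVar'=\timeVar$, $\timeVar'>\timeVar$, and then translates these three cases using a nesting of $\mathbf{S}$ over the inductively obtained subformula, a substitution, and a nesting of $\Until$, respectively. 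The reduction of the general (several free variables) case to this one-free-variable statement is the standard relativization argument of Kamp and of Gabbay--Pnueli--Shelah--Stavi, which relies on a quantifier-rank-preserving normal form for formulas over linear orders.

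Step 2 (separation). Prove that over $(\nat,<)$ every $\LTL{+}\mathrm{Past}$ formula is equivalent to a Boolean combination of formulas that are each \emph{pure past} (atoms, Booleans, $\mathbf{Y}$, $\mathbf{S}$ only), \emph{pure present} (a Boolean combination of atoms), or \emph{pure future} (atoms, Booleans, $\Next$, $\Until$ only). This is Gabbay's separation theorem and is proved by a terminating syntactic rewriting that repeatedly pushes a past operator outward through a future operator and vice versa, by case analysis on a split point. Granting Steps 1 and 2, the theorem follows: expressiveness over $(\nat,<)$ is measured at the initial point $0$, where $\mathbf{Y}\psi$ is false and hence every pure-past formula is equivalent to a constant, so a Boolean combination as in Step 2 collapses at $0$ to a pure-future, i.e.\ ordinary $\LTL$, formula; chaining this with Step 1 and the easy direction gives the equivalence.

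I expect Step 2 to be the main obstacle: the separation theorem needs a careful terminating rewrite system with many cases, and it is precisely the ingredient that is special to Dedekind-complete orders like $(\nat,<)$ (it fails over general linear orders, which is why Kamp's original statement needed past operators). An alternative that sidesteps separation is a direct Ehrenfeucht--Fra\"iss\'e/composition argument: show by induction on $n$ that the rank-$n$ $\FOLOrder$-type of a word with a distinguished position is $\LTL$-definable, using a Feferman--Vaught-style composition lemma for linear orders together with the ability of $\LTL$ to assert a split into a prefix of a prescribed finite type followed by a suffix of a prescribed type; there the composition lemma and the handling of the infinite suffix carry the comparable technical weight.
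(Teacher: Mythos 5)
This is a theorem the paper imports from the literature (Kamp~1968 and Gabbay--Pnueli--Shelah--Stavi~1980) and does not prove; there is no in-paper argument to compare against. Your outline is a faithful reconstruction of the standard proof from those sources: the easy translation of $\LTL$ into $\FOLOrder$, Kamp-style expressive completeness of $\LTL$ with past for one-free-variable $\FOLOrder$ formulas, and Gabbay's separation theorem to discharge the past operators at the initial point. You also correctly locate the technical weight (separation, or alternatively an Ehrenfeucht--Fra\"iss\'e/composition argument) and the role of Dedekind completeness. One small imprecision: if ``pure past'' formulas are allowed to contain atoms, they do not collapse to constants at position $0$ but to Boolean combinations of atoms, i.e.\ pure-present formulas; this is harmless since those are already $\LTL$ formulas, but you should either exclude atoms from the pure-past class (making them depend only on the strict past) or phrase the collapse accordingly.
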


\section{Hypertrace Logic}
\label{sec:logic}
In this section, we extend hypertrace logic, introduced in \cite{bartocci2022flavors}, to support \emph{quantification over unconstrained trace variables}; that is, variables ranging over the set of all possible traces.

\vspace{-2mm}
\subsection{Syntax and Semantics}
\vspace{-2mm}
Hypertrace logic \cite{bartocci2022flavors}, denoted  \(\TFOL\), extends $\FOLOrder$ with a time sort \(\timeS\) and a trace sort \(\traceS\).
While \(\FOLOrder\) allows only unary uninterpreted predicates\footnote{The binary predicates for equality, \(=\), and the linear order, \(<\), are interpreted by \((\nat, <)\).}, in \(\logic\) we allow binary uninterpreted predicates defined over pairs of a trace and a time variable.
In \cite{bartocci2022flavors}, hypertrace logic is defined only over \emph{constrained trace variables}: trace variables ranging over a fixed set of traces.
Here we lift the implicit constraints on trace variables: we quantify instead over the set of all possible traces while enabling trace variables to be constrained by a unary predicate.
Formally, in this work, hypertrace formulas \(\varphi\) are defined by the grammar:
\begin{align}
	\varphi &::= \exists \traceVar\ \varphi \ |\  \constrQ{\exists}{\traceVar}\ \varphi \ |\  \exists \timeVar\ \varphi \ |\ \neg \varphi \ |\ \varphi \vee \varphi  \ | \  \timeVar < \timeVar \ |\ \timeVar = \timeVar  \ |\  \BinaryP(\traceVar,\timeVar)\label{def:hypertrace:grammar:quant}
\end{align}
where \(\traceVar\) is a trace variable from a set \(\Var_{\traceS}\), 
\(\timeVar\) is a time variable from a set \(\Var_{\nat}\) disjoint from \(\Var_{\traceS}\),
\(\TrPred\) is a unary predicate over trace variables
and
\(\BinaryP\) is a binary predicate over pairs of trace and time variables from a finite set \(\mathbb{X}\).
We typically use lowercase letters for predicates in \(\mathbb{X}\) to reflect their correspondence with propositional variables in traces.
Without loss of generality, we assume that all variables are quantified only once.
We refer to \(\constrQ{\exists}{\traceVar}\) as a \emph{(existential) constrained trace quantifier} and we may refer to \(\traceVar\) as a \emph{constrained trace variable}.
We introduce the usual abbreviations:
\(\forall x\  \varphi \Def \neg (\exists x\  \neg \varphi)\), where \(x \mathop{\in}\{i, \traceVar\}\);
\(\constrQ{\forall}{\traceVar}  \varphi \Def \neg (\constrQ{\exists}{\traceVar}\ \neg \varphi)\); and
\(\varphi \wedge \varphi' \Def \neg (\neg \varphi \vee \neg \varphi')\).
Finally, constrained trace quantifiers are abbreviations for the following first-order formulas:
\(\constrQ{\exists}{\traceVar}\  \varphi \Def \exists \traceVar\  (\TrPred(\traceVar) \wedge \varphi)\) and \(\constrQ{\forall}{\traceVar}\  \varphi \Def \forall \traceVar\  (\TrPred(\traceVar) \rightarrow \varphi)\).

\newcommand{\secr}{\mathrm{secret}}
\newcommand{\pub}{\mathrm{pub}}
\begin{example}
\label{ex:hl}
We define independence between a secret input and a public output as:
\(\varphi\Def 
{\constrQ{\forall}{\traceVar}}\,
{\constrQ{\forall}{\traceVar'}}\, 
{\constrQ{\exists}{\traceVar_{\exists}}}\  \forall i\   
((\secr(\traceVar,i)\leftrightarrow\secr(\traceVar_{\exists},i)) \wedge
(\pub(\traceVar',i)\leftrightarrow\pub(\traceVar_{\exists},i))).
\)
%
We can mix constrained and unconstrained quantifiers to require that a system produces all combinations of visible public outputs with possible input secret values (i.e., a combination of input enableness with independence between secret inputs and public outputs):
%
\(\varphi_{\text{mix}}\Def 
\forall \traceVar\
{\constrQ{\forall}{\traceVar'}}\ 
{\constrQ{\exists}{\traceVar_{\exists}}}\  \forall i\   
((\secr(\traceVar,i)\leftrightarrow\secr(\traceVar_{\exists},i)) \wedge
(\pub(\traceVar',i)\leftrightarrow\pub(\traceVar_{\exists},i))).
\)
\end{example}

%
%
Our models of interest are sets of traces.
As \(\logic\) is a first-order logic, we start by defining how to translate a set of traces to a first-order structure.
%
%
%
We translate each propositional variable \(a\mathop{\in}\Prop\) to the binary predicate \(\BinaryP_a(\trace,k)\), asserting that ``variable \(a\) is true in trace \(\trace\mathop{\in}(\AllVals)^{\omega}\) at time \(k\mathop{\in}\nat\)''. 
Formally, given a set \(\setTraces\) of traces, we translate \(\setTraces\) to a structure \(\overline{\setTraces}\) with signature:
%
\((\nat, (\AllVals)^{\omega};\ \mathord<\mathop{\subseteq} \nat \times \nat, \TrPred\mathop{\subseteq} (\AllVals)^{\omega},  {(\BinaryP_a\mathop{\subseteq} (\AllVals)^{\omega} \mathop{\times} \nat)_{a \mathop{\in} \Prop}})\)
where \(\nat\) and  \((\AllVals)^{\omega}\) are the time and trace sort domains, respectively.
The predicate \(<\) is interpreted as the usual order on natural numbers,
the unary predicate {$\setTraces$} is true for all traces in the set set of traces used to generated the structure\footnote{ 
	We slightly abuse notation by letting 
	$\setTraces$
	denote both the set of traces generating the structure and the predicate determining which traces constrain the trace quantification.
	Note that, the set used to define the structure may be renamed, but the predicate 
	$\setTraces$ is true only for the traces in that set.
},
while for all variables \(a \mathop{\in} \Prop\):
\(
\BinaryP_a \mathop{=} \{(\trace, k) \ | \ a\mathop{\in}\trace[k]\}.
\)

We evaluate hypertrace formulas over pairs of assignments for traces and time:
\((\AssignT, \AssignN):
(\Var_{\traceS} \rightarrow (\AllVals)^{\omega}) \times (\Var_{\nat} \rightarrow \nat).\)
%
%
%
A set \(\setTraces\) of traces is a model of a hypertrace formula \(\varphi \mathop{\in} \TFOL\), denoted \(\setTraces \modelsTFOL \varphi\), iff \(\overline{\setTraces}\) models \(\varphi\) under the standard first-order semantics.
Formally, \(\setTraces \modelsTFOL \varphi\), iff there exists a pair of assignments \((\AssignT, \AssignN)\) such that
\((\overline{\setTraces},(\AssignT, \AssignN)) \models \varphi\), where \(\models\) is the standard first-order logic models relation.
In particular, trace quantifiers are interpreted as:
\begin{align*}
&(\overline{\setTraces},(\AssignT, \AssignN)) \modelsTFOL \exists \traceVar\ \varphi \tIff
\text{exists } \trace\mathop{\in} (\AllVals)^{\omega} \tSt (\overline{\setTraces},(\AssignT[\traceVar\mapsto\trace], \AssignN)) \modelsTFOL  \varphi\\
&(\overline{\setTraces},(\AssignT, \AssignN)) \modelsTFOL \constrQ{\exists}{\traceVar}\ \varphi \tIff
\text{exists } \trace\mathop{\in} \setTraces  \tSt (\overline{\setTraces},(\AssignT[\traceVar\mapsto\trace], \AssignN)) \modelsTFOL  \varphi.
\end{align*}
Hypertrace formulas define a set with all the set of traces it satisfies:
\(\generatedSet{\varphi} \mathop{=} \{\, \setTraces \ |\ \setTraces \modelsTFOL \varphi\}\).
%
%
%
%
From now on, we may refer to \(\BinaryP_a\) just as \(a\), and omit the subscript \(\allSetTraces\) in \(\modelsTFOL\),
when clear.

\begin{example}
\newcommand{\Sa}{\{a\}}
\newcommand{\Sna}{\{\ \}}
Looking back to our previous example, consider the 
sets of traces defined over the variables \(\secr\) and \(\pub\), where
with each valuation \(v\) is represented as a set:
\(\setTraces_{0}\mathop{=} \{\{\}^\omega\}\),
\(\setTraces_1\mathop{=} \{\{\secr, \pub \}^\omega\}\),  and
\(\setTraces_{0,1}\mathop{=} \{\{\}^\omega, \{\secr, \pub \}^\omega\}\del{\}}\).
For the fully constrained formula in {Example \ref{ex:hl}}, \(\setTraces_0\modelsTFOL\varphi\) and \(\setTraces_1\modelsTFOL\varphi\) because in each of these traces we only observe one of the possible values for \(\secr\). While, \(\setTraces_{0,1}\notmodelsTFOL\varphi\).
For the mix formula, \(\varphi_{\text{mix}}\) none of sets satisfy its requirements; that is,  \(\setTraces_0\notmodelsTFOL\varphi_{\text{mix}}\), \(\setTraces_1\notmodelsTFOL\varphi_{\text{mix}}\) and \(\setTraces_{0,1}\notmodelsTFOL\varphi_{\text{mix}}\).
%
%
%
\end{example}

\newcommand{\rewriteFlat}{\mathtt{flatten}}
\newcommand{\VCons}{\Var^c}

We close this section 
by defining the function \(\rewriteFlat\) that rewrites hypertrace formulas into an equisatisfiable formula, exploiting the independence between variable valuations in traces assigned to unconstrained trace variables.
For each trace variable in a set \(\Var\) and propositional variable, \(\traceVar\) and \(x\mathop{\in}\Prop\), \(\rewriteFlat\) introduces a new trace variable, \(\traceVar_x\), which is used exclusively in predicates involving the corresponding variables (e.g., \(x(\traceVar,i)\)).
For a given set of trace variables \(\Var\) and propositional variables \(\Prop\), we define \(\Var_{\Prop}\mathop{=}\{\traceVar_x\ |\ \traceVar \mathop{\in} \Var \tAnd x \mathop{\in} \Prop\}\).
\begin{align}
\rewriteFlat(\exists \traceVar\ \varphi, \{x_0, \ldots, x_n\},\VCons) = &
\exists \traceVar_{x_0} \ldots \exists \traceVar_{x_n} \rewriteFlat(\varphi,  \{x_0, \ldots, x_n\},\VCons\mathop{\cup}\{\traceVar\}) \nonumber\\
\rewriteFlat(\constrQ{\exists}{\traceVar}\ \varphi, \Prop,\VCons) = &
\constrQ{\exists}{\traceVar}\ \rewriteFlat(\varphi, \Prop,\VCons) \nonumber\\
\rewriteFlat(\exists \timeVar\  \varphi, \Prop,\VCons) = &
\exists \timeVar\ \rewriteFlat(\varphi, \Prop,\VCons) \nonumber\\
\rewriteFlat(x(\traceVar,i), \Prop, \VCons)= &
\begin{cases}
x(\traceVar_x,i) & \text{if } \traceVar\mathop{\in}\VCons\\
x(\traceVar,i) & \text{otherwise }
\end{cases}
\label{def:rewrite}
\\
\rewriteFlat(\neg \varphi, \Prop,\VCons) = &
\neg \rewriteFlat(\varphi, \Prop,\VCons)
\nonumber\\
\rewriteFlat(\varphi \vee \varphi', \Prop,\VCons) = &
\rewriteFlat(\varphi, \Prop,\VCons)\vee \rewriteFlat(\varphi', \Prop,\VCons)
\nonumber
\end{align}
We prove below that \(\rewriteFlat\) returns an equisatisfiable hypertrace formula.

\begin{lemma}
\label{lemma:rewrite:independent}
Let \(\varphi\) be a hypertrace formula over the set of propositional variables \(\Prop\) and \(\setTraces\) be a set of traces over the same variables.
Let \(\AssignN\) and \(\AssignT\) be trace and time assignments 
over the set of free variables in \(\varphi\).
%
Let \(\VCons\mathop{\subseteq}\freeV(\varphi)\) be a set of trace variables that are free in \(\varphi\).
For all trace assignments \(\AssignT'\) over trace variables in \(\VCons_{\Prop}\mathop{\cup}\freeV(\varphi)\) 
agreeing with  \(\AssignT\) in its assignments of propositional variables \(x\) for the trace assigned to \(\traceVar\) (i.e., for all \(\traceVar\mathop{\in}\VCons\) and \(x\mathop{\in}\Prop\),  \(\AssignT(\traceVar)\mathop{=}_{\{x\}} \AssignT'(\traceVar_x)\))
and otherwise having the same trace assignments of \(\AssignT\) (i.e., for all \(\traceVar\mathop{\notin}\VCons\), \(\AssignT(\traceVar)\mathop{=}\AssignT'(\traceVar)\)): 
\((\overline{\setTraces},(\AssignT, \AssignN))\modelsTFOL\varphi \tIff (\overline{\setTraces},(\AssignT', \AssignN)) \modelsTFOL\) \(\rewriteFlat(\varphi, \Prop,\VCons).\)
\end{lemma}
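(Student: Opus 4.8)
The plan is to proceed by structural induction on the hypertrace formula $\varphi$, strengthening the statement so that the induction hypothesis applies uniformly to subformulas. Concretely, I would phrase the inductive claim exactly as in the lemma: for every pair $(\AssignT,\AssignN)$ and every $\AssignT'$ that agrees with $\AssignT$ on propositional witnesses for $\VCons$-variables (i.e.\ $\AssignT(\traceVar)=_{\{x\}}\AssignT'(\traceVar_x)$ for $\traceVar\in\VCons$, $x\in\Prop$) and is otherwise identical to $\AssignT$, we have $(\overline{\setTraces},(\AssignT,\AssignN))\models\varphi$ iff $(\overline{\setTraces},(\AssignT',\AssignN))\models\rewriteFlat(\varphi,\Prop,\VCons)$. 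The key point making this an honest induction is that $\VCons$ grows when we pass under an unconstrained quantifier $\exists\traceVar$, so the statement must be quantified over all admissible $\VCons$, which it already is.

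First I would handle the atomic cases. For $\varphi = x(\traceVar,i)$: if $\traceVar\notin\VCons$, then $\rewriteFlat$ leaves the atom unchanged and $\AssignT'(\traceVar)=\AssignT(\traceVar)$, so both sides evaluate identically; if $\traceVar\in\VCons$, then $\rewriteFlat$ rewrites to $x(\traceVar_x,i)$, and by the hypothesis $\AssignT(\traceVar)=_{\{x\}}\AssignT'(\traceVar_x)$, meaning $x\in\AssignT(\traceVar)[k]$ iff $x\in\AssignT'(\traceVar_x)[k]$ for $k=\AssignN(i)$, which is exactly $(\trace,k)\in\BinaryP_x$ agreement — so both sides agree. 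For $\timeVar<\timeVar'$ and $\timeVar=\timeVar'$ there are no trace variables involved and $\AssignN$ is untouched, so this is immediate. The boolean cases $\neg$ and $\vee$ are routine: $\rewriteFlat$ commutes with them and the same $\AssignT'$ works for the subformulas, so we apply the IH directly.

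The quantifier cases are where the real content lies. For $\exists\timeVar\,\psi$ and $\constrQ{\exists}{\traceVar}\,\psi$, the set $\VCons$ is unchanged by $\rewriteFlat$; for the time quantifier we pick the same witness $k\in\nat$ on both sides and note that extending $\AssignN$ to $\AssignN[\timeVar\mapsto k]$ preserves the agreement condition, so the IH applies; for the constrained quantifier we pick the same $\trace\in\setTraces$ on both sides and observe that extending to $\AssignT[\traceVar\mapsto\trace]$ and $\AssignT'[\traceVar\mapsto\trace]$ keeps them agreeing (since $\traceVar\notin\VCons$, the new variable falls in the ``otherwise'' clause and is assigned identically). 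The main obstacle — and the step I would write most carefully — is the \emph{unconstrained} existential $\exists\traceVar\,\psi$, where $\rewriteFlat$ introduces $|\Prop|$ fresh variables $\traceVar_{x_0},\dots,\traceVar_{x_n}$ and recurses with $\VCons\cup\{\traceVar\}$. For the forward direction, given a witness $\trace\in(\AllVals)^\omega$ for $\traceVar$ on the left, I must produce witnesses $\trace_{x_0},\dots,\trace_{x_n}$ for the fresh variables on the right: the natural choice is to take each $\trace_{x_j}$ to be \emph{any} trace agreeing with $\trace$ on $\{x_j\}$ (for instance $\trace$ itself), which makes $\AssignT'[\traceVar_{x_0}\mapsto\trace_{x_0}]\cdots$ satisfy the agreement condition for the enlarged set $\VCons\cup\{\traceVar\}$, so the IH gives the right-hand satisfaction. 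For the backward direction, given witnesses $\trace_{x_0},\dots,\trace_{x_n}$ on the right, I define a single trace $\trace$ by ``stitching'': at each position $k$, let $\trace[k]\cap\{x_j\} = \trace_{x_j}[k]\cap\{x_j\}$ for each $j$ — this is well-defined precisely because the $x_j$ are distinct, so the constraints never conflict, and it is exactly this independence of the propositional components that the lemma is exploiting. Then $\trace$ agrees with each $\trace_{x_j}$ on $\{x_j\}$, the enlarged agreement condition holds, and the IH yields left-hand satisfaction. I would close by noting the lemma's conclusion follows by taking $\VCons=\emptyset$ at the top level (or as stated), since then $\rewriteFlat$ with an empty $\VCons$ behaves as the identity until the first unconstrained quantifier, and equisatisfiability of $\varphi$ and $\rewriteFlat(\varphi,\Prop,\emptyset)$ drops out.
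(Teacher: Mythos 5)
Your proof is correct and follows essentially the same route as the paper's: structural induction on $\varphi$ with the statement quantified over all admissible $\VCons$, identical handling of the atomic, boolean, time-quantifier and constrained-quantifier cases, and the same treatment of the unconstrained existential via the enlarged set $\VCons\cup\{\traceVar\}$. If anything, your explicit ``stitching'' construction for the backward direction of the $\exists\traceVar$ case spells out the independence argument that the paper compresses into the remark that the converse is ``analogous because the relation between $\AssignT$ and $\AssignT'$ is equational.''
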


\newcommand{\modelslhs}[2]{(\overline{\setTraces},(#1, #2))}
\newcommand{\xlhs}[1]{\modelslhs{\AssignT}{\AssignN}\modelsTFOL#1}
\newcommand{\xrhs}[1] {\modelslhs{\AssignT'}{\AssignN}\modelsTFOL\rewriteFlat(#1, \Prop,\VCons)}

\begin{proof}

Let \(\Prop\mathop{=}\{x_0, \ldots, x_n\}\).
We prove the statement by structural induction on the formula.

{
}
Let us prove the \(\Rightarrow\) direction. In the base case ($\varphi = x(\pi, i)$),
$\xlhs{x(\pi, i)}$ iff $(\AssignT(\pi), \AssignN(i)) \mathop{\in} X_x$.
By definition of $X_x$ (given by $\overline{\setTraces}$), it holds that $(\AssignT(\pi), \AssignN(i)) \mathop{\in} X_x \Leftrightarrow
\AssignT(\pi)[\AssignN(i)](x) = {\mathit{true}}$.
If \(\traceVar\mathop{\in}\VCons\), this is equivalent to 
$\AssignT'(\pi_x)[\AssignN(i)](x) = {\mathit{true}}$ (by definition of $\AssignT'$).
Otherwise, it is  equivalent to 
$\AssignT'(\pi)[\AssignN(i)](x) = {\mathit{true}}$ (also by definition of $\AssignT'$).
Therefore, $\xrhs{x(\pi, i)}$.
If $\varphi$ is $i = j$ or $i < j$, the implication holds as these formulas depend only on $\AssignN$ which is not affected by changes from \(\rewriteFlat\).

In the induction step, cases for $\varphi_1 \lor \varphi_2$ and $\neg \varphi$ follow from the definition of $\modelsTFOL$ and induction hypothesis.
Assume the formula $\exists i\ \varphi$.
Then, $\xlhs{\exists i\ \varphi} \Leftrightarrow \exists c\mathop{\in}\nat$ s.t.
$\modelslhs{\AssignT}{\AssignN[i\mapsto c]} \modelsTFOL \varphi$.
From IH, $\modelslhs{\AssignT'}{\AssignN[i\mapsto c]} \modelsTFOL \rewriteFlat(\varphi, \Prop, \VCons)$
which implies  $\xrhs{\exists i\ \varphi}$.
Analogously for $\constrQ{\exists}{\pi}$. 

Finally, assume formula $\exists \pi\ \varphi$ and a set of trace variables \(\VCons\mathop{\subseteq}\freeV(\exists \traceVar\ \varphi)\).
This formula is satisfied iff there exists $t$ s.t. $\modelslhs{\AssignT[\pi\mapsto t]}{\AssignN} \modelsTFOL \varphi$.
We observe that \(\traceVar\mathop{\in}\freeV(\varphi)\), as we assume there are no double binding of variables.
Then, from IH, for \({\VCons}'\mathop{=}\VCons\mathop{\cup}\{\traceVar\}\),
it follows that
$\modelslhs{\widetilde{\AssignT}}{\AssignN} \modelsTFOL \rewriteFlat(\varphi, \Prop, {\VCons}')$
where $\widetilde{\AssignT}$ is such that 
for all \(\traceVar'\mathop{\in}\VCons\cup\{\traceVar\}\) and \(x\mathop{\in}\Prop\),  \(\AssignT[\pi\mapsto t](\traceVar')\mathop{=}_{\{x\}} \widetilde{\AssignT}(\traceVar_x)\) and otherwise the trace assignments are the same as \(\AssignT\).
Observe that $\modelslhs{\AssignT'[\pi_{x_0}\mapsto \widetilde{\AssignT}(\pi_{x_0}), ..., \pi_{x_n}\mapsto \widetilde{\AssignT}(\pi_{x_n})]}{\AssignN} \modelsTFOL \rewriteFlat(\varphi, \Prop, \Var\mathop{\cup}\{\traceVar\})$.
From the definition of $\modelsTFOL$:
$\modelslhs{\AssignT'}{\AssignN} \modelsTFOL$ $\exists \pi_{x_0}...\exists\pi_{x_n}\rewriteFlat(\varphi, \Prop, \VCons)$.


The proof for the $\Leftarrow$ direction is analogous to the $\Rightarrow$ direction, because the relation between $\AssignT$ and $\AssignT'$ is equational.
\end{proof}

It follows from the above lemma that all hypertrace formulas are equisatisfiable to their rewrite with \(\rewriteFlat\).

\begin{corollary}
Let \(\varphi\) be a closed hypertrace formula over the set of propositional variables \(\Prop\):
\(\generatedSet{\varphi}\mathop{\neq}\emptyset \tIff \generatedSet{\rewriteFlat(\varphi, \Prop,\emptyset)}\mathop{\neq}\emptyset\).
\end{corollary}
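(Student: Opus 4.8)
The corollary is a direct consequence of Lemma~\ref{lemma:rewrite:independent}, and the plan is simply to instantiate that lemma at the level of closed formulas. The main point to check is that, for a closed formula, the ``interface condition'' linking $\AssignT$ and $\AssignT'$ in the lemma becomes vacuous, so that satisfiability of $\varphi$ and of $\rewriteFlat(\varphi,\Prop,\emptyset)$ can be compared directly.

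First I would unfold the definitions. Fix a set of traces $\setTraces$ over $\Prop$. Since $\varphi$ is closed, $\freeV(\varphi)=\emptyset$, and we take $\VCons=\emptyset$; then $\VCons_{\Prop}=\emptyset$ as well, so the only trace assignment over $\VCons_{\Prop}\cup\freeV(\varphi)$ is the empty assignment, and likewise for time. Thus $\AssignT=\AssignT'=\emptyAssign$ (up to the irrelevant domain) and all the side conditions ``for all $\traceVar\in\VCons$\ldots'' and ``for all $\traceVar\notin\VCons$, $\AssignT(\traceVar)=\AssignT'(\traceVar)$'' hold trivially. Lemma~\ref{lemma:rewrite:independent} then yields, for every $\setTraces$, that $\setTraces\modelsTFOL\varphi$ iff $\setTraces\modelsTFOL\rewriteFlat(\varphi,\Prop,\emptyset)$.

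Next I would lift this equivalence to nonemptiness of the generated sets. By the definition $\generatedSet{\psi}=\{\setTraces \mid \setTraces\modelsTFOL\psi\}$, the per-model equivalence above says $\generatedSet{\varphi}=\generatedSet{\rewriteFlat(\varphi,\Prop,\emptyset)}$ as sets of trace properties; in particular one is empty iff the other is. This gives $\generatedSet{\varphi}\neq\emptyset$ iff $\generatedSet{\rewriteFlat(\varphi,\Prop,\emptyset)}\neq\emptyset$, which is exactly the statement.

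I do not anticipate a real obstacle here; the only thing that needs a word of care is that $\rewriteFlat(\varphi,\Prop,\emptyset)$ is again a well-formed closed hypertrace formula over $\Prop$ (it introduces fresh variables $\traceVar_x$ but binds each of them with an $\exists$, and it does not alter the time-quantifier or boolean structure), so that talking about $\generatedSet{\rewriteFlat(\varphi,\Prop,\emptyset)}$ is meaningful; this is immediate from the definition of $\rewriteFlat$ in~\eqref{def:rewrite}. Hence the corollary follows.
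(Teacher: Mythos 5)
Your proof is correct and matches the paper's intent exactly: the corollary is stated as an immediate consequence of Lemma~\ref{lemma:rewrite:independent}, and instantiating it with $\VCons=\emptyset$ and empty assignments (so the side conditions are vacuous for a closed formula) yields the per-model equivalence and hence equality of the generated sets. The remark that $\rewriteFlat(\varphi,\Prop,\emptyset)$ is again a closed hypertrace formula is a reasonable extra sanity check that the paper leaves implicit.
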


\subsection{Satisfiability of Hypertrace Formulas}

We are interested in the decidability of the satisfiability problem of hypertrace logic.

\begin{tcolorbox}[colback = gray!3, colbacktitle=gray!90, title={\textbf{Satisfiability of Hypertrace Formulas}}]
Let \(\varphi\) be a hypertrace formula over trace variables \(\VarTrace\), time variables \(\VarTime\) and binary predicates defined from \(\Prop\) (i.e., from the set \(\{\BinaryP_a \,|\, a\mathop{\in}\Prop\}\)).

Is there a set of traces \(\setTraces\mathop{\subseteq}(\AllVals)^\omega\) that is a model of \(\varphi\); that is, \(\generatedSet{\varphi}\mathop{\neq}\emptyset\)?
\end{tcolorbox}

We present in Table \ref{tab:results} a summary of decidability results proved in this work for the satisfiability problem of hypertrace logic.
We describe fragments by specifying patterns on its quantifiers:
 we denote \(\Quant\mathop{\in}\{\forall, \exists\}\), while  \(\Quant_{\timeS}\) denotes time quantifiers, and \(\Quant_{\traceS}\) and \(\constrQ{}{\Quant_{\traceS}}\) denotes unconstrained and constrained quantifiers, respectively.
We  then combine these symbols to define patterns as regular expressions.

{\centering
\setlength{\tabcolsep}{10pt}
\begin{table}[h!]
\centering
\begin{tabular}{ll}
\hline
\rowcolor{gray!20}
\multicolumn{2}{c}{\textbf{Trace-prefixed}} \\
\hline
\(\exists_{\traceS}^*
(\constrQ{\exists_{\traceS}}{})^*
(\constrQ{\forall_{\traceS}}{})^*
\Quant_{\traceS}^* 
\Quant_{\timeS}^*\) & 
Decidable \cite{Hahn21} [Cor. \ref{thm:trace_prefic_dec}]
\\
\((\constrQ{\forall_{\traceS}}{})^2 \constrQ{\exists_{\traceS}}{}\ \Quant_{\timeS}^+ \) & 
Undecidable  \cite{satHyperLTL16} [Prop. \ref{thm:undec_traceprefix}]
\\
\hline
\rowcolor{gray!20}
\multicolumn{2}{c}{\textbf{Time-prefixed}} \\
\hline
\(\exists_{\timeS}^*
\QExists_{\traceS}^*
(\constrQ{\exists_{\traceS}}{})^*
(\constrQ{\forall_{\traceS}}{})^*\Quant_{\traceS}^*\) & Decidable [Cor. \ref{thm:time_pre_dec}]\\

\(\exists_{\timeS}
\forall_{\timeS}
\exists_{\timeS}^2
\forall_{\timeS}
\constrQ{\forall_{\traceS}}{}
(\constrQ{\exists_{\traceS}}{})^2
\exists_{\traceS}\) & 
Undecidable [Thm.~\ref{thm:time_prefix_undec}]\\
\bottomrule
\end{tabular}
\caption{Summary of results on the decidability of hypertrace formulas satisfiability.}
\label{tab:results}
\end{table}
\vspace{-5mm}
}

\vspace{-2mm}
\section{Unconstrained Hypertrace Logic}
\label{sec:unconstr}
\vspace{-2mm}

We look into the fragment of hypertrace formulas without constrained quantifiers (i.e., no subformulas with shape \(\constrQ{\Quant}{\traceVar}\ \varphi\)), which we refer to as \emph{unconstrained hypertrace logic}.
We prove that unconstrained hypertrace logic is expressively equivalent to monadic second-order logic of order of one successor (\SOneS).
 The translation follows naturally from Lemma \ref{lemma:rewrite:independent}.
%
Specifically, we show that quantification over variables in each unconstrained trace is equivalent to second-order quantification over the sets of time points at which the corresponding propositional variable holds in that trace.

\vspace{-2mm}
\subsection{Equivalence to S1S}
\label{sec:uncons_s1s}
\vspace{-2mm}

\medskip
To establish the equivalence between unconstrained hypertrace logic and \(\SOneS\), we define a translation between unconstrained hypertrace formulas and models to their counterpart in \(\SOneS\), and vice-versa.
The translation from unconstrained hypertrace formulas to \(\SOneS\) rewrites the formulas using \(\rewriteFlat\), while reinterpreting each \(\traceVar_x\) as a second-order variable.

To translate from unconstrained formulas to \(\SOneS\) formulas, we 
define the substitution 
$\sigma = \{x(\traceVar_x, i) \mapsto \traceVar_x(i) \mid x\in\Prop,\pi\in\Var_{\traceS} , i\in\Var_{\nat}\} $
and the rewriting function
%
\(\toMSO(\varphi, \Prop)=
    \rewriteFlat(\varphi, \Prop,\Var_{\traceS}\mathop{\cap}\freeV(\varphi))[\sigma]\) where \(\varphi\)
    is an unconstrained trace formula.
%
{This translation does not work for formulas with constrained trace quantifiers because}, in general, we cannot assume $T$ to be $\SOneS$-definable.
The next step is to define a translation for trace assignments.
A {trace} assignment \(\AssignT\As
\Var_{\traceS} \rightarrow (\AllVals)^{\omega}\) is translated to an assignment over second-order variables where each variable is mapped to its support set. Formally,
\(\toTime(\AssignT)\As
\Var_{\Prop} \rightarrow \mathbf{2}^{\nat}\) where
\(
\toTime(\AssignT)(\traceVar_x) = \{i \ |\ x \mathop{\in} \AssignT(\traceVar)[i]\del{ = 1}\}.
\)

For the translation from \(\SOneS\) to unconstrained hypertrace formulas, each second-order variable \(X\) becomes the trace variable \(\trace_X\). Additionally, we use the standard translation of \(\Succ\) using \(\leq\).
%
\begin{align}
\label{eq:toHyper}
\toHyper(\exists X\ \varphi')=& \exists \traceVar_X\ \toHyper(\varphi') & \toHyper(X(i))=& X(\traceVar_X,i)\nonumber\\
\toHyper(\exists i\ \varphi')=& \exists i\ \toHyper(\varphi')& \toHyper(i = j)=& (i = j)\nonumber\\
\toHyper(\neg \varphi')=& \neg \toHyper(\varphi')\\
\toHyper(\varphi_1 \lor \varphi_2)=& \toHyper(\varphi_1) \lor \toHyper(\varphi_2)\nonumber\\
\toHyper(\Succ(i,i'))=&  {i<i'} \wedge (\forall j\ (i < j \rightarrow i' \leq j))\nonumber
\end{align}
We translate second-order assignments, \(\AssignSecond\), to trace assignments as:
\(\toBool(\AssignSecond)\As
\Var_{\traceS} \rightarrow (\AllVals)^{\omega}\) where
for all \(i\mathop{\in}\nat\) and \(X\mathop{\in}\Prop\),
\(X\mathop{\in}(\toBool(\AssignSecond)(\traceVar_X))[i]\) iff \(i\mathop{\in}\AssignSecond(X).
\)

\begin{theorem}
\label{thm:equiv:unconstrained-s1s}
Let \(\setTraces\) be a set of traces over \(\Prop\).
For all unconstrained hypertrace formulas \(\varphi\) over \(\Prop\):
\((\overline{\setTraces},(\AssignT, \AssignN)) \modelsTFOL \varphi\) iff 
\((\AssignN, \toTime(\AssignT)) \modelsSOneS \toMSO(\varphi, \Prop)\).
For all \(\SOneS\) formulas: 
\((\AssignFirst, \AssignSecond) \modelsSOneS \varphi\) iff
\((\overline{\setTraces},(\toBool(\AssignSecond), \AssignFirst)) \modelsTFOL \toHyper(\varphi)\).
\end{theorem}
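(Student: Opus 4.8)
The plan is to prove the two biconditionals separately by structural induction, leaning on Lemma~\ref{lemma:rewrite:independent} to reduce the first direction to bookkeeping about the substitution \(\sigma\).

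\medskip
\noindent\textbf{From hypertrace to \(\SOneS\).}
First I would observe that \(\toMSO(\varphi,\Prop)\) is obtained in two stages: applying \(\rewriteFlat\) with \(\VCons = \Var_{\traceS}\cap\freeV(\varphi)\), and then applying the syntactic substitution \(\sigma\) which replaces each atom \(x(\traceVar_x,i)\) by \(\traceVar_x(i)\). By Lemma~\ref{lemma:rewrite:independent} (taking \(\VCons = \Var_{\traceS}\cap\freeV(\varphi)\), which is legitimate since \(\varphi\) is unconstrained so every free trace variable is just a plain trace variable), for any trace assignment \(\AssignT'\) that agrees with \(\AssignT\) in the sense that \(\AssignT(\traceVar) =_{\{x\}} \AssignT'(\traceVar_x)\) for all \(\traceVar\in\VCons\) and \(x\in\Prop\), we have \((\overline{\setTraces},(\AssignT,\AssignN))\models\varphi\) iff \((\overline{\setTraces},(\AssignT',\AssignN))\models\rewriteFlat(\varphi,\Prop,\VCons)\). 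The canonical such \(\AssignT'\) has \(\AssignT'(\traceVar_x)\) equal to any trace whose \(x\)-coordinate is the support set \(\toTime(\AssignT)(\traceVar_x) = \{i \mid x\in\AssignT(\traceVar)[i]\}\); the other coordinates are irrelevant because after \(\rewriteFlat\) the variable \(\traceVar_x\) occurs only in \(x(\traceVar_x,\cdot)\). So it remains to show, by a second structural induction, that for \(\psi = \rewriteFlat(\varphi,\Prop,\VCons)\),
\[
(\overline{\setTraces},(\AssignT',\AssignN))\models\psi
\quad\text{iff}\quad
(\AssignN,\toTime(\AssignT))\modelsSOneS \psi[\sigma].
\]
The only interesting atomic case is \(x(\traceVar_x,i)\): the left side says \((\AssignT'(\traceVar_x),\AssignN(i))\in X_x\), i.e.\ \(x\in\AssignT'(\traceVar_x)[\AssignN(i)]\), i.e.\ \(\AssignN(i)\in\toTime(\AssignT)(\traceVar_x)\), which is exactly \((\AssignN,\toTime(\AssignT))\modelsSOneS \traceVar_x(i) = (x(\traceVar_x,i))[\sigma]\), reading \(\traceVar_x\) as a second-order variable. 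The atoms \(i=j\) and \(i<j\) are identical on both sides (with \(<\) expanded via the abbreviation in \(\SOneS\)); the Boolean and time-quantifier cases are immediate from the induction hypothesis; and the trace-quantifier case \(\exists\traceVar\ \varphi_0\) matches the clause \(\rewriteFlat(\exists\traceVar\ \varphi_0,\dots) = \exists\traceVar_{x_0}\cdots\exists\traceVar_{x_n}\,\rewriteFlat(\varphi_0,\dots)\) on the left with the block of second-order existentials \(\exists\traceVar_{x_0}\cdots\exists\traceVar_{x_n}\) on the right, since choosing a trace for \(\traceVar\) is the same as independently choosing the \(n+1\) support sets \(\toTime(\cdot)(\traceVar_{x_j})\subseteq\nat\).

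\medskip
\noindent\textbf{From \(\SOneS\) to hypertrace.}
Here I would do a direct structural induction on the \(\SOneS\) formula \(\varphi\), proving \((\AssignFirst,\AssignSecond)\modelsSOneS\varphi\) iff \((\overline{\setTraces},(\toBool(\AssignSecond),\AssignFirst))\models\toHyper(\varphi)\). The atomic case \(X(i)\): the left side is \(\AssignFirst(i)\in\AssignSecond(X)\); by definition of \(\toBool\), this holds iff \(X\in(\toBool(\AssignSecond)(\traceVar_X))[\AssignFirst(i)]\), which is exactly \((\overline{\setTraces},(\toBool(\AssignSecond),\AssignFirst))\models X(\traceVar_X,i)\), the translation \(\toHyper(X(i))\). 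The case \(i=j\) is trivial since \(\toHyper\) fixes it and time assignments coincide. For \(\Succ(i,i')\), I would verify that over \((\nat,<)\) the formula \(i<i'\wedge\forall j\,(i<j\to i'\le j)\) holds iff \(\AssignFirst(i')=\AssignFirst(i)+1\), which is the standard first-order definition of successor from the order — this is where I use that the order sort of hypertrace logic is interpreted exactly as \((\nat,<)\). Boolean connectives, \(\exists i\), and \(\exists X\) (which maps to \(\exists\traceVar_X\): choosing \(S\subseteq\nat\) corresponds to choosing a trace whose \(X\)-coordinate is \(S\), and \(\toBool(\AssignSecond[X\mapsto S]) = \toBool(\AssignSecond)[\traceVar_X\mapsto t]\) for any such \(t\)) all follow from the induction hypothesis. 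One subtlety: \(\toHyper(\exists X\ \varphi')\) ranges over \emph{all} traces, but \(\varphi'\) after translation only mentions \(\traceVar_X\) through the predicate \(X(\traceVar_X,\cdot)\), so the non-\(X\) coordinates of the chosen trace do not matter; I would note this explicitly, using that \(\SOneS\) variables are not ``renamed'' so no clash arises with the assumed single-binding convention.

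\medskip
\noindent\textbf{Main obstacle.}
The routine parts are genuinely routine once the invariants are set up, so the real care is in the first direction: getting the quantifier bridge right between ``pick one trace \(\AssignT(\traceVar)\)'' and ``pick \(n+1\) support sets,'' and making sure the hypothesis of Lemma~\ref{lemma:rewrite:independent} is exactly met by the canonical \(\AssignT'\) derived from \(\toTime(\AssignT)\) — in particular that for an \emph{open} formula the free trace variables are handled by putting them into \(\VCons\) and that this is consistent with how \(\toTime\) is defined on \(\Var_{\Prop}\). I expect no conceptual difficulty beyond carefully threading these definitional agreements through the induction, and would present the argument as ``immediate from Lemma~\ref{lemma:rewrite:independent} and a straightforward induction on formula structure, the only non-trivial atomic cases being \(x(\traceVar_x,i)\) and \(\Succ(i,i')\).''
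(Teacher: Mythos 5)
Your proposal is correct and follows essentially the same route as the paper: structural induction in both directions, with Lemma~\ref{lemma:rewrite:independent} carrying the first direction and the atoms \(x(\traceVar_x,i)\), \(X(i)\) and \(\Succ(i,i')\) as the only substantive cases. You are in fact more explicit than the paper's own proof about the two-stage nature of \(\toMSO\) (flattening then substitution), the quantifier bridge between one trace and \(n{+}1\) support sets, and the irrelevance of non-\(X\) coordinates in the \(\exists X\) case — all of which the paper leaves implicit.
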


\begin{proof}
We start by proving, by structural induction, the translation from unconstrained hypertrace formulas and trace assignments to \(\SOneS\) formulas and second-order assignments.
For the base case, we need to prove \((\overline{\setTraces},(\AssignT, \AssignN)) \modelsTFOL x(\traceVar,i)\) iff 
\((\AssignN, \toTime(\AssignT)) \modelsSOneS \traceVar_x(i)\).
Given \(k\mathop{=}\AssignN(i)\),
we observe that, by definition,  \(x\mathop{\in} \AssignT(\traceVar)[k]\) iff \(k\mathop{\in}(\toTime(\AssignT))(\traceVar_x)\).
Hence, the base case holds.
For the induction cases, we use the induction hypothesis and
Lemma~\ref{lemma:rewrite:independent}.

We consider now the translation from \(\SOneS\) formulas and second-order assignments to unconstrained hypertrace formulas and trace assignments.
We have two base cases. We start by proving that 
\((\AssignFirst, \AssignSecond) \modelsSOneS X(i)\) iff
\((\overline{\setTraces},(\toBool(\AssignSecond),\AssignFirst) \modelsTFOL X(\traceVar_X,i)\).
This holds, because for \(k\mathop{=}\AssignN(i)\), 
\(X\mathop{\in}(\toBool(\AssignSecond)(\traceVar_X))[k]\) iff \(k\mathop{\in}\AssignSecond(X).\)
The second base-case is \((\AssignFirst, \AssignSecond) \modelsSOneS \Succ(i\diff{.}{,}i')\) iff
\((\overline{\setTraces},(\toBool(\AssignSecond),\AssignFirst)) \modelsTFOL {i<i'} \wedge (\forall j\ (i < j \rightarrow i' \leq j))\), holding by definition of \(\leq\).
For the induction cases, we use the induction hypothesis and
Lemma~\ref{lemma:rewrite:independent}.
\end{proof}





From the previous theorem and \(\SOneS\) having a decidable satisfiability checking problem~\cite{S1SBuchi60}, checking whether an unconstrained hypertrace formula is satisfiable is also decidable.

\begin{corollary}
\label{cor:unconst:dec}
The satisfiability problem for unconstrained hypertrace logic is decidable.
\end{corollary}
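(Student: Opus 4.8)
The plan is to reduce the satisfiability problem for unconstrained hypertrace logic to the satisfiability problem for \(\SOneS\), which is decidable by Theorem~\ref{thm:s1s:dec}, using the rewriting \(\toMSO\) and the correspondence established in Theorem~\ref{thm:equiv:unconstrained-s1s}.

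First I would note that for a \emph{closed} unconstrained hypertrace formula \(\varphi\), the truth value of \(\varphi\) in \(\overline{\setTraces}\) is independent of the choice of \(\setTraces\): the only component of \(\overline{\setTraces}\) that varies with \(\setTraces\) is the unary predicate \(\TrPred\), and unconstrained formulas contain no occurrence of \(\TrPred\) (they have no constrained quantifiers). Hence \(\generatedSet{\varphi}\) is either empty or contains every set of traces, so \(\generatedSet{\varphi}\mathop{\neq}\emptyset\) iff \((\overline{\setTraces_0},(\AssignT,\AssignN))\modelsTFOL\varphi\) for some (equivalently, any) fixed \(\setTraces_0\) and some pair of assignments \((\AssignT,\AssignN)\). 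For an open \(\varphi\) one replaces it by its existential closure, which is again an unconstrained hypertrace formula, so this is no loss of generality.

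Next, compute \(\psi \mathop{=} \toMSO(\varphi,\Prop)\); this is effective, since \(\toMSO\) is a purely syntactic composition of \(\rewriteFlat\) with the substitution \(\sigma\). By Theorem~\ref{thm:equiv:unconstrained-s1s}, \((\overline{\setTraces_0},(\AssignT,\AssignN))\modelsTFOL\varphi\) iff \((\AssignN,\toTime(\AssignT))\modelsSOneS\psi\). As \(\AssignT\) ranges over all trace assignments, \(\toTime(\AssignT)\) ranges over all second-order assignments of the variables \(\traceVar_x\) occurring in \(\psi\), since every subset of \(\nat\) is the support set of some sequence of valuations; likewise \(\AssignN\) ranges over all first-order assignments. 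Therefore \(\generatedSet{\varphi}\mathop{\neq}\emptyset\) iff \(\generatedSet{\psi}\mathop{\neq}\emptyset\), and the latter is decidable by Theorem~\ref{thm:s1s:dec}. This gives the decision procedure: given \(\varphi\), build \(\psi\) and test \(\generatedSet{\psi}\mathop{\neq}\emptyset\).

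There is no genuine obstacle here beyond bookkeeping; all of the substance is already contained in Theorem~\ref{thm:equiv:unconstrained-s1s}. The only point requiring a moment's care is the passage between the two notions of satisfiability — from ``there exists a model set of traces'' to ``\(\psi\) is \(\SOneS\)-satisfiable'' — which rests precisely on the independence of closed unconstrained formulas from \(\setTraces\) and on \(\toTime\) being onto the space of second-order assignments of the variables relevant to \(\psi\).
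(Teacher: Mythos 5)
Your proposal is correct and follows the same route as the paper: translate via \(\toMSO\) into \(\SOneS\) using Theorem~\ref{thm:equiv:unconstrained-s1s} and then invoke the decidability of \(\SOneS\) (Theorem~\ref{thm:s1s:dec}). The two points you single out — that a closed unconstrained formula's truth is independent of the chosen trace set since \(\TrPred\) does not occur, and that \(\toTime\) is onto the relevant second-order assignments — are exactly the bookkeeping the paper leaves implicit, and you handle them correctly.
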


\vspace{-2mm}
\subsection{Relation to Full Hypertrace Logic}
\label{sec:unconst_vs_full}
\vspace{-2mm}
\newcommand{\tfree}{\text{qf}}
\newcommand{\flatAll}{\mathtt{removeForAll}}

In this section, we show that any hypertrace formula with a single alternation from an existential to a universal constrained quantifier can be rewritten into an equisatisfiable formula without constrained quantifiers.
We prove our results by adapting techniques introduced in \cite{satHyperLTL16,hierarchyHyper19} to rewrite \(\HQPTL\) formulas into \(\QPTL\).
We start by showing how to eliminate the universal constrained trace quantifiers.

\begin{lemma}
\label{lemma:removeAll}
Let \(\varphi = \overrightarrow{\QExists} \
\constrQ{\exists}{\traceVar_1} \LLL \constrQ{\exists}{\traceVar_n}\
\constrQ{\forall}{\traceVar'_1}\LLL \constrQ{\forall}{\traceVar'_m}\
\overrightarrow{\Quant} \
\varphi_{\tfree}\)
be a hypertrace formula in prenex normal form s.t.\ 
\(\QExists\) is any combination of existential time or unconstrained trace quantifiers,
{\(\overrightarrow{\Quant}\)}
is any combination of time or unconstrained trace quantifiers and \(\varphi_{\tfree}\) is a quantifier-free hypertrace formula.
Let
\begin{align*}
\hspace{-5mm}
\flatAll(\varphi) = \overrightarrow{\QExists} \
\constrQ{\exists}{\traceVar_1} \LLL \constrQ{\exists}{\traceVar_n}\
\bigwedge\limits_{j_1=1}^n \LLL \bigwedge\limits_{j_m=1}^n
\overrightarrow{\Quant} \
\varphi_{\tfree}[\traceVar'_1\mapsto\traceVar_{j_1}, \LLL, \traceVar'_m\mapsto\traceVar_{j_m}]
\end{align*}
where \(\varphi_{\tfree}[\traceVar'_1\mapsto\traceVar_{j_1}, \LLL, \traceVar'_m\mapsto\traceVar_{j_m}]\) is the formula obtained by substituting \(\traceVar'_i\) with \(\traceVar_{j_i}\), for all \(1\mathop{\leq}i\mathop{\leq}m\), in \(\varphi_{\tfree}\).
Then,
\(\generatedSet{\varphi}\neq \emptyset\) iff \(\generatedSet{\flatAll(\varphi)}\neq\emptyset\).
\end{lemma}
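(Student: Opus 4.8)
The plan is to prove both directions of the equisatisfiability by exhibiting explicit model transformations. The key observation is that a universally quantified constrained trace variable $\traceVar'_i$ ranges exactly over the traces in the model's set $\setTraces$, and the traces named by the existential constrained quantifiers $\constrQ{\exists}{\traceVar_1}, \ldots, \constrQ{\exists}{\traceVar_n}$ are themselves elements of $\setTraces$; so if we could restrict the model to contain \emph{only} those $n$ witness traces, every universal instantiation would be forced to pick one of them, which is precisely what the finite conjunction $\bigwedge_{j_1=1}^n \cdots \bigwedge_{j_m=1}^n$ simulates syntactically.

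For the direction $\generatedSet{\flatAll(\varphi)}\neq\emptyset \Rightarrow \generatedSet{\varphi}\neq\emptyset$: given a model $\setTraces$ of $\flatAll(\varphi)$, with witnesses $\trace_1, \ldots, \trace_n \in \setTraces$ for the existential constrained quantifiers (and witnesses for the leading $\overrightarrow{\QExists}$ block, which may involve unconstrained traces from $(\AllVals)^\omega$), I would take the \emph{new model} $\setTraces' = \{\trace_1, \ldots, \trace_n\}$. I claim $\setTraces' \models \varphi$. The leading existential block is still satisfiable over $\setTraces'$ because the constrained existentials can reuse $\trace_1, \ldots, \trace_n$ (which are in $\setTraces'$ by construction) and the unconstrained existentials/time quantifiers are insensitive to which traces constitute the model. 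Now each $\constrQ{\forall}{\traceVar'_i}$ over $\setTraces'$ ranges over $\{\trace_1, \ldots, \trace_n\}$, i.e.\ assigns $\traceVar'_i \mapsto \trace_{j_i}$ for some $j_i \in \{1,\ldots,n\}$; for that particular choice, the satisfaction of $\overrightarrow{\Quant}\,\varphi_{\tfree}$ with $\traceVar'_i \mapsto \trace_{j_i}$ follows from the corresponding conjunct $\varphi_{\tfree}[\traceVar'_1\mapsto\traceVar_{j_1}, \ldots]$ holding in the $\flatAll(\varphi)$ model — here I would need a small substitution lemma stating that $(\overline{\setTraces}, (\AssignT, \AssignN)) \models \psi[\traceVar' \mapsto \traceVar]$ iff $(\overline{\setTraces}, (\AssignT[\traceVar' \mapsto \AssignT(\traceVar)], \AssignN)) \models \psi$, which is routine structural induction. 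The $\overrightarrow{\Quant}$ block is again stable under the model change for the same reason as the existential block.

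For the converse $\generatedSet{\varphi}\neq\emptyset \Rightarrow \generatedSet{\flatAll(\varphi)}\neq\emptyset$: given $\setTraces \models \varphi$ with existential witnesses $\trace_1, \ldots, \trace_n$, I claim the \emph{same} $\setTraces$ (or again $\{\trace_1,\ldots,\trace_n\}$) models $\flatAll(\varphi)$ — take $\setTraces' = \{\trace_1,\ldots,\trace_n\}$ to keep the argument uniform. Keep the same witnesses for $\overrightarrow{\QExists}$ and $\constrQ{\exists}{\traceVar_1}, \ldots, \constrQ{\exists}{\traceVar_n}$. For each of the $n^m$ conjuncts indexed by $(j_1, \ldots, j_m)$, I must show $\overrightarrow{\Quant}\, \varphi_{\tfree}[\traceVar'_1\mapsto\traceVar_{j_1}, \ldots, \traceVar'_m\mapsto\traceVar_{j_m}]$ holds; by the substitution lemma this is equivalent to $\overrightarrow{\Quant}\,\varphi_{\tfree}$ holding with $\traceVar'_i \mapsto \trace_{j_i}$, and since $\varphi$ is satisfied, its universal block $\constrQ{\forall}{\traceVar'_1} \cdots \constrQ{\forall}{\traceVar'_m}\,\overrightarrow{\Quant}\,\varphi_{\tfree}$ holds for \emph{every} assignment of the $\traceVar'_i$ to traces in $\setTraces'$, in particular to $\trace_{j_i}$. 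One subtlety: the leading $\overrightarrow{\QExists}$ block and the internal $\overrightarrow{\Quant}$ block may interleave unconstrained trace and time quantifiers with the structure above — but since the lemma's hypothesis fixes the shape $\overrightarrow{\QExists}\ \constrQ{\exists}{}^n\ \constrQ{\forall}{}^m\ \overrightarrow{\Quant}$, and the substitution only touches the $\traceVar'_i$ which are bound before $\overrightarrow{\Quant}$, there is no variable capture and the quantifier blocks commute with the reasoning.

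The main obstacle I anticipate is being careful about the leading block $\overrightarrow{\QExists}$: it contains \emph{unconstrained} existential trace quantifiers whose witnesses are arbitrary traces in $(\AllVals)^\omega$, not necessarily in the model, so when I shrink the model to $\{\trace_1, \ldots, \trace_n\}$ I must argue that the truth of everything after $\overrightarrow{\QExists}$ — including the constrained quantifiers \emph{inside} scope — is unaffected except through the interpretation of the constraint predicate $T$, which is exactly what the reduction is designed to control. Making this precise likely requires a helper lemma: if $\setTraces'' \subseteq \setTraces'$ and all constrained-existential witnesses used in a satisfying assignment for $\varphi$ lie in $\setTraces''$, and $\varphi$ has no universal constrained quantifier ranging over traces outside $\setTraces''$ that could be violated (which holds after we've chosen $\setTraces'' = \{\trace_1, \ldots, \trace_n\}$ to contain all the existential witnesses and the universals then only see those), then $\setTraces'' \models \varphi$. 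Formalizing the bookkeeping for nested quantifier alternation within $\overrightarrow{\QExists}$ and $\overrightarrow{\Quant}$ is where the real care is needed; the rest is the substitution lemma plus unwinding the semantics.
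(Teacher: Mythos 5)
Your proposal is correct and follows essentially the same route as the paper: restrict the model to the finite set of traces named by the constrained existential witnesses, and then replace each constrained universal quantifier by the conjunction over all substitutions into those witnesses, noting that the leading block and the post-universal block are insensitive to the model's trace set. The paper packages the model-shrinking step as an explicit subset-closure claim for formulas without constrained existentials (proved by structural induction) and the finite-domain step as a displayed equivalence, which is exactly the ``helper lemma'' and ``substitution lemma'' bookkeeping you anticipate.
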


\begin{proof}
%
%
A set of trace variables \(\Var\) and a trace assignment {\(\AssignT\)} that includes assignments for all variables in \(\Var\) 
(i.e., \(\Var\mathop{\subseteq}\text{Dom}({\AssignT})\)) 
define the set of traces 
\(\setTraces_{(\AssignT,\Var)}\mathop{=}\{\AssignT(\traceVar) \mid \traceVar\mathop{\in}\Var\}\).
For a set of free variables and assignment, \(\Var\) and \(\AssignT\), when we evaluate the subformula starting with the universal constrained quantifier against the set of traces \(\setTraces_{(\AssignT,\Var)}\),
we can remove the universal quantifiers by considering all possible combinations of assignments to the traces in \(\setTraces_{(\AssignT,\Var)}\).
Formally, for any hypertrace formula 
\(\constrQ{\forall}{\traceVar'_1}\ldots \constrQ{\forall}{\traceVar'_m}\
\overrightarrow{\Quant} \
\varphi_{\tfree}\) where 
 \(\overrightarrow{\Quant}\)
is any combination of time or unconstrained trace quantifiers, \(\varphi_{\tfree}\) is a quantifier-free hypertrace formula, and 
\(\Var\mathop{=}\{\traceVar_1, \ldots, \traceVar_n\}\) s.t.\ 
\(\Var\mathop{\subseteq} \freeV(\constrQ{\forall}{\traceVar'_1}\ldots \constrQ{\forall}{\traceVar'_m}\
\overrightarrow{\Quant} \
\varphi_{\tfree})\).
Formally:
\begin{align}
&(\overline{\setTraces_{(\AssignT,\Var)}}, (\AssignT,\AssignN))\modelsTFOL \constrQ{\forall}{\traceVar'_1}\ldots \constrQ{\forall}{\traceVar'_m}\
\overrightarrow{\Quant} \
\varphi_{\tfree} \tIff \label{lemma:removeAll:1}\\
& (\overline{\setTraces_{(\AssignT,\Var)}}, (\AssignT,\AssignN))\modelsTFOL
\bigwedge\limits_{j_1=1}^n \ldots \bigwedge\limits_{j_m=1}^n
\overrightarrow{\Quant} \
\varphi_{\tfree}[\traceVar'_1\mapsto\traceVar_{j_1}, \LLL, \traceVar'_m\mapsto\traceVar_{j_m}].\nonumber
\end{align}
We prove this equivalence below:
 \begin{align*}
 &(\overline{\setTraces_{(\AssignT,\Var)}}, (\AssignT,\AssignN))\modelsTFOL \constrQ{\forall}{\traceVar'_1}\ldots \constrQ{\forall}{\traceVar'_m}\
 \overrightarrow{\Quant} \
 \varphi_{\tfree} \overset{\text{def. } \modelsTFOL}{\Leftrightarrow}\\
 & \text{for all } \trace_1\mathop{\in}\setTraces_{(\AssignT,\Var)}, \ldots, 
 \trace_m\mathop{\in}\setTraces_{(\AssignT,\Var)}:\\
 &\hspace{10mm}
 (\overline{\setTraces_{(\AssignT,\Var)}}, (\AssignT[\traceVar_1 \mapsto\trace_1,\LLL,\traceVar_m \mapsto\trace_m],\AssignN))\modelsTFOL 
 \overrightarrow{\Quant} \
 \varphi_{\tfree} \overset{\text{def. } \setTraces_{(\AssignT,\Var)}}{\Leftrightarrow}\\
 & \text{for all } \trace_1\mathop{\in}\{\AssignT[\traceVar_1],\LLL,\AssignT[\traceVar_n]\}, \ldots, 
 \trace_m\mathop{\in}\{\AssignT[\traceVar_1],\LLL,\AssignT[\traceVar_n]\}:\\
 &\hspace{10mm}
 (\overline{\setTraces_{(\AssignT,\Var)}}, (\AssignT[\traceVar_1 \mapsto\trace_1,\LLL,\traceVar_m \mapsto\trace_m],\AssignN))\modelsTFOL 
 \overrightarrow{\Quant} \
 \varphi_{\tfree} 
 \overset{}{\Leftrightarrow}\\
 & \text{for all } \traceVar_{j_1} \mathop{\in}\{1,\LLL,n\}, \LLL,\traceVar_{j_m} \mathop{\in}\{1,\LLL,n\}:\\
 &\hspace{10mm}
 (\overline{\setTraces_{(\AssignT,\Var)}}, (\AssignT[\traceVar_1 \mapsto\AssignT[\traceVar_{j_1}],\LLL,\traceVar_m \mapsto\AssignT[\traceVar_{j_m}]],\AssignN))\modelsTFOL 
 \overrightarrow{\Quant} \
 \varphi_{\tfree} 
 \overset{\text{finite domains}}{\Leftrightarrow}\\
 &\bigwedge\limits_{j_1=1}^n \ldots \bigwedge\limits_{j_m=1}^n: 
 (\overline{\setTraces_{(\AssignT,\Var)}}, (\AssignT[\traceVar_1 \mapsto\AssignT[\traceVar_{j_1}],\LLL,\traceVar_m \mapsto\AssignT[\traceVar_{j_m}]],\AssignN))\modelsTFOL 
 \overrightarrow{\Quant} \
 \varphi_{\tfree} 
 \overset{\text{def. }\modelsTFOL}{\Leftrightarrow}\\
 & (\overline{\setTraces_{(\AssignT,\Var)}}, (\AssignT,\AssignN))\modelsTFOL
 \bigwedge\limits_{j_1=1}^n \ldots \bigwedge\limits_{j_m=1}^n
 \overrightarrow{\Quant} \
 \varphi_{\tfree}[\traceVar'_1\mapsto\traceVar_{j_1}, \LLL, \traceVar'_m\mapsto\traceVar_{j_m}].
 \end{align*}

Before we prove the main claim, we need to prove that the set of models of a hypertrace formulas without existential constrained quantifiers is subset-closed.

{\bf Claim 1}: \emph{Let \(\varphi\) be a hypertrace formula in prenex and negation normal form (i.e., all quantifiers at the beginning of the formula and negation only at the atomic level) without existential constrained trace quantifiers.
For all sets of traces \(\setTraces\) and its subsets \(\setTraces'\mathop{\subseteq}\setTraces\), and assignments \(\AssignT\) and 
\(\AssignN\): if \((\overline{\setTraces}, (\AssignT,\AssignN) \modelsTFOL \varphi\), then 
\((\overline{\setTraces'}, (\AssignT,\AssignN) \modelsTFOL \varphi\).}

{\bf Proof of claim 1}: We prove by structural induction on \(\varphi\).
For the bases cases, we observe that \(\modelsTFOL\) is independent of the set of traces, hence, as the assignments are preserved, the property holds.
The induction cases \(\varphi\vee\varphi'\), \(\varphi\wedge\varphi'\), \(\exists i\ \varphi\) and \(\forall i \ \varphi\) follow directly from definitions and induction hypothesis.
The only case remaining is \(\constrQ{\forall}{\traceVar}\ \varphi\).
Consider arbitrary set of traces \(\setTraces\) and \(\setTraces'\mathop{\subseteq}\setTraces\) and assignments \(\AssignT\) and 
\(\AssignN\).
We assume that  \((\overline{\setTraces}, (\AssignT,\AssignN) \modelsTFOL \constrQ{\forall}{\traceVar}\ \varphi\), which, by definition of \(\modelsTFOL\), is equivalent to:
for all traces \(\trace\mathop{\in}\setTraces\), \((\overline{\setTraces}, (\AssignT[\traceVar\mapsto\trace],\AssignN) \modelsTFOL \varphi\).
By induction hypothesis and \(\setTraces'\mathop{\subseteq}\setTraces\), 
for all traces \(\trace\mathop{\in}\setTraces'\), \((\overline{\setTraces'}, (\AssignT[\traceVar\mapsto\trace],\AssignN) \modelsTFOL \varphi\).
Hence, by definition of \(\modelsTFOL\), \((\overline{\setTraces'}, (\AssignT,\AssignN) \modelsTFOL \constrQ{\forall}{\traceVar}\ \varphi\).
\medskip

\newcommand{\VExist}{\Var_{\exists}}
We prove now each side of the implication.
We assume without loss of generality that the formula is in prenex and negation normal form.
In what follows, \(\VExist\mathop{=}\{\traceVar_1, \LLL, \traceVar_n\}\).
We start by proving that \(\generatedSet{\varphi}\mathop{\neq} \emptyset \Rightarrow \generatedSet{\flatAll(\varphi)}\mathop{\neq} \emptyset\):
\begin{align*}
	&\generatedSet{\varphi}\mathop{\neq} \emptyset \Leftrightarrow 
	\text{exists } \setTraces \mathop{\in} \generatedSet{\overrightarrow{\QExists} \
		\constrQ{\exists}{\traceVar_1} \ldots \constrQ{\exists}{\traceVar_n}\
		\constrQ{\forall}{\traceVar'_1}\ldots \constrQ{\forall}{\traceVar'_m}\
		\overrightarrow{\Quant} \
		\varphi_{\tfree}} 
	\overset{\text{def. } \modelsTFOL}{\Leftrightarrow}\\
	& \text{exists } \setTraces, \AssignT \tAnd \AssignN \tSt \AssignT(\traceVar)\mathop{\in}\setTraces \text{ with } \traceVar\mathop{\in}\VExist:\\
	&
	\hspace{10mm}
	(\overline{\setTraces}, (\AssignT, \AssignN))\modelsTFOL \ \constrQ{\forall}{\traceVar'_1}\ldots \constrQ{\forall}{\traceVar'_m}\ \overrightarrow{\Quant}\ \varphi_{\tfree} 
	\overset{\setTraces_{(\AssignT,\VExist)}\mathop{\subseteq}\setTraces, \text{Claim 1}}{\Rightarrow}\\
	&\text{exists } \AssignT \tAnd \AssignN:
	(\overline{\setTraces_{(\AssignT,\VExist)}}, (\AssignT, \AssignN))\modelsTFOL  \constrQ{\forall}{\traceVar'_1}\ldots \constrQ{\forall}{\traceVar'_m}\ \overrightarrow{\Quant}\ \varphi_{\tfree} 
	\overset{(\ref{lemma:removeAll:1})}{\Leftrightarrow}\\
	&\text{exists } \AssignT \tAnd \AssignN:\\
	&
	\hspace{10mm}
	(\overline{\setTraces_{(\AssignT,\VExist)}}, (\AssignT, \AssignN))\modelsTFOL \bigwedge\limits_{j_1=1}^n \ldots \bigwedge\limits_{j_m=1}^n
	\overrightarrow{\Quant} \
	\varphi_{\tfree}[\traceVar'_1\mapsto\traceVar_{j_1}, \LLL, \traceVar'_m\mapsto\traceVar_{j_m}]
		\overset{}{\Leftrightarrow}\\
	&\text{exists } \AssignT \tAnd \AssignN:
	\setTraces_{(\AssignT,\VExist)}\modelsTFOL \flatAll(\varphi)
	\quad \Rightarrow \quad \generatedSet{\flatAll(\varphi)}\mathop{\neq} \emptyset
\end{align*}

And now, we prove \(\generatedSet{\flatAll(\varphi)}\mathop{\neq} \emptyset \Rightarrow \generatedSet{\varphi}\mathop{\neq} \emptyset\):
\begin{align*}
	&\generatedSet{\flatAll(\varphi)}\mathop{\neq} \emptyset \Leftrightarrow\\
	& \text{exists } \setTraces \mathop{\in} \generatedSet{\overrightarrow{\QExists} \
		\constrQ{\exists}{\traceVar_1} \LLL \constrQ{\exists}{\traceVar_n}\!\!
		\bigwedge\limits_{j_1=1}^n \LLL \bigwedge\limits_{j_m=1}^n\!\!
		\overrightarrow{\Quant} \
		\varphi_{\tfree}[\traceVar'_1\mapsto\traceVar_{j_1}, \LLL, \traceVar'_m\mapsto\traceVar_{j_m}]} 
	 \overset{\text{def. } \modelsTFOL}{\Leftrightarrow}\\
	 &\text{exists } \setTraces, \AssignT \tAnd \AssignN:
	 (\overline{\setTraces}, (\AssignT, \AssignN))\modelsTFOL\!\! \bigwedge\limits_{j_1=1}^n \LLL \bigwedge\limits_{j_m=1}^n\!\!
	 \overrightarrow{\Quant} \
	 \varphi_{\tfree}[\traceVar'_1\mapsto\traceVar_{j_1}, \LLL, \traceVar'_m\mapsto\traceVar_{j_m}] 
	\underset{\text{Claim 1, (8)}}{\overset{\setTraces_{(\AssignT,\VExist)}\mathop{\subseteq}\setTraces}{\Rightarrow}}\\
	 &\text{exists }  \AssignT \tAnd \AssignN:
	 (\overline{\setTraces_{(\AssignT,\VExist)}}, (\AssignT, \AssignN))\modelsTFOL \!\!\! \bigwedge\limits_{j_1=1}^n \LLL \bigwedge\limits_{j_m=1}^n\!\!
	 \overrightarrow{\Quant} \
	 \varphi_{\tfree}[\traceVar'_1\mapsto\traceVar_{j_1}, \LLL, \traceVar'_m\mapsto\traceVar_{j_m}] 
	 \Leftrightarrow\\
	&\text{exists } \AssignT \tAnd \AssignN:
	(\overline{\setTraces_{(\AssignT,\VExist)}}, (\AssignT, \AssignN))\modelsTFOL  \constrQ{\forall}{\traceVar'_1}\ldots \constrQ{\forall}{\traceVar'_m}\ \overrightarrow{\Quant}\ \varphi_{\tfree} 
	\overset{}{\Rightarrow}\text{exists } \setTraces \mathop{\in}\generatedSet{\varphi} \qedhere
\end{align*}
\end{proof}

\begin{example}
for the formula:
$\varphi = \exists\pi_1\exists\pi_2\forall \pi_3\exists i\exists j\ a(\pi_1, i) \land \neg a(\pi_2, i) \land b(\pi_3, j)$,
we have 
\begin{flalign*}
\flatAll&(\varphi) = \exists\pi_1\exists\pi_2
\\&
(\exists i\exists j\ a(\pi_1, i) \land \neg a(\pi_2, i) \land b(\pi_1, j)) \land 
(\exists i\exists j\ a(\pi_1, i) \land \neg a(\pi_2, i) \land b(\pi_2, j))
\end{flalign*}
which can be simplified to
$\exists\pi_1\exists\pi_2\exists i\ a(\pi_1, i) \land \neg a(\pi_2, i) \land (\exists j\ b(\pi_1, j))
\land (\exists j\ b(\pi_2, j))$.
\end{example}

We observe that hypertrace formulas with only existential constrained trace quantifiers are equisatisfiable to an unconstrained formula.
Intuitively, each existential trace quantifier can be instantiated independently of the others.

\begin{lemma}
\label{lemma:removeExists}
Let \(\varphi = \overrightarrow{\QExists} \
\constrQ{\exists}{\traceVar_1} \ldots \constrQ{\exists}{\traceVar_n}\
\overrightarrow{\Quant} \
\varphi_{\tfree}\)
be a hypertrace formula in prenex normal form s.t.\ 
\(\overrightarrow{\QExists}\) is a any combination of existential time or unconstrained trace quantifiers,
\(\overrightarrow{\Quant}\) is any combination of time or unconstrained trace quantifiers and \(\varphi_{\tfree}\) is a quantifier-free hypertrace formula.
\(\generatedSet{\varphi}\neq\emptyset\) iff 
\(\generatedSet{\overrightarrow{\QExists} \
\exists\traceVar_1 \ldots \exists \traceVar_n\
\overrightarrow{\Quant} \
\varphi_{\tfree}}\neq \emptyset\).
\end{lemma}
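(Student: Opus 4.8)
The plan is to prove the two directions of the equisatisfiability separately. Write $\varphi^{\exists}$ for the formula obtained from $\varphi$ by replacing each $\constrQ{\exists}{\traceVar_i}$ by $\exists\traceVar_i$, i.e.\ $\varphi^{\exists} = \overrightarrow{\QExists}\ \exists\traceVar_1 \ldots \exists\traceVar_n\ \overrightarrow{\Quant}\ \varphi_{\tfree}$. The direction $\generatedSet{\varphi}\neq\emptyset \Rightarrow \generatedSet{\varphi^{\exists}}\neq\emptyset$ I would obtain essentially for free from monotonicity: unfolding $\constrQ{\exists}{\traceVar_i}\ \chi \Def \exists\traceVar_i\ (\TrPred(\traceVar_i)\wedge\chi)$, any witness for a constrained existential is a witness for the corresponding unconstrained one, so $\constrQ{\exists}{\traceVar_1}\ldots\constrQ{\exists}{\traceVar_n}\ \overrightarrow{\Quant}\ \varphi_{\tfree}$ implies $\exists\traceVar_1\ldots\exists\traceVar_n\ \overrightarrow{\Quant}\ \varphi_{\tfree}$ under every structure and every assignment; since this block sits below the purely existential prefix $\overrightarrow{\QExists}$, the implication lifts, giving $\setTraces\modelsTFOL\varphi \Rightarrow \setTraces\modelsTFOL\varphi^{\exists}$ for every set of traces $\setTraces$, hence $\generatedSet{\varphi}\subseteq\generatedSet{\varphi^{\exists}}$.

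For the converse I would take the witnesses of the now-unconstrained existential trace quantifiers as the model. First I would isolate an auxiliary \emph{set-independence} fact: for every hypertrace formula $\psi$ with no constrained trace quantifiers (of either flavor), all sets of traces $\setTraces_1,\setTraces_2$ and all assignments $\AssignT,\AssignN$, we have $(\overline{\setTraces_1},(\AssignT,\AssignN))\modelsTFOL\psi$ iff $(\overline{\setTraces_2},(\AssignT,\AssignN))\modelsTFOL\psi$. This is a routine structural induction: the atoms $\BinaryP_a(\traceVar,\timeVar)$, $\timeVar<\timeVar'$, $\timeVar=\timeVar'$ receive the same interpretation in $\overline{\setTraces_1}$ and $\overline{\setTraces_2}$ since none of them mentions the unary predicate $\TrPred$; the Boolean cases are immediate; and $\exists\timeVar$, $\exists\traceVar$ range over the fixed domains $\nat$ and $(\AllVals)^\omega$, which are independent of the generating set. (This is the constrained-free analogue of Claim~1 in the proof of Lemma~\ref{lemma:removeAll}, but with full independence in place of mere downward closure.) Note $\overrightarrow{\Quant}\ \varphi_{\tfree}$ is exactly such a $\psi$.

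Then the argument for $\generatedSet{\varphi^{\exists}}\neq\emptyset \Rightarrow \generatedSet{\varphi}\neq\emptyset$ runs as follows. From $\setTraces\modelsTFOL\varphi^{\exists}$ extract witnessing assignments $(\AssignT_0,\AssignN_0)$ for the existential prefix $\overrightarrow{\QExists}\ \exists\traceVar_1\ldots\exists\traceVar_n$ with $(\overline{\setTraces},(\AssignT_0,\AssignN_0))\modelsTFOL\overrightarrow{\Quant}\ \varphi_{\tfree}$. Set $\trace_i \Def \AssignT_0(\traceVar_i)$ and $\setTraces' \Def \{\trace_1,\ldots,\trace_n\}$. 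By set-independence, $(\overline{\setTraces'},(\AssignT_0,\AssignN_0))\modelsTFOL\overrightarrow{\Quant}\ \varphi_{\tfree}$; and since each $\trace_i\in\setTraces'$, the predicate $\TrPred$ holds of $\traceVar_i$ in $\overline{\setTraces'}$, so witnessing $\constrQ{\exists}{\traceVar_1},\ldots,\constrQ{\exists}{\traceVar_n}$ in turn by $\trace_1,\ldots,\trace_n$ (and unfolding the abbreviations) gives $(\overline{\setTraces'},(\AssignT_0,\AssignN_0))\modelsTFOL\constrQ{\exists}{\traceVar_1}\ldots\constrQ{\exists}{\traceVar_n}\ \overrightarrow{\Quant}\ \varphi_{\tfree}$. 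Finally the outer prefix $\overrightarrow{\QExists}$ is re-satisfied by reusing the values that $\AssignT_0,\AssignN_0$ assign, as these are existential-time witnesses in $\nat$ and unconstrained-trace witnesses in $(\AllVals)^\omega$, neither kind depending on the underlying set. Hence $\setTraces'\modelsTFOL\varphi$ and $\generatedSet{\varphi}\neq\emptyset$.

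The main obstacle is bookkeeping rather than a genuine conceptual difficulty: one has to be careful about restricting assignments to free variables and about tracking which variables are active while peeling through the prenex block. The one load-bearing ingredient is the set-independence claim, which is what permits discarding all of $\setTraces$ except the $n$ chosen witnesses; I would stress that it relies on $\overrightarrow{\Quant}$ having no constrained quantifiers — if the replaced block contained universal constrained quantifiers, this ``witnesses-as-model'' argument would fail, which is precisely why Lemma~\ref{lemma:removeAll} needed the heavier $\bigwedge$-expansion.
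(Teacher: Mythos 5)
Your proposal is correct. The paper in fact states Lemma~\ref{lemma:removeExists} without proof, offering only the intuition that ``each existential trace quantifier can be instantiated independently of the others''; your argument --- monotonicity of $\constrQ{\exists}{\traceVar}$ versus $\exists\traceVar$ for one direction, and for the converse the set-independence of constrained-quantifier-free formulas (the atoms and the unconstrained/time quantifiers never consult the predicate $\TrPred$) combined with taking the existential witnesses $\{\trace_1,\ldots,\trace_n\}$ as the new model --- is exactly the formalization that intuition calls for, and it parallels the role Claim~1 plays in the proof of Lemma~\ref{lemma:removeAll}.
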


Finally, for all hypertrace formulas where all universal constrained trace quantifiers are only followed by existential constrained quantifiers, we can apply the rewrite described in Lemma 
\ref{lemma:removeAll} and Lemma \ref{lemma:removeExists} to get an equisatisfiable unconstrained hypertrace formula.

\begin{theorem}
\label{thm:toUnconstrained}
Let \(\varphi \mathop{=} \overrightarrow{\QExists} \
\constrQ{\exists}{\traceVar_1} \LLL \constrQ{\exists}{\traceVar_n}\
\constrQ{\forall}{\traceVar'_1}\LLL \constrQ{\forall}{\traceVar'_m}\
\overrightarrow{\Quant} \
\varphi_{\tfree}\)
be a hypertrace formula in prenex normal form s.t.\ 
\(\QExists\) is any combination of existential time or unconstrained trace quantifiers,
{\(\overrightarrow{\Quant}\)}
is any combination of time or unconstrained trace quantifiers and \(\varphi_{\tfree}\) is a quantifier-free hypertrace formula.
There exists an unconstrained hypertrace formula \(\varphi_{u}\) s.t.:
\(\generatedSet{\varphi}\neq \emptyset\) iff \(\generatedSet{\varphi_{u}}\neq \emptyset\).
\end{theorem}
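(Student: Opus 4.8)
The plan is to obtain $\varphi_u$ by composing the two rewrites of Lemma~\ref{lemma:removeAll} and Lemma~\ref{lemma:removeExists}, with a prenexing step inserted between them.

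\emph{Step 1: eliminate the universal constrained quantifiers.} The formula $\varphi$ has exactly the shape required by Lemma~\ref{lemma:removeAll} (a block $\constrQ{\exists}{\traceVar_1}\LLL\constrQ{\exists}{\traceVar_n}$ of existential constrained trace quantifiers, then a block $\constrQ{\forall}{\traceVar'_1}\LLL\constrQ{\forall}{\traceVar'_m}$ of universal constrained trace quantifiers, wrapped by the $\overrightarrow{\QExists}$ and $\overrightarrow{\Quant}$ blocks of time and unconstrained trace quantifiers around a quantifier-free matrix). Hence that lemma gives $\generatedSet{\varphi}\neq\emptyset$ iff $\generatedSet{\flatAll(\varphi)}\neq\emptyset$, where
\[
\flatAll(\varphi)=\overrightarrow{\QExists}\ \constrQ{\exists}{\traceVar_1}\LLL\constrQ{\exists}{\traceVar_n}\ \bigwedge\limits_{j_1=1}^n\LLL\bigwedge\limits_{j_m=1}^n \overrightarrow{\Quant}\ \varphi_{\tfree}[\traceVar'_1\mapsto\traceVar_{j_1},\LLL,\traceVar'_m\mapsto\traceVar_{j_m}].
\]
The key point is that $\flatAll$ introduces no new quantifiers, so $\flatAll(\varphi)$ has no universal constrained trace quantifier: the only constrained quantifiers remaining are the existential ones $\constrQ{\exists}{\traceVar_1},\dots,\constrQ{\exists}{\traceVar_n}$.

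\emph{Step 2: prenex the body under the existential constrained block.} The matrix $\bigwedge_{j_1=1}^n\LLL\bigwedge_{j_m=1}^n \overrightarrow{\Quant}\ \varphi_{\tfree}[\cdots]$ is a conjunction of $n^m$ prenex formulas, each with a quantifier prefix built solely from time and unconstrained trace quantifiers. After renaming the bound variables of the conjuncts apart (using the standing assumption that every variable is quantified only once, and choosing the fresh names distinct from $\traceVar_1,\dots,\traceVar_n$ and from the variables of $\overrightarrow{\QExists}$), all these quantifiers can be pulled to the front of the conjunction, since both $\exists$ and $\forall$ distribute over $\wedge$ when the quantified variable is fresh for the other conjunct. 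This produces a prenex formula $\overrightarrow{\Quant}'\ \varphi'_{\tfree}$ logically equivalent to the big conjunction, where $\overrightarrow{\Quant}'$ is again a combination of time and unconstrained trace quantifiers and $\varphi'_{\tfree}$ is a quantifier-free hypertrace formula. Placing it back inside the scope of the constrained existentials, $\flatAll(\varphi)$ is logically equivalent to
\[
\varphi'=\overrightarrow{\QExists}\ \constrQ{\exists}{\traceVar_1}\LLL\constrQ{\exists}{\traceVar_n}\ \overrightarrow{\Quant}'\ \varphi'_{\tfree},
\]
which is precisely of the form to which Lemma~\ref{lemma:removeExists} applies, and in particular $\generatedSet{\flatAll(\varphi)}\neq\emptyset$ iff $\generatedSet{\varphi'}\neq\emptyset$.

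\emph{Step 3: eliminate the existential constrained quantifiers.} Applying Lemma~\ref{lemma:removeExists} to $\varphi'$, replacing each $\constrQ{\exists}{\traceVar_i}$ by the unconstrained $\exists\traceVar_i$ gives $\varphi_u=\overrightarrow{\QExists}\ \exists\traceVar_1\LLL\exists\traceVar_n\ \overrightarrow{\Quant}'\ \varphi'_{\tfree}$ with $\generatedSet{\varphi'}\neq\emptyset$ iff $\generatedSet{\varphi_u}\neq\emptyset$. Since $\varphi_u$ contains no constrained trace quantifier at all, it is an unconstrained hypertrace formula, and chaining the three equivalences yields $\generatedSet{\varphi}\neq\emptyset$ iff $\generatedSet{\varphi_u}\neq\emptyset$. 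The only point requiring care is the prenexing of Step 2 — checking that the pulled-out quantifiers stay unconstrained/temporal (immediate, as no new constrained quantifiers are created) and that the renaming avoids the now-free variables $\traceVar_1,\dots,\traceVar_n$ — but this is routine given that variables are quantified at most once; the real content of the theorem is supplied by the two preceding lemmas.
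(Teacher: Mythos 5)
Your proposal is correct and follows exactly the route the paper intends: chain Lemma~\ref{lemma:removeAll} to eliminate the universal constrained block, then Lemma~\ref{lemma:removeExists} to drop the constraints on the existential block. Your Step~2 (renaming bound variables apart and re-prenexing the $n^m$-fold conjunction so that the formula again matches the prenex shape demanded by Lemma~\ref{lemma:removeExists}) is a detail the paper glosses over, and you handle it correctly — the pulled-out quantifiers remain time or unconstrained trace quantifiers, so the equivalence is ordinary first-order prenexing.
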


\section{Trace-prefixed Hypertrace Logic}
\label{sec:trace_prefix}

In this section, we consider the fragment of \(\logic\) where trace quantifiers come before time quantifiers.
We call this the \emph{trace-prefixed hypertrace logic}, denoted
\(\TraceTFOL\).
Formally, trace-prefixed  formulas \(\varphi \in \TraceTFOL\) are defined by the grammar:
\[
\begin{split}
    \varphi ::= \exists \traceVar\, \varphi \ |\  \constrQ{\exists}{\traceVar}\, \varphi \ |\ \neg \varphi \ |\  \psi 
    \hspace{10mm}
	\psi &::= \exists \timeVar\ \psi \ |\ \psi \vee \psi \ |\ \neg \psi  \ |\  \timeVar < \timeVar\ |\  \timeVar = \timeVar \ |\    \BinaryP(\traceVar,\timeVar)
\end{split}
\] 
where \(\traceVar\) is a trace variable,
\(\timeVar\) is a time variable and \(\BinaryP\) is a binary predicate.

We observe that \(\TraceTFOL\)
is orthogonal to unconstrained hypertrace logic. 
While the unconstrained fragment allows any mix of (unconstrained) trace and time quantifiers, \(\TraceTFOL\) requires trace quantifiers to come first.
On the other hand, \(\TraceTFOL\) introduces constrained trace quantifiers, which the unconstrained version does not support.

\begin{example}
We can express \emph{bounded promptness} in \(\TraceTFOL\):
\begin{align}
\label{ex:bounded_promp}
\exists \traceVar\ \constrQ{\forall}{\traceVar'}\ \exists i\ \forall j\ \big( (j\mathop{<}i \rightarrow \neg p(\traceVar,j)) \wedge p(\traceVar,i) \wedge q(\traceVar',i)\big).
\end{align}
In this property, the unconstrained trace is used to guess a synchronization point (i.e., the time when \(p\) becomes true in \(\traceVar\)) where all traces agree with the value of proposition \(q\).
\end{example}

In \cite{bartocci2022flavors}, the authors study the trace-prefixed hypertrace logic for the fragment of \(\logic\) with only constrained trace quantifiers, which we refer to as \(\cTraceTFOL\).
They prove that \(\cTraceTFOL\) is expressively equivalent to \(\HLTL\).
From this result, we can prove that adding unconstrained quantifiers to \(\cTraceTFOL\) extends its expressive power.

\begin{proposition}
\label{thm:const_vs_full:trace}
\(\TraceTFOL\) is strictly more expressive than \(\cTraceTFOL\).
\end{proposition}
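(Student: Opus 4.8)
To show that $\TraceTFOL$ is strictly more expressive than $\cTraceTFOL$, it suffices to (i) observe that every $\cTraceTFOL$ formula is trivially a $\TraceTFOL$ formula, so $\cTraceTFOL \subseteq \TraceTFOL$ in expressive power, and (ii) exhibit a single $\TraceTFOL$ property that is not expressible in $\cTraceTFOL$. Since the excerpt recalls that $\cTraceTFOL$ is expressively equivalent to $\HLTL$ \cite{bartocci2022flavors}, it is enough to find a $\TraceTFOL$ formula whose associated hyperproperty is not $\HLTL$-definable. The natural candidate is a property that relies essentially on the unconstrained quantifier to "guess" a trace not required to be in the model — for instance the bounded-promptness formula of Equation \eqref{ex:bounded_promp}, or an even simpler witness such as $\exists \traceVar\ \exists i\ p(\traceVar,i)$, which is satisfied by \emph{every} set of traces (the unconstrained $\traceVar$ can always be chosen to make $p$ true somewhere), including the empty set.

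\textbf{Key steps, in order.} First I would note the easy inclusion: the grammar of $\cTraceTFOL$ is obtained from that of $\TraceTFOL$ by deleting the $\exists\traceVar$ production, so syntactically $\cTraceTFOL \subsetneq \TraceTFOL$ and every $\cTraceTFOL$-definable hyperproperty is $\TraceTFOL$-definable. Second, I would pick the witness formula $\varphi_u \Def \exists \traceVar\ \exists i\ p(\traceVar,i)$ and compute $\generatedSet{\varphi_u}$: for any set of traces $\setTraces$ (empty or not), we may interpret $\traceVar$ by a trace $\trace$ with $p\in\trace[0]$, so $\setTraces \modelsTFOL \varphi_u$; hence $\generatedSet{\varphi_u}$ is the set of \emph{all} sets of traces. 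Third, I would argue this hyperproperty is not expressible in $\cTraceTFOL$: by the equivalence $\cTraceTFOL \equiv \HLTL$ it would have to be $\HLTL$-definable, but an $\HLTL$ formula whose models are all trace sets must in particular be satisfied by the empty trace set; an $\HLTL$ formula satisfied by the empty set can contain no universal trace quantifier with a nonvacuous body and no existential trace quantifier at all at the top level — more carefully, any $\HLTL$ formula with at least one existential trace quantifier is \emph{not} satisfied by the empty model, and any $\HLTL$ formula satisfiable by the empty model is (equivalent to) one whose top-level quantifiers are all universal, but then adding an isolated trace that sets $p$ false everywhere and makes the LTL matrix fail shows such a formula cannot hold of all trace sets. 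So no $\HLTL$ formula has exactly the set of all trace sets as its models, giving the strict separation. Alternatively, and perhaps more robustly, I would use the bounded-promptness formula \eqref{ex:bounded_promp} and invoke a known inexpressibility argument for promptness-style properties in $\HLTL$.

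\textbf{Main obstacle.} The routine direction (the inclusion) is immediate; the real work is the inexpressibility half. The cleanest route is to exploit behaviour on the empty model or on singleton models: $\HLTL$ semantics over constrained quantifiers is sensitive to whether the model is empty (an $\exists\traceVar$ cannot be witnessed), whereas the unconstrained $\exists\traceVar$ of $\TraceTFOL$ never has this problem. I expect the subtlety to be making the "empty-model" or "small-model" argument fully rigorous against \emph{all} $\HLTL$ formulas — one must handle quantifier alternation and the possibility that a clever $\HLTL$ formula simulates the desired behaviour indirectly. Converting everything through the $\cTraceTFOL \equiv \HLTL$ correspondence and then citing (or reproving) the standard fact that the only $\HLTL$-definable hyperproperty containing the empty set of traces and closed "upward" in the relevant sense is the trivial one should close the gap; if that fact is not directly available, falling back on a promptness-based separation (which is standard in the hyperlogic literature) is the safe alternative.
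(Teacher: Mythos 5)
Your inclusion step is fine, and your fallback route -- the bounded-promptness formula (\ref{ex:bounded_promp}) together with its known non-expressibility in \(\HLTL\) \cite{bozzelli2015unifying} and the equivalence \(\cTraceTFOL \equiv \HLTL\) \cite{bartocci2022flavors} -- is exactly the paper's proof. But your primary witness does not work. The formula \(\varphi_u = \exists\traceVar\ \exists i\ p(\traceVar,i)\) does have \(\generatedSet{\varphi_u}\) equal to the collection of \emph{all} trace sets, yet that hyperproperty \emph{is} \(\cTraceTFOL\)-definable (and \(\HLTL\)-definable): any formula with a valid matrix under a universal constrained quantifier, e.g.\ \(\constrQ{\forall}{\traceVar}\ \forall i\ (p(\traceVar,i) \vee \neg p(\traceVar,i))\), is satisfied by every trace set, including the empty one. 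Your argument that ``adding an isolated trace that makes the LTL matrix fail'' rules out all-universal formulas breaks down precisely when the matrix is a tautology, which is the one case you must exclude; hence the claim that no \(\HLTL\) formula has exactly the set of all trace sets as its models is false, and the separation via \(\varphi_u\) collapses.

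The underlying issue is that an unconstrained existential whose variable appears only in a trivially satisfiable way adds nothing at the level of definable hyperproperties. To separate the logics you need a property in which the unconstrained trace interacts non-trivially with the constrained quantifiers -- which is what bounded promptness provides: the unconstrained \(\exists\traceVar\) guesses a synchronization point that every constrained trace must respect, and this is the property known not to be \(\HLTL\)-expressible. Commit to that witness and cite the inexpressibility result, and the proof is complete and coincides with the paper's.
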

\begin{proof}
Clearly, \(\cTraceTFOL\) is a fragment of \(\TraceTFOL\).
To prove that  \(\cTraceTFOL\) is not equally expressive to \(\TraceTFOL\), we observe that bounded promptness (\ref{ex:bounded_promp}) is not expressible in \(\HLTL\) \cite{bozzelli2015unifying} and, from \(\cTraceTFOL\) being equally expressive to \(\HLTL\) \cite{bartocci2022flavors}, it is also not expressible in \(\cTraceTFOL\).
\end{proof}

We can also use \(\cTraceTFOL\) to get our first undecidability result for \(\TraceTFOL\).

\begin{proposition}
\label{thm:undec_traceprefix}
Let \(\varphi \mathop{=} 
\constrQ{\forall}{\traceVar_1}\ \constrQ{\forall}{\traceVar_2}\
\constrQ{\exists}{\traceVar'_1}\
\psi\) where \(\psi\) has only time quantifiers.
It is undecidable to check whether \(\generatedSet{\varphi}\mathop{\neq}\emptyset\).
\end{proposition}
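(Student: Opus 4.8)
The statement asserts undecidability of satisfiability for the fragment $\constrQ{\forall}{\traceVar_1}\,\constrQ{\forall}{\traceVar_2}\,\constrQ{\exists}{\traceVar'_1}\,\psi$ with $\psi$ containing only time quantifiers. The plan is to reduce from the $\forall\forall\exists$ satisfiability problem for $\HLTL$, which is known to be undecidable by Finkbeiner and Hahn~\cite{satHyperLTL16} (this is exactly the reference cited in Table~\ref{tab:results} for this row). Since $\psi$ uses only time quantifiers over binary predicates $\BinaryP_a(\traceVar,\timeVar)$ and the linear order $<$ on time, the quantifier-free-over-traces body $\psi$ corresponds precisely to an $\LTL$ formula (or rather an $\FOLOrder$ formula over the traces named $\traceVar_1,\traceVar_2,\traceVar'_1$, which by Theorem~\ref{thm:ltl_FOOrder} is equivalent to an $\LTL$ formula over the product alphabet indexed by those three trace variables). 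Hence a formula of the shape in the statement is, up to this standard correspondence, exactly a $\forall\forall\exists$ $\HLTL$ formula, and the claim reduces to citing the undecidability of the latter.

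\textbf{Key steps, in order.} First I would recall that in the excerpt's setting constrained trace quantifiers range over the model $\setTraces$ (a set of traces), which is exactly the semantics of $\HLTL$ trace quantification over the trace set of a system; so a model of $\varphi$ is nothing but a set of traces satisfying the corresponding $\HLTL$ sentence. Second, I would make precise the translation between the time-quantified body $\psi(\traceVar_1,\traceVar_2,\traceVar'_1)$ and an $\LTL$ formula: using the derived linear order $<$ on $\nat$ and the predicates $\BinaryP_a$, invoke Theorem~\ref{thm:ltl_FOOrder} (Kamp/Gabbay et al.) applied to the ``stacked'' trace obtained by pairing the three named traces into a single trace over the alphabet $\Bool^{\{a^{(k)} : a\in\Prop,\ k\in\{1,2,3\}\}}$, so that $\FOLOrder[<]$ over this stacked trace is equivalent to $\LTL$. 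This yields, for every $\psi$, an $\LTL$ formula $\psi_{\LTL}$ with $\setTraces\modelsTFOL\varphi$ iff $\setTraces\models_{\HLTL}\constrQ{\forall}{\traceVar_1}\constrQ{\forall}{\traceVar_2}\constrQ{\exists}{\traceVar'_1}\,\psi_{\LTL}$, and conversely any $\HLTL$ $\forall\forall\exists$ sentence arises this way. Third, I would invoke the undecidability of $\HLTL$ satisfiability for the $\forall\forall\exists$ fragment~\cite{satHyperLTL16}; combined with the (effective, polynomial) back-and-forth translation, this transfers undecidability to our fragment.

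\textbf{Main obstacle.} The only subtlety is the direction of the equivalence and making sure the translation is effective and preserves satisfiability exactly rather than merely truth in a fixed model: one must check that every $\forall\forall\exists$ $\HLTL$ formula can be written as some $\psi$ in our grammar (this is the easy direction of Kamp's theorem: $\LTL\subseteq\FOLOrder[<]$, which embeds directly since $<$ and the $\BinaryP_a$ are available), and that the trace-quantifier block matches exactly the $\HLTL$ quantifier prefix under the ``constrained'' semantics. A secondary point worth a sentence is that $\HLTL$ is interpreted over models that may be required to be nonempty trace sets; but the $\forall\forall\exists$ undecidability proof in~\cite{satHyperLTL16} already concerns nonempty models, so no adjustment is needed. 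Once these bookkeeping points are settled, undecidability follows immediately.
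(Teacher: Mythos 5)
Your proposal is correct and takes essentially the same route as the paper: both reduce from the undecidability of $\forall\forall\exists$ $\HLTL$ satisfiability~\cite{satHyperLTL16}, using the correspondence between an $\LTL$ body over the named traces and a time-quantified $\FOLOrder$ body with the same $\constrQ{\forall}{}\constrQ{\forall}{}\constrQ{\exists}{}$ prefix. The only difference is presentational: the paper delegates the body translation to the computable equivalence between $\cTraceTFOL$ and $\HLTL$ from~\cite{bartocci2022flavors}, whereas you unfold that equivalence directly via Kamp/Gabbay on the stacked traces, which is the same underlying argument.
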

\begin{proof}
We prove by reduction of the \(\HLTL\) satisfiability problem for formulas with quantifier pattern \(\forall\forall\exists\) to the problem of checking for satisfiability of hypertrace formulas with quantifier pattern \(\constrQ{\forall}{}\ \constrQ{\forall}{}\ \constrQ{\exists}{}\).
Consider an arbitrary \(\HLTL\) formula \(\varphi\mathop{=} \forall \traceVar_1 \forall \traceVar_2 \exists \traceVar_1' \psi\) where \(\psi\) has only temporal operators.
By \(\cTraceTFOL\) being expressively equivalent to \(\HLTL\), there exists \(\varphi'\mathop{\in}\cTraceTFOL\) s.t.\
\(\generatedSet{\varphi}\mathop{=}\generatedSet{\varphi'}\).
Using the computable translation in \cite{bartocci2022flavors}, 
\(\varphi'\mathop{=} \constrQ{\forall}{\traceVar_1}\ \constrQ{\forall}{\traceVar_2}\ \constrQ{\exists}{\traceVar'_1} \psi'\) where \(\psi'\) has only temporal quantifiers.
From \cite{satHyperLTL16,keysdecideHyperltl20}, it is undecidable to check for satisfiability of \(\HLTL\) formulas with quantifier pattern \(\forall\forall\exists\).
Hence, it is also not possible to decide \(\generatedSet{\varphi}\mathop{\neq}\emptyset\).
\end{proof}

\vspace{-2mm}
\subsection{Satisfiability}
\label{sec:trace:sat}

From Theorem \ref{thm:toUnconstrained} and Corollary \ref{cor:unconst:dec}, we get our first fragment of \(\TraceTFOL\) with a decidable satisfiability problem:
the fragment where constrained trace quantifiers is of shape \(\exists^* \forall^*\).
This matches the known decidable fragment for \(\HQPTL\) \cite{Hahn21}.

\begin{corollary}
\label{thm:trace_prefic_dec}
Let \(\varphi \mathop{=} \overrightarrow{\QExists_{\traceS}} \
\constrQ{\exists}{\traceVar_1} \LLL \constrQ{\exists}{\traceVar_n}\
\constrQ{\forall}{\traceVar'_1}\LLL \constrQ{\forall}{\traceVar'_m}\
\overrightarrow{\Quant_{\traceS}} \
\overrightarrow{\Quant_{\timeS}} \
\varphi_{\tfree}\)
be a hypertrace formula in prenex normal form s.t.\ 
\(\overrightarrow{\QExists_{\traceS}}\) is a sequence of existential unconstrained trace quantifiers, \(\overrightarrow{\Quant_{\traceS}}\) and 
\(\overrightarrow{\Quant_{\timeS}}\) are any combination of time and unconstrained trace quantifiers, respectively, and \(\varphi_{\tfree}\) is a quantifier-free.
It is decidable to check whether \(\generatedSet{\varphi}\neq \emptyset\).
\end{corollary}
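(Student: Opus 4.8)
The plan is to derive the corollary as an immediate consequence of the results already in hand: eliminate all constrained trace quantifiers with Theorem~\ref{thm:toUnconstrained} and then invoke the decidability of unconstrained hypertrace logic, Corollary~\ref{cor:unconst:dec}. First I would check that $\varphi$ fits the hypotheses of Theorem~\ref{thm:toUnconstrained}. Its constrained trace prefix is $\constrQ{\exists}{\traceVar_1}\LLL\constrQ{\exists}{\traceVar_n}\constrQ{\forall}{\traceVar'_1}\LLL\constrQ{\forall}{\traceVar'_m}$, i.e.\ a (possibly empty) block of existential constrained quantifiers followed by a (possibly empty) block of universal constrained quantifiers, which is exactly the pattern the theorem treats. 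The leading block $\overrightarrow{\QExists_{\traceS}}$ of existential unconstrained trace quantifiers is a special case of the $\overrightarrow{\QExists}$ block allowed by the theorem (existential time or unconstrained trace quantifiers), and the tail $\overrightarrow{\Quant_{\traceS}}\,\overrightarrow{\Quant_{\timeS}}\,\varphi_{\tfree}$ --- unconstrained trace quantifiers followed by time quantifiers in front of a quantifier-free matrix --- is a special case of the ``any combination of time or unconstrained trace quantifiers followed by a quantifier-free formula'' tail the theorem allows.

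Having verified this, Theorem~\ref{thm:toUnconstrained} yields an unconstrained hypertrace formula $\varphi_u$ with $\generatedSet{\varphi}\neq\emptyset$ iff $\generatedSet{\varphi_u}\neq\emptyset$; moreover $\varphi_u$ is effectively computable from $\varphi$, being the composition of the syntactic rewrite $\flatAll$ of Lemma~\ref{lemma:removeAll} with the existential-to-unconstrained rewrite of Lemma~\ref{lemma:removeExists}. Note that $\varphi_u$ need not be in prenex normal form --- $\flatAll$ introduces a finite conjunction in front of the tail quantifiers --- but Corollary~\ref{cor:unconst:dec} applies to arbitrary unconstrained hypertrace formulas, so this is harmless. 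Applying Corollary~\ref{cor:unconst:dec} then decides $\generatedSet{\varphi_u}\neq\emptyset$, hence $\generatedSet{\varphi}\neq\emptyset$, which is the claim. As a sanity check, one sees that this recovers the known decidable $\exists^*\forall^*$ fragment of $\HQPTL$~\cite{Hahn21}: the constrained prefix is exactly $\exists^*\forall^*$, while the remaining unconstrained and temporal quantifiers are absorbed into the $\SOneS$ translation of Theorem~\ref{thm:equiv:unconstrained-s1s}.

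I do not expect a genuine obstacle here; the substance of the result is entirely inherited from Lemmas~\ref{lemma:removeAll} and~\ref{lemma:removeExists}, Theorem~\ref{thm:toUnconstrained}, Theorem~\ref{thm:equiv:unconstrained-s1s}, and Corollary~\ref{cor:unconst:dec}, all of which are already established. The only point that needs care is the purely syntactic bookkeeping that the corollary's quantifier pattern $\overrightarrow{\QExists_{\traceS}}(\constrQ{\exists}{})^*(\constrQ{\forall}{})^*\overrightarrow{\Quant_{\traceS}}\,\overrightarrow{\Quant_{\timeS}}$ is genuinely subsumed by the more permissive pattern of Theorem~\ref{thm:toUnconstrained}, so that no strengthening of the earlier statements is required.
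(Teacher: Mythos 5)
Your proposal is correct and coincides with the paper's own derivation: the paper obtains this corollary exactly by observing that the stated quantifier pattern is an instance of the hypotheses of Theorem~\ref{thm:toUnconstrained} and then invoking Corollary~\ref{cor:unconst:dec}. Your additional remark that the rewritten formula need not be prenex (because $\flatAll$ inserts a conjunction) but that Corollary~\ref{cor:unconst:dec} applies to arbitrary unconstrained formulas is a useful piece of bookkeeping the paper leaves implicit.
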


Unconstrained trace quantifiers not only extend the expressivity of \(\TraceTFOL\) compared to \(\cTraceTFOL\) but can also be used to seamlessly define a semi-decision procedure to check for the unsatisfiability of \(\cTraceTFOL\) formulas.
In the proposition below, we prove that removing constraints from existentially quantified trace variables preserves the models of the constrained formula.
We can use the contrapositive of this proposition to determine the unsatisfiability of constrained hypertrace formulas.

\begin{proposition}
\label{prop:sat:constr_to_uncostr}
Let \(\varphi \mathop{=} \constrQ{\forall}{\traceVar_0}\ \constrQ{\exists}{\traceVar'_0}  \ldots \constrQ{\forall}{\traceVar_k}\ \constrQ{\exists}{\traceVar'_k}\ \psi\) be a hypertrace formula where \(\psi\) is quantifier-free, and \(\setTraces\) be a set of traces.
If \(\setTraces \modelsTFOL \varphi\), then \(\setTraces \modelsTFOL\constrQ{\forall}{\traceVar_0}\ \exists \traceVar'_0  \ldots \constrQ{\forall}{\traceVar_k}\ \exists \traceVar'_k\ \psi\).
\end{proposition}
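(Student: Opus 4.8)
The plan is to show that any model of the constrained formula $\varphi$ is also a model of the relaxed formula $\varphi' = \constrQ{\forall}{\traceVar_0}\ \exists \traceVar'_0 \ldots \constrQ{\forall}{\traceVar_k}\ \exists \traceVar'_k\ \psi$ by directly reusing the witnesses for the existential quantifiers. The key observation is that weakening an existential trace quantifier from $\constrQ{\exists}{\traceVar}$ (range over $\setTraces$) to $\exists \traceVar$ (range over all of $(\AllVals)^\omega$) only \emph{enlarges} the set of available witnesses, while leaving the universal constrained quantifiers and the quantifier-free matrix $\psi$ untouched. Since the structure $\overline{\setTraces}$ is the same in both cases, and in particular the interpretation of every $\BinaryP_a$ and of the predicate $\TrPred$ is unchanged, any assignment that satisfies $\psi$ for $\varphi$ also satisfies it for $\varphi'$.

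Concretely, I would prove by induction on $k$ (or, more uniformly, by induction on the quantifier prefix read from the outside in) the following strengthened statement: for every trace assignment $\AssignT$ defined on the already-quantified variables, and every time assignment $\AssignN$, if
\[
(\overline{\setTraces}, (\AssignT, \AssignN)) \modelsTFOL \constrQ{\forall}{\traceVar_j}\ \constrQ{\exists}{\traceVar'_j}\ \ldots\ \constrQ{\forall}{\traceVar_k}\ \constrQ{\exists}{\traceVar'_k}\ \psi,
\]
then
\[
(\overline{\setTraces}, (\AssignT, \AssignN)) \modelsTFOL \constrQ{\forall}{\traceVar_j}\ \exists \traceVar'_j\ \ldots\ \constrQ{\forall}{\traceVar_k}\ \exists \traceVar'_k\ \psi.
\]
The base case is the quantifier-free matrix $\psi$, where the two formulas coincide and there is nothing to prove. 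For the inductive step, unfold the outermost $\constrQ{\forall}{\traceVar_j}$: for \emph{every} $\trace_j \mathop{\in} \setTraces$ the remaining formula (starting with $\constrQ{\exists}{\traceVar'_j}$) holds under $\AssignT[\traceVar_j \mapsto \trace_j]$. By definition of $\modelsTFOL$ for $\constrQ{\exists}{\traceVar'_j}$, there is some $\trace'_j \mathop{\in} \setTraces$ making the tail formula true; that same $\trace'_j$ also witnesses the weakened $\exists \traceVar'_j$ (since $\trace'_j \mathop{\in} \setTraces \mathop{\subseteq} (\AllVals)^\omega$), and the induction hypothesis applied to $\AssignT[\traceVar_j \mapsto \trace_j, \traceVar'_j \mapsto \trace'_j]$ converts the remaining constrained-existential prefix into the unconstrained one. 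Re-wrapping with $\forall \traceVar_j$ ranging over $\setTraces$ (i.e.\ $\constrQ{\forall}{\traceVar_j}$) gives the claim. Applying this with $j = 0$ and the empty assignment yields the proposition, since $\setTraces \modelsTFOL \varphi$ means exactly that $(\overline{\setTraces}, (\emptyset, \emptyset))$ — or some witnessing pair — satisfies $\varphi$.

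I do not expect any real obstacle here: the statement is essentially the monotonicity of satisfaction under enlarging the range of an existential quantifier, combined with the fact that replacing $\constrQ{\exists}{}$ by $\exists$ does not change the first-order structure at all (it only changes the witness domain in the clause for that quantifier). The only point that needs a little care is bookkeeping the assignments as the induction peels off quantifier pairs, and noting that the universal constrained quantifiers $\constrQ{\forall}{\traceVar_j}$ are identical on both sides so they transfer trivially. One should also remark, for cleanliness, that $\psi$ being quantifier-free means its truth value depends only on $(\AssignT, \AssignN)$ and the fixed interpretations in $\overline{\setTraces}$, so no subtlety arises from nested quantifier scopes. The contrapositive then gives the advertised semi-decision procedure for unsatisfiability of $\cTraceTFOL$ formulas of this shape.
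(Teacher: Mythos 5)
Your proposal is correct and rests on exactly the same observation as the paper's (one-line) proof: since \(\setTraces \subseteq (\AllVals)^\omega\), every witness for a constrained existential \(\constrQ{\exists}{\traceVar'_j}\) is also a witness for the unconstrained \(\exists \traceVar'_j\), while the structure \(\overline{\setTraces}\) and the universal quantifiers are unchanged. Your induction on the quantifier prefix is just a more explicit bookkeeping of that monotonicity argument.
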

\begin{proof}
The statement follows directly from \(\setTraces\) being a subset of all possible traces and, thus, any assignment over \(\setTraces\) is also an assignment over the set of all traces.
\end{proof}

\subsection{Equivalence to \(\HQPTL\)}
\label{sec:trace_hqptl}

We prove that, for sets of infinite traces, trace-prefixed hypertrace logic is expressively equivalent to \(\HQPTL\)~\cite{Rabe2016}.
\(\HQPTL\) formulas \(\varphi\) are defined by the grammar:
\begin{align*}
\psi ::= &  \ \Next \psi \ | \ \psi \Until \psi \ | \ \neg \psi \ |\ \psi \vee \psi \ | \ q \ |\ a_{\traceVar} \hspace{10mm}
\varphi ::=  \  \exists q \, \varphi \ |\ \forall q\, \varphi \ |\ \exists \traceVar\, \varphi \ | \ \forall \traceVar \, \varphi \ |\  \psi
\end{align*}
where \(\traceVar \mathop{\in} \VarTrace\) is a trace variable  and  \(q,a \mathop{\in} \Prop\) are propositional variables, where \(\VarTrace\mathop{\cap}\Prop\mathop{=}\emptyset\).
We evaluate  \(\HQPTL\) formulas over a set of traces and a trace assignment 
\(\traceAssign\As \VarTrace \rightarrow (\AllVals)^\omega\), relative to a time point $i\in\nat$:
\begin{align*}
    &(\traceAssign,{\setTraces}, i) \modelsHyper \exists  q \ \psi
	\tIff 
	\text{there exists } \trace \mathop{\in} (\Bool^{\{q\}})^\omega: (\traceAssign[\pi_q \mapsto \trace],{\setTraces}, i) \modelsHyper \psi;\\
	&(\traceAssign,{\setTraces}, i) \modelsHyper \forall q \ \psi 
	\tIff 
	\text{for all } \trace \mathop{\in}  (\Bool^{\{q\}})^\omega: (\traceAssign[\pi_q \mapsto \trace],{\setTraces}, i) \modelsHyper \psi;\\
	&(\traceAssign, {\setTraces}, i) \modelsHyper \exists \pi \ \psi
	\tIff 
	\text{there exists } \trace \mathop{\in} \setTraces: (\traceAssign[\pi \mapsto \trace], {\setTraces}, i) \modelsHyper \psi;\\
	&(\traceAssign,{\setTraces}, i) \modelsHyper \forall \pi \ \psi 
	\tIff 
	\text{for all } \trace \mathop{\in} \setTraces: (\traceAssign[\pi \mapsto \trace],{\setTraces}, i) \modelsHyper \psi;\\
    &(\traceAssign,{\setTraces}, i) \modelsHyper q 
	\tIff  \  
	q\mathop{\in}\traceAssign(\pi_q)[i];\hspace{40mm}
	(\traceAssign,{\setTraces}, i) \modelsHyper a_{\pi} 
	\tIff  \  
	a\mathop{\in}\traceAssign(\pi)[i];\\
	&(\traceAssign,{\setTraces}, i) \modelsHyper \neg  \psi 
	\tIff \ 
	(\traceAssign,{\setTraces}, i) \notmodelsHyper  \psi;\\
	&(\traceAssign,{\setTraces}, i) \modelsHyper  \psi_1 \vee  \psi_2 \tIff \ 
	(\traceAssign,{\setTraces}, i) \modelsHyper  \psi_1 \tOr (\traceAssign,{\setTraces}, i) \modelsHyper  \psi_2;\\
	&(\traceAssign,{\setTraces}, i) \modelsHyper \Next  \psi \tIff \  (\traceAssign,{\setTraces}, i+1) \modelsHyper  \psi;\\
	&(\traceAssign,{\setTraces}, i) \modelsHyper  \psi_1 \Until  \psi_2 \tIff 
	\text{exists } j \geq i:\! (\traceAssign,{\setTraces}, j) \modelsHyper  \psi_2 
    \text{ and  all } i \leq j' < j\ (\traceAssign,{\setTraces}, j') \modelsHyper  \psi_1.
\end{align*}
A set \(\setTraces\) of traces is a model of a \(\HQPTL\) formula \(\varphi\), denoted \(\setTraces \modelsHyper \varphi\),
iff there exists an assignment \(\traceAssign\) such that \((\traceAssign,{\setTraces},0) \modelsHyper \varphi\).
%
%
For all closed formulas  \(\varphi\), \(\setTraces \modelsHyper \varphi\) iff \((\emptyAssign, {\setTraces},0) \modelsHyper \varphi\), 
where \(\emptyAssign\) is the empty assignment.
%

\begin{remark}
In this work, we interpret \(\HQPTL\) formulas as introduced in  \cite{Rabe2016,hierarchyHyper19}.
An alternative definition found in the literature uniformly instantiates quantified propositions across the set of traces used as the model \cite{Hahn21,hyperOmegaReg20,LTLTeam21,hyperqptlPluscomplexity24}.
The semantics we adopt in this work subsumes the uniform interpretation; that is, we can simulate the uniform interpretation by rewriting the \(\HQPTL\) formula interpreted under the uniform semantics.
\end{remark}

For a set of traces \(\setTraces\) and an assignment \(\traceAssign\) for
variables in \(\mathcal{V}\), we define its \emph{flattening} to a trace as:
\(a_{\traceVar}\mathop{\in}\flatT{\traceAssign}[i] \tIff a\mathop{\in}\traceAssign(\pi)[i]\)
and
\(q\mathop{\in}\flatT{\traceAssign}[i] \tIff q\mathop{\in}\traceAssign(\traceVar_q)[i]\)
for all \(\timeVar \mathop{\in}\nat\), trace variables \(\traceVar\in \Var\) and propositional variables \(a,q\mathop{\in} \Prop\).
A trace assignment satisfies a quantifier-free \(\HQPTL\) formula iff its flattening satisfies the same formula under the \(\LTL\) semantics.
%

\begin{proposition}\label{prop:zipping}
	Let \(\varphi\) a quantifier-free \(\HQPTL\) formula.
	For all  \(i\in\nat\), all trace sets \(\setTraces\), and all trace assignments \(\traceAssign\),
	\((\traceAssign, {\setTraces},i) \modelsHyper \varphi\) iff \(\flatT{\traceAssign}[i \ldots ] \modelsLTL \varphi.\)
\end{proposition}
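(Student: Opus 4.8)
The plan is to proceed by structural induction on the quantifier-free \(\HQPTL\) formula \(\varphi\). The point is that the flattening \(\flatT{\traceAssign}\) is defined precisely so that \(\HQPTL\) atoms are read off \(\flatT{\traceAssign}[i]\) exactly as \(\LTL\) atoms: by definition \(a_\traceVar\mathop{\in}\flatT{\traceAssign}[i]\) iff \(a\mathop{\in}\traceAssign(\traceVar)[i]\), and \(q\mathop{\in}\flatT{\traceAssign}[i]\) iff \(q\mathop{\in}\traceAssign(\traceVar_q)[i]\). I first observe that \(\Var\) is assumed to contain every free trace variable of \(\varphi\), so \(\flatT{\traceAssign}\) is well-defined on all atoms occurring in \(\varphi\), and that \(\HQPTL\) traces are infinite, so all suffixes used below are well-defined.

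For the base cases, take an atom \(a_\traceVar\): \((\traceAssign,\setTraces,i)\modelsHyper a_\traceVar\) iff \(a\mathop{\in}\traceAssign(\traceVar)[i]\) iff \(a_\traceVar\mathop{\in}\flatT{\traceAssign}[i]\) iff \(\flatT{\traceAssign}[i\ldots]\modelsLTL a_\traceVar\), the last step being the \(\LTL\) semantics of atoms together with \((\flatT{\traceAssign}[i\ldots])[0]=\flatT{\traceAssign}[i]\); the atom \(q\) is handled identically via the second clause of the flattening. The boolean cases \(\neg\psi\) and \(\psi_1\vee\psi_2\) follow immediately from the induction hypothesis, since \(\modelsHyper\) and \(\modelsLTL\) treat negation and disjunction the same way.

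The temporal cases rest on the elementary suffix identity \(\flatT{\traceAssign}[i\ldots][k\ldots]=\flatT{\traceAssign}[(i+k)\ldots]\), valid for all \(k\mathop{\in}\nat\). For \(\Next\psi\): \(\flatT{\traceAssign}[i\ldots]\modelsLTL\Next\psi\) iff \(\flatT{\traceAssign}[(i+1)\ldots]\modelsLTL\psi\) iff (induction hypothesis) \((\traceAssign,\setTraces,i+1)\modelsHyper\psi\) iff \((\traceAssign,\setTraces,i)\modelsHyper\Next\psi\). For \(\psi_1\Until\psi_2\): unfolding the \(\LTL\) semantics at \(\flatT{\traceAssign}[i\ldots]\) and rewriting the suffixes with the identity yields the existence of \(k\mathop{\geq}0\) with \(\flatT{\traceAssign}[(i+k)\ldots]\modelsLTL\psi_2\) and \(\flatT{\traceAssign}[(i+k')\ldots]\modelsLTL\psi_1\) for all \(0\mathop{\leq}k'\mathop{<}k\); applying the induction hypothesis to each conjunct and substituting \(j=i+k\) gives exactly the \(\HQPTL\) clause for \(\Until\) at time \(i\).

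There is no genuine obstacle here; the only thing needing a little care is reconciling the reset-to-zero convention of the \(\LTL\) satisfaction relation with the current-time-\(i\) convention of \(\HQPTL\), which is precisely what the suffix identity above accomplishes.
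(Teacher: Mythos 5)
Your proof is correct: the structural induction on the quantifier-free formula, with the base cases unwound via the definition of the flattening and the temporal cases handled through the suffix identity $\flatT{\traceAssign}[i\ldots][k\ldots]=\flatT{\traceAssign}[(i+k)\ldots]$, is exactly the routine argument needed here. The paper in fact states Proposition~\ref{prop:zipping} without proof, treating it as immediate from the definitions, so your write-up simply supplies the omitted standard induction.
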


\begin{theorem}
	\label{thm:tracepre_hqptl}
	For all \(\HQPTL\) sentences \(\varphi_H\) there exists a trace-prefixed hypertrace sentence \(\varphi\) such that
	for all sets of infinite traces \(\setTraces \mathop{\subseteq} (\AllVals)^\omega\),
	 \(\setTraces \modelsHyper \varphi_H\) iff \(\setTraces \modelsTFOL \varphi\).
	For all trace-prefixed hypertrace sentences \(\varphi\) there exists a \(\HQPTL\) sentence \(\varphi_H\) such that
	for all sets of infinite traces \(\setTraces \mathop{\subseteq} (\AllVals)^\omega\),
	 \(\setTraces \modelsHyper \varphi_H\) iff
	\(\setTraces  \modelsTFOL \varphi\).
\end{theorem}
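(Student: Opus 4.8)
The plan is to prove the two directions separately, in each case translating syntactically by induction on the quantifier prefix and invoking Proposition~\ref{prop:zipping} to handle the quantifier-free core. For the direction from \(\HQPTL\) to \(\TraceTFOL\): given \(\varphi_H\), first push it into prenex form (legal for \(\HQPTL\) since both trace and propositional quantifiers commute past the Boolean connectives appropriately), obtaining a block of trace and propositional quantifiers followed by an \(\LTL\) matrix \(\psi\). I would translate each trace quantifier \(\exists\traceVar\) (resp.\ \(\forall\traceVar\)) to the constrained hypertrace quantifier \(\constrQ{\exists}{\traceVar}\) (resp.\ \(\constrQ{\forall}{\traceVar}\)), and each propositional quantifier \(\exists q\) to an \emph{unconstrained} trace quantifier \(\exists\traceVar_q\) (with \(\forall q\) going to \(\forall\traceVar_q\)); this is exactly the correspondence already used in \cref{sec:unconstr}, where quantifying a fresh trace over which only the predicate \(q\) is observed simulates propositional quantification. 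The matrix \(\psi\), an \(\LTL\) formula over atoms \(a_\traceVar\) and \(q\), is translated to a quantifier-free \(\FOLOrder\)-style formula with one free time variable using the standard Kamp/Gabbay translation (Theorem~\ref{thm:ltl_FOOrder}), reading \(a_\traceVar\) as \(a(\traceVar,i)\) and \(q\) as \(q(\traceVar_q,i)\); prefixing \(\exists i\) (with \(i\) the ``time~0'' point, pinned by the definable predicate \(i=0\)) and closing gives a trace-prefixed hypertrace sentence. Correctness of the matrix translation is Proposition~\ref{prop:zipping} composed with the \(\LTL\)-to-\(\FOLOrder\) equivalence, and correctness of the quantifier translation is a routine induction unwinding both \(\models_{\mathtt{HQ}}\) and \(\modelsTFOL\), using for the propositional case that an unconstrained trace ranging over \((\Bool^{\{q\}})^\omega\) has precisely the support sets that \(\exists q\) ranges over.

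For the converse, from \(\TraceTFOL\) to \(\HQPTL\): a trace-prefixed hypertrace sentence is, by the grammar, a block of quantifiers \(\exists\traceVar\) and \(\constrQ{\exists}{\traceVar}\) (together with negations, hence also \(\forall\traceVar\) and \(\constrQ{\forall}{\traceVar}\)) in front of a formula \(\psi\) whose only quantifiers are over time. I would first normalize \(\psi\): it is a first-order formula over \((\nat,<)\) with monadic predicates \(a(\traceVar,\cdot)\) for the finitely many trace variables \(\traceVar\) bound in the prefix, so by Theorem~\ref{thm:ltl_FOOrder} (Kamp/Gabbay) it is equivalent, as a one-free-variable \(\FOLOrder\) formula evaluated at \(0\), to an \(\LTL\) formula over the atoms \(a_\traceVar\); equivalently one may appeal to Proposition~\ref{prop:zipping} in reverse. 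Then each constrained trace quantifier \(\constrQ{\exists}{\traceVar}\) maps to the \(\HQPTL\) trace quantifier \(\exists\traceVar\), and each unconstrained \(\exists\traceVar\) maps to propositional quantifiers \(\exists a_1\cdots\exists a_k\) over all propositions in \(\Prop\) (introducing a fresh propositional copy for each, so the newly quantified trace's valuation of every variable is free) — this is the \(\rewriteFlat\)-style observation that an unconstrained trace variable is just an independent vector of propositional quantifiers. Closing up yields the \(\HQPTL\) sentence, and correctness is again an induction on the prefix matching the two semantics clause by clause.

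The main obstacle I expect is the careful bookkeeping around the two mismatches between the formalisms: (i) \(\HQPTL\) evaluates relative to a distinguished time point (``now'', initially \(0\)) whereas hypertrace logic has explicit time quantification, so the translations must insert/strip an outermost time variable and correctly capture ``\(i=0\)'' (definable, as recalled in \cref{sec:S1S}); and (ii) the propositional-vs-unconstrained-trace correspondence must be set up so that names do not collide — a propositional variable \(q\) of \(\HQPTL\) becomes a trace variable \(\traceVar_q\) carrying only the \(q\)-coordinate, and conversely an unconstrained hypertrace variable over \(\Prop\) fans out into \(|\Prop|\) fresh propositions — and one must check that restricting an unconstrained trace's observable coordinate to a single proposition (as in the first direction) genuinely does not lose generality, which is exactly Lemma~\ref{lemma:rewrite:independent}. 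Everything else — prenexing, the Boolean and time-quantifier induction steps — is routine once Proposition~\ref{prop:zipping} and Theorem~\ref{thm:ltl_FOOrder} are in hand.
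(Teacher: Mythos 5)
Your proposal is correct and follows essentially the same route as the paper: trace quantifiers of \(\HQPTL\) correspond to constrained hypertrace quantifiers, propositional quantifiers correspond to unconstrained trace quantifiers carrying a single coordinate (and, in reverse, an unconstrained trace variable fans out into per-proposition quantifiers via \(\rewriteFlat\)/Lemma~\ref{lemma:rewrite:independent}), while the quantifier-free core is handled by Proposition~\ref{prop:zipping} together with the Kamp/Gabbay equivalence of \(\LTL\) and \(\FOLOrder\). The only cosmetic difference is that the paper's \(\HQPTL\) grammar is already prenex, so no normalization step is needed, and the paper manages the ``current time point'' by shifting traces and using an empty time assignment rather than pinning an explicit \(i=0\) variable.
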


\newcommand{\transl}{\mathtt{tr}}
\begin{proof}
We denote by \(\mathtt{LTLtoFO}(\psi)\) and \(\mathtt{FOtoLTL}(\psi)\) the translation from \(\LTL\) formulas to \(\FOLOrder\) and vice-versa given by 
 \(\LTL\) and \(\FOLOrder\) being equivalent \cite{gabbay1980temporal}.

We start with the translation from \(\HQPTL\) formulas to an equivalent \(\TraceTFOL\) formula. 
We first define the translation of \(\varphi\mathop{\in}\HQPTL\) for its different quantifiers:
\begin{equation}
\transl_H^{\text{quant}}(\varphi) =
\begin{cases}
 \varphi & \tIf \varphi \text{ is quantifier-free}\\
 \constrQ{\Quant}{\traceVar}\ \transl_H^{\text{quant}}(\varphi')  & \tIf \varphi\mathop{=} \Quant \traceVar\ \varphi', \Quant\mathop{\in}\{\forall,\exists\} \tAnd \traceVar\mathop{\in}\VarTrace\\
 \Quant \traceVar_q\ \transl_H^{\text{quant}}(\varphi') & \tIf \tIf \varphi\mathop{=} \Quant q\ \varphi', \Quant\mathop{\in}\{\forall,\exists\} \tAnd q\mathop{\in}\Prop
\end{cases}
\end{equation}
With \(\mathtt{LTLtoFO}(\psi)\) all propositional variables \(a_{\traceVar}\) in \(\psi\) are mapped to \(P_{a_{\traceVar}}(\timeVar)\), while all \(q\)
are mapped to \(P_{q}(\timeVar)\).
We define the substitution from these predicates to the binary predicates of \(\logic\):
$\sigma \mathop{=} \{P_{a_{\traceVar}}(i) \mapsto \BinaryP_a(\traceVar, \timeVar) \mid a\mathop{\in}\Prop,\traceVar\mathop{\in}\VarTrace , \timeVar\mathop{\in}\VarTime\} \cup
\{P_{q}(i) \mapsto \BinaryP_q(\traceVar_q, \timeVar) \mid q\mathop{\in}\Prop,\traceVar\mathop{\in}\VarTrace, \timeVar\mathop{\in}\VarTime\}
$.
The translation from \(\HQPTL\) formulas to \(\TraceTFOL\) is defined below, where \(\psi\mathop{\in}\LTL\),  \(x_i\mathop{\in}\VarTrace\mathop{\cup}\Prop\), for \(1\mathop{\leq} i\mathop{\leq}n\), and \(\Quant\mathop{\in}\{\forall, \exists\}\):
\begin{align}
\label{thm:tr_hqptl_hypertrae}
\hspace{-4mm}
\transl_H&(\Quant x_1\! \ldots\! \Quant x_n \ \psi) \mathop{=}
\transl_H^{\text{quant}}(\Quant x_1\! \ldots\! \Quant x_n \ ((\mathtt{LTLtoFO}(\psi))[\sigma])).
\end{align}
%
%
Given a trace assignment, \(\traceAssign\), and \(k\mathop{\in}\nat\) we denote \(\traceAssign_k\) any assignment satisfying
\(q\mathop{\in}\traceAssign_k(\traceVar_q)[i]\) iff \(q\mathop{\in}\traceAssign(\traceVar_q)[i+k]\), and otherwise, \(\traceAssign_k(\traceVar)\mathop{=}(\traceAssign(\traceVar))[k\ldots]\).
We prove in the extended version, using the equivalence between \(\LTL\) and \(\FOLOrder\) \cite{gabbay1980temporal} and Prop. \ref{prop:zipping}, that for all  \(\HQPTL\) formula \(\varphi\mathop{=}\Quant x_1 \ldots \Quant x_n \ \psi\) where
\(x_i\mathop{\in}\VarTrace\mathop{\cup}\Prop\), for \(1\mathop{\leq} i\mathop{\leq}n\), and \(\Quant\mathop{\in}\{\forall, \exists\}\):
\((\traceAssign,{\setTraces}, k)\modelsHyper \varphi\) iff
\((\overline{\setTraces[k\ldots]},(\traceAssign_k, \AssignN^{\emptyset}))\modelsTFOL \transl_{H}(\varphi)\), where \(\AssignN^{\emptyset}\) is the empty time assignment.

For the translation from trace-prefixed hypertrace formulas to \(\HQPTL\), we use Lemma~\ref{lemma:rewrite:independent}, and give a translation from flattened hypertrace formulas to \(\HQPTL\).
As before we define a substitution 
\(\sigma''\mathop{=}\{\BinaryP_a(\traceVar, \timeVar) \mapsto P_{a_{\traceVar}}(i) \mid a\mathop{\in}\Prop,\traceVar\mathop{\in}\VarTrace, \timeVar\mathop{\in}\VarTime\}\cup
\{\BinaryP_q(\traceVar_q, \timeVar) \mapsto  P_{q}(i) \mid q\mathop{\in}\Prop,\traceVar\mathop{\in}\VarTrace, \timeVar\mathop{\in}\VarTime\}\).
The translation is defined as:
\(\transl_{\traceS}(\varphi) \mathop{=}
\overrightarrow{\Quant} (\mathtt{FOtoLTL}(\psi'[\sigma'']))\) with 
\(\overrightarrow{\Quant} \psi'\mathop{=}\rewriteFlat(\varphi,\Prop,\emptyset)
\)
where \(\overrightarrow{\Quant}\) is a sequence of constrained and unconstrained trace quantifiers and \(\psi'\) is a hypertrace formula with no trace quantifiers.

We prove that for all hypertrace formulas \(\varphi\) with only free trace variables,
for all sets of traces \(\setTraces\) and assignments \(\AssignT\):
\((\overline{\setTraces},(\AssignT, \AssignN^{\emptyset}))\modelsTFOL \varphi\) iff
\((\AssignT',{\setTraces}, 0)\modelsHyper \transl_{\traceS}(\varphi)\), where
\(\AssignT'(\traceVar_q)=(\AssignT(\traceVar_q)[0]\mathop{\cap}\{q\}) (\AssignT(\traceVar_q)[1]\mathop{\cap}\{q\}) \ldots\).
For the base case, where \(\psi'\) has no trace quantifiers and all time quantifiers are bounded: 
\((\overline{\setTraces},(\AssignT, \AssignN^{\emptyset}))\modelsTFOL \psi'\) iff
\((\overline{\flatT{\AssignT}}, \AssignN^\emptyset) \models \psi'[\sigma']\), follows from an analogous proof from the previous case.
As the translation from \(\AssignT\) to \(\AssignT'\) does not change the valuation of \(q\) in the trace assigned to \(\traceVar_q\) and 
by \cite{gabbay1980temporal},
\((\overline{\flatT{\AssignT}}, \AssignN^\emptyset) \models \psi'[\sigma']\) iff
\(\flatT{\AssignT'}\modelsLTL\mathtt{FOtoLTL}(\psi'[\sigma'])\).
By Proposition \ref{prop:zipping}, it is equivalent to
\((\AssignT',{\setTraces}, 0)\modelsHyper \mathtt{FOtoLTL}(\psi'[\sigma'])\).
The induction cases (i.e., constrained and unconstrained quantifiers) follow from induction hypothesis and definitions.
\end{proof}

\section{Time-prefixed Hypertrace Logic}
\label{sec:time_prefix}

We consider now time-prefixed hypertrace logic, with formulas defined by the grammar:
\begin{align*}
    \varphi ::=  \exists \timeVar\ \varphi \ |\ \neg \varphi \ |\  \psi
	 \hspace{10mm}
     \psi &::= \exists \traceVar\, \psi \ |\  \constrQ{\exists}{\traceVar}\, \psi \ |\ \psi \vee \psi \ |\ \neg \psi \ |\  \timeVar < \timeVar\ |\  \timeVar = \timeVar\ |\   \BinaryP(\traceVar,\timeVar)
\end{align*}
where \(\traceVar\) is a trace variable,
\(\timeVar\) is a time variable and \(\BinaryP\) is a binary predicate.
%
%
%
To the best of our knowledge, there is no formalism in the literature allowing to specify hyperproperties by quantifying over time before traces, while supporting arbitrary time and trace quantifiers.
The {relative expressiveness} between the trace-prefix and time-prefix fragments, however, remains an open problem.
%

 


\newcommand{\mem}{\mathtt{mem}}
\newcommand{\toC}{\mathtt{to}}
\newcommand{\inc}{\mathtt{inc}}
\newcommand{\dec}{\mathtt{dec}}
\newcommand{\isZ}{\mathtt{isZero}}
\newcommand{\guess}{\mathtt{guess}}
\newcommand{\Stop}{\mathtt{guessed}}

\medskip
As for the previously studied fragments, we use Theorem \ref{thm:toUnconstrained}, 
and Corollary \ref{cor:unconst:dec}
to identify 
a fragment of the time-prefixed hypertrace logic with decidable satisfiability problem.

\begin{corollary}
\label{thm:time_pre_dec}
Let \(\varphi \mathop{=} \overrightarrow{\QExists_{\timeS}} \
\constrQ{\exists}{\traceVar_1} \LLL \constrQ{\exists}{\traceVar_n}\
\constrQ{\forall}{\traceVar'_1}\LLL \constrQ{\forall}{\traceVar'_m}\
\overrightarrow{\Quant_{\traceS}} \
\varphi_{\tfree}\)
be a hypertrace formula in prenex normal form s.t.\ 
\(\overrightarrow{\QExists_{\timeS}}\) is a sequence of existential time quantifiers, \(\overrightarrow{\Quant_{\traceS}}\) is any combination of time and unconstrained trace quantifiers, respectively, and \(\varphi_{\tfree}\) is a quantifier-free.
It is decidable to check whether \(\generatedSet{\varphi}\neq \emptyset\).\del{.}
\end{corollary}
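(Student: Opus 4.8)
The plan is to recognize $\varphi$ as a special case of the formulas covered by Theorem~\ref{thm:toUnconstrained} and then chain that theorem with the decidability of unconstrained hypertrace logic (Corollary~\ref{cor:unconst:dec}); this is the same recipe used for Corollary~\ref{thm:trace_prefic_dec}. First I would observe that a leading block $\overrightarrow{\QExists_{\timeS}}$ of existential time quantifiers is, in particular, ``any combination of existential time or unconstrained trace quantifiers'', and that a block $\overrightarrow{\Quant_{\traceS}}$ of time and unconstrained trace quantifiers is ``any combination of time or unconstrained trace quantifiers''. Hence $\varphi$ has exactly the prenex shape $\overrightarrow{\QExists}\ \constrQ{\exists}{\traceVar_1}\LLL\constrQ{\exists}{\traceVar_n}\ \constrQ{\forall}{\traceVar'_1}\LLL\constrQ{\forall}{\traceVar'_m}\ \overrightarrow{\Quant}\ \varphi_{\tfree}$ demanded by Theorem~\ref{thm:toUnconstrained}. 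The only things worth double-checking are that all the leading time quantifiers really are existential (so that they legitimately fall inside the $\overrightarrow{\QExists}$ prefix rather than forcing an unwanted universal time quantifier before the constrained block) and that no constrained quantifier appears after the $\constrQ{\forall}{}$ block.

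Second, I would apply Theorem~\ref{thm:toUnconstrained} to obtain an unconstrained hypertrace formula $\varphi_u$ with $\generatedSet{\varphi}\neq\emptyset$ iff $\generatedSet{\varphi_u}\neq\emptyset$, and make explicit that the map $\varphi\mapsto\varphi_u$ is effective: it is the composition of $\flatAll$ from Lemma~\ref{lemma:removeAll} (a finite conjunction indexed by the tuples $(j_1,\ldots,j_m)\in\{1,\ldots,n\}^m$, obtained by substitution into $\varphi_{\tfree}$) with the rewriting of Lemma~\ref{lemma:removeExists} that merely erases the constraint predicate $\TrPred$ from the remaining existential trace quantifiers, turning each $\constrQ{\exists}{\traceVar_i}$ into $\exists\traceVar_i$. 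Both steps are purely syntactic, so $\varphi_u$ is computable from $\varphi$.

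Finally, I would invoke Corollary~\ref{cor:unconst:dec}: nonemptiness of $\generatedSet{\varphi_u}$ is decidable (concretely, by Theorem~\ref{thm:equiv:unconstrained-s1s} it coincides with nonemptiness of $\generatedSet{\toMSO(\varphi_u,\Prop)}$, which is decidable by B\"uchi's theorem~\ref{thm:s1s:dec}). Composing the two steps yields the decision procedure for $\generatedSet{\varphi}\neq\emptyset$. There is no genuinely hard step; the whole content is the syntactic bookkeeping of the first step, i.e.\ confirming that the stated quantifier pattern lies within the scope of Theorem~\ref{thm:toUnconstrained}.
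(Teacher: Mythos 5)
Your proposal is correct and follows exactly the route the paper intends for this corollary: recognize the quantifier pattern as an instance of Theorem~\ref{thm:toUnconstrained}, obtain an equisatisfiable unconstrained formula via the (effective) rewrites of Lemmas~\ref{lemma:removeAll} and~\ref{lemma:removeExists}, and conclude with Corollary~\ref{cor:unconst:dec}. The paper states this derivation without further proof, so your additional bookkeeping (checking that the leading time quantifiers are existential and that no constrained quantifier follows the universal block) is exactly the right level of detail.
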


In our theorem below, we prove that the satisfiability problem for arbitrary time-prefixed formulas is undecidable.
We prove our result with a reduction from the (non)-halting problem for 2-counter Minsky machines.

A \emph{2-counter Minsky machine} is defined by a tuple \(\mathcal{M}\mathop{=}(Q,\Delta,\hat{q})\)
where \(Q\) is a finite set of states with \(\hat{q}\mathop{\in}Q\) being the initial state, 
and \(\Delta\mathop{\subseteq} Q \times \{1,2\} \times \{\inc,\dec,\isZ\} \times Q\) being the transition relation.
For all \(n,n'\mathop{\in}\nat\), we write:
\(n \xrightarrow{\isZ}n'\) iff \(n\mathop{=}n'\mathop{=}0\);
\(n \xrightarrow{\inc}n'\) iff \(n'\mathop{=}n+1\); and
\(n \xrightarrow{\dec}n'\) iff \(n'\mathop{=}n-1\).
We observe that \(\dec\) is only allowed when \(n>0\).
A configuration is a tuple with a state and the current value of the two counters.
Given one of the counters \(c\mathop{\in}\{1,2\}\) we refer to the \del{the} other counter as
\(\overline{c}\mathop{=}3-c\).
We say that there is a transition between two configurations \((q,n_1,n_2)\) and 
\((q',n'_1,n'_2)\) iff there exists \((q,c,op,q')\mathop{\in}\Delta\) s.t.\ 
\(n_c \xrightarrow{op}n'_c\) and, for the other counter, \(n_{\overline{c}}\mathop{=}n'_{\overline{c}}\).
A computation is a sequence of configurations connected by transitions.
It is undecidable to check whether an arbitrary 2-counter Minsky machine has an infinite computation \cite{minsky67}.

\begin{theorem}
\label{thm:time_prefix_undec}
Let \(\Quant\mathop{\in}\exists_{\timeS}
\forall_{\timeS}
\exists_{\timeS}^2
\forall_{\timeS}
\constrQ{\forall_{\traceS}}{}
(\constrQ{\exists_{\traceS}}{})^2
\exists_{\traceS}\) 
and \(\varphi\mathop{=}\Quant \varphi'\)
be a time-prefixed hypertrace formula where \(\varphi'\)
is quantifier-free.
It is undecidable to check whether \(\generatedSet{\varphi}\mathop{\neq}\emptyset\).
\end{theorem}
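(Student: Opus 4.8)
The plan is to reduce the non-halting problem for $2$-counter Minsky machines --- undecidable by \cite{minsky67} --- to the satisfiability of time-prefixed hypertrace formulas of the stated shape. Given a machine $\mathcal{M} = (Q, \Delta, \hat q)$, I would build a closed formula $\varphi_{\mathcal{M}}$ with quantifier prefix $\exists i_1\, \forall i_2\, \exists i_3\, \exists i_4\, \forall i_5\, \constrQ{\forall}{\traceVar_1}\, \constrQ{\exists}{\traceVar_2}\, \constrQ{\exists}{\traceVar_3}\, \exists \traceVar_4$ such that $\generatedSet{\varphi_{\mathcal{M}}} \neq \emptyset$ iff $\mathcal{M}$ admits an infinite computation. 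The key idea, as in the classical HyperLTL-style reductions, is to let each trace of a model encode a single configuration of $\mathcal{M}$; the explicit time quantifiers of hypertrace logic then let us compare the (unbounded) counter contents of two configuration-traces position by position, which is exactly what makes the problem non-regular and yields undecidability.

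\emph{Encoding.} Over a set of propositions consisting of one state proposition $s_q$ for each $q \in Q$, two counter propositions $c_1, c_2$, and a timestamp proposition $\mathtt{tick}$, a configuration $(q, n_1, n_2)$ is encoded by a trace in which $s_q$ holds at position $0$, the positions carrying $c_\ell$ encode $n_\ell$ in unary (as an initial segment of length $n_\ell$, or via a single marker at position $n_\ell$ --- whichever keeps the well-formedness check within the available time quantifiers), and $\mathtt{tick}$ holds at exactly one position $k$, identifying the trace as ``the $k$-th configuration''. The intended model for an infinite computation $C_0 \to C_1 \to \cdots$ is the set $\{t_k \mid k \in \nat\}$ where $t_k$ encodes $C_k$ with $\mathtt{tick}$ at position $k$.

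\emph{The formula.} $\varphi_{\mathcal{M}}$ conjoins: (i) \emph{well-formedness} --- every model trace has exactly one state proposition, correctly shaped $c_1, c_2$, and exactly one $\mathtt{tick}$; (ii) \emph{initialization} --- the trace with $\mathtt{tick}$ at position $0$ is in state $\hat q$ with both counters $0$; (iii) \emph{progress} --- every model trace $\traceVar_1$ has a model trace $\traceVar_2$ whose $\mathtt{tick}$ sits one position later and such that $(\mathrm{config}(\traceVar_1), \mathrm{config}(\traceVar_2))$ realizes some $(q, c, op, q') \in \Delta$: $s_q$ on $\traceVar_1$, $s_{q'}$ on $\traceVar_2$, counter $c$ updated according to $op \in \{\inc, \dec, \isZ\}$, and counter $\bar c = 3 - c$ copied verbatim; (iv) \emph{anchoring} --- every model trace not at $\mathtt{tick}$ $0$ has a model trace $\traceVar_3$ at the previous timestamp, so that following predecessors from any model trace reaches the initial configuration and following successors yields an infinite, strictly increasing sequence of timestamps, hence an infinite computation. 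The universal $\constrQ{\forall}{\traceVar_1}$ plays the role of ``the current configuration'', $\constrQ{\exists}{\traceVar_2}$ and $\constrQ{\exists}{\traceVar_3}$ supply its successor and predecessor, and the unconstrained $\exists \traceVar_4$ is used, exactly as in the bounded-promptness example, as a scratch trace carrying a single marker so as to point at a position without spending another time quantifier. The five time quantifiers are apportioned so that $i_1$ existentially witnesses the ``boundary'' position of the updated counter, $i_2$ universally verifies the prefix/suffix shape of that counter and the zero tests, $i_3, i_4$ existentially locate the two $\mathtt{tick}$ positions with $i_4$ the successor of $i_3$, and $i_5$ universally verifies the verbatim copy of the unchanged counter; state updates and the disjunction over $\Delta$ are folded into the quantifier-free matrix, sharing these variables.

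\emph{Correctness and the main obstacle.} One then checks both directions: if $C_0 \to C_1 \to \cdots$ is an infinite computation, the trace set $\{t_k\}$ satisfies every conjunct of $\varphi_{\mathcal{M}}$; conversely, from any (nonempty) model, the anchoring conjunct produces the initial configuration and the progress conjunct an infinite chain of valid transitions, i.e., an infinite computation of $\mathcal{M}$. The principal difficulty is combinatorial rather than conceptual: the progress conjunct inherently mixes one ``there is a boundary'' quantification with several ``for every position'' quantifications (for the counter shapes, the zero tests, and the verbatim copy), and all of this --- together with the timestamp bookkeeping and the anchoring check --- must be compressed into the rigid alternation $\exists\forall\exists\exists\forall$ over time and $\constrQ{\forall}{}\constrQ{\exists}{}\constrQ{\exists}{}\exists$ over traces. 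Making it fit requires choosing the configuration encoding so that well-formedness is cheap, reusing the two universal time variables across the $\Delta$-cases, and offloading one position-pointer onto the unconstrained trace variable; I expect this encoding-and-budgeting step to be where essentially all the work lies, the two implications of the reduction then being routine.
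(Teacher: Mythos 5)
Your high-level plan (reduce from non-halting of 2-counter Minsky machines, let each model trace encode one configuration with counters in unary, enforce initialization and a local successor relation) is the same as the paper's, but the way you apportion the time quantifiers does not survive the time-before-trace ordering, and that ordering is the entire difficulty of this theorem. In the prefix $\exists i_1\,\forall i_2\,\exists i_3\,\exists i_4\,\forall i_5\,\constrQ{\forall}{\traceVar_1}\cdots$, the existential time variables are chosen \emph{before} $\traceVar_1$ is drawn, so a single choice of $i_3,i_4$ must work for \emph{every} model trace simultaneously. Your plan has $i_3,i_4$ ``locate the two $\mathtt{tick}$ positions'' of $\traceVar_1,\traceVar_2$ and $i_1$ ``witness the boundary position of the updated counter'' --- but the tick position and the counter boundary vary with the universally quantified configuration trace, so no trace-independent witnesses exist (in the intended model $\{t_k\}_{k\in\nat}$ the ticks range over all of $\nat$). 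The same problem hits the ``exactly one $\mathtt{tick}$'' half of well-formedness: the existence part needs an existential time point depending on $\traceVar_1$. So the correctness argument would fail in the forward direction: the intended model does not satisfy the formula as budgeted.

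The paper's construction is engineered precisely to avoid trace-dependent existential time points. It drops timestamps entirely: each trace carries, at every position, both its configuration and the (single, constant) transition it will take, so the successor relation is checked \emph{locally} at every position by the universal $\forall i$, with the inner existentials $i_+,i_-$ pinned down as the immediate successor/predecessor of $i$ by the constraints involving $\forall j$ --- i.e., they are functions of $i$ alone, not of any trace. The one genuinely trace-dependent position (where the incremented/decremented counter changes value) is guessed not by a time quantifier but by the \emph{innermost unconstrained trace quantifier} $\exists\traceVar_g$, via the $\guess/\Stop$ propositions, with $\mathtt{stopMulitpleGuess}$ forcing the guess to be usable at only one position; and a further subtlety you would also need to handle is that the successor trace $\traceVar'$ is re-chosen at every time point $i$, so all conditions must be stated so that these per-position choices cohere into a single next configuration. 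Your instinct that ``essentially all the work lies'' in the budgeting is right, but the budget you propose cannot be balanced; you need to restructure the encoding along these lines rather than compress a timestamp-based one.
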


\begin{proof}
In our reduction from the (non)-halting problem for 2-counter Minsky machines, {each trace encodes one configuration} and the transition relation for the next step of the computation.
Traces include the following propositional variables to capture configurations, for a given set of states \(Q\):
\(\Prop_{\text{config}}\mathop{=} Q \mathop{\cup} \{\mem_1, \mem_2\}\) with \(\mem_1\) and \(\mem_2\) encoding the current value in their respective counters.
%
%
We use
\(\Prop_{\text{transition}}\mathop{=} \{q_{\text{to}}, q_{\text{from}}\,|\, q\mathop{\in}Q\} \mathop{\cup} \{\inc, \dec,\isZ, \toC_1, \toC_2 \}\) to specify transitions
where \(\toC_1\) and \(\toC_2\) represent the counters to be updated.
The values in each counter are represented in unary and so incrementing the counter \(c\) {amounts} to making \(\mem_i\) true for one more step, while decrementing removes the last true valuation.
We \del{will} use an unconstrained trace quantifier to guess the time point where this increment or decrement takes place.
This guess is guided by two propositional variables:
\(\Prop_{\text{guess}}\mathop{=}\{\guess, \Stop\}\).
Hence, our traces are defined over
\(\Prop\mathop{=}\Prop_{\text{config}} \mathop{\cup} \Prop_{\text{transition}}  \mathop{\cup} \Prop_{\text{guess}}\).

\newcommand{\exactO}{\mathtt{exactlyOne}}
We start by defining useful formulas to encode requirements of well-formed traces, where \(\exactO(\PropY)\) is true when only one of the propositions in \(\PropY\) holds:
\begin{itemize}
\item At each time \(i\) the trace \(\traceVar\) defines an unique transition:
    {\small
    \begin{align}
    \hspace{-7mm}
     \mathtt{singleTr}&(Q,\traceVar,i) \Def 
    \exactO(\{q_{\text{to}}(\traceVar,i)\mid q\mathop{\in}Q\})\ \wedge 
    \exactO(\{q_{\text{from}}(\traceVar,i) \mid q\mathop{\in}Q\}) \wedge\\ 
    &\exactO(\{\toC_{1}(\traceVar,i), \toC_2(\traceVar,i)\})\wedge 
    \exactO(\{\inc(\traceVar,i) ,\dec(\traceVar,i),\isZ(\traceVar,i)\})\nonumber
    \end{align}
    }

\item for times \(i\) and \(j\) the trace \(\traceVar\) defines the same transition:
    {\small
    \begin{align}
    \hspace{-7mm}
    \mathtt{sameTr}(Q,\traceVar,i,j)\Def &
    \bigwedge\limits_{q\mathop{\in}Q} 
    (q_{\text{to}}(\traceVar,i) \leftrightarrow q_{\text{to}}(\traceVar,j)) \wedge 
    \bigwedge\limits_{q\mathop{\in}Q} 
    (q_{\text{from}}(\traceVar,i) \leftrightarrow q_{\text{from}}(\traceVar,j))\ \wedge \\
    & \bigwedge\limits_{c\mathop{\in}\{1,2\}} 
    (\toC_{c}(\traceVar,i) \leftrightarrow \toC_{c}(\traceVar,j)) \wedge 
    \!\!\!\! \bigwedge\limits_{op\mathop{\in}\{\inc,\dec,\isZ\}} \!\!\!\!\!\!
    (op(\traceVar,i) \leftrightarrow op(\traceVar,j)) \nonumber
    \end{align}
    }  
\item transition matches a transition from \(\Delta\):
    {\small
    \begin{align}
    \hspace{-7mm}
    \mathtt{validTr}(\Delta,\traceVar,i)\Def &
    \bigvee\limits_{(q,c,op,q')\mathop{\in}\Delta} (q_{{\text{from}}}(\traceVar,i)) \wedge \toC_c(\traceVar,i)) \wedge op(\traceVar,i)) \wedge q'_{{\text{to}}}(\traceVar,i)))
    \end{align}
    }  
\item for times \(i\) and \(j\) the trace \(\traceVar\) defines the same unique state:
{\small
\begin{align}
\hspace{-7mm}
\mathtt{sameState}(Q,\traceVar,i,j) &\Def (\bigwedge\limits_{q\mathop{\in}Q}q(\traceVar,i) \leftrightarrow q(\traceVar,j)) \wedge 
\exactO(\{q(\traceVar,i)\mid q\mathop{\in}Q\})
\end{align}
}
\item states in \(\traceVar\) and \(\traceVar'\) match the configuration:
{\small
\begin{align}
\hspace{-7mm}
\mathtt{goodStates}(Q,\traceVar,\traceVar',i) &\Def \bigvee_{q\mathop{\in}Q} (q_{\text{from}}(\traceVar, i) \leftrightarrow q(\traceVar,i)) \wedge \bigvee_{q\mathop{\in}Q} (q_{\text{to}}(\traceVar, i) \leftrightarrow q(\traceVar',i))
\end{align}
}
\item Only one (the first) guess can affect the evaluation:
{\small
\begin{align}
\hspace{-7mm}
\mathtt{stopMulitpleGuess}(\traceVar_g,i,j) &\Def (\guess(\traceVar_g,i)\vee \Stop(\traceVar_g,i)) \rightarrow \Stop(\traceVar_g,j)
\end{align}
}
\end{itemize}

\newcommand{\iSucc}{i_{+}}
\newcommand{\iPre}{i_{-}}
We recall that operations update at most one of the counters.
To guess for the time point where this change occurs, our encoding uses propositional variables \(\guess\) and \(\Stop\) to be instantiated by an unconstrained trace quantifier.
In particular, if either the \(\guess\) is false or \(\Stop\) is true, we require the two traces being compared to have equal values in their counters.
We define now how to encode the effect of an operation, where \(\traceVar\) encodes the transition to be applied as well as the incoming configuration,  \(\traceVar'\) encodes the outgoing configuration and \(\traceVar_g\) the unconstrained trace with the guess:
\vspace{-1mm}
{\small
    \begin{align}
    \hspace{-4.3mm}
    \mathtt{op}(&\traceVar,c,\traceVar',\traceVar_g, i, \iSucc, \iPre) \Def
    (\mem_{\overline{c}}(\traceVar,i) \leftrightarrow \mem_{\overline{c}}(\traceVar',i)\ \wedge\\
    &(\isZ(\traceVar,i) \rightarrow (\neg \mem_c(\traceVar,i) \wedge \neg \mem_c(\traceVar',i)))\ \wedge \\
    &((\neg \guess(\traceVar_g,i) \vee \Stop(\traceVar_g,i)) \rightarrow (\mem_c(\traceVar,i) \leftrightarrow \mem_c(\traceVar',i)))\ \wedge \\
    &
    ((\neg \Stop(\traceVar_g,i) \wedge \mathtt{guess}(\traceVar_g,i) \wedge \inc(\traceVar,i)) \rightarrow
    (\neg \mem_c(\traceVar,i) \wedge \mem_c(\traceVar',i) \wedge \neg \mem_c(\traceVar',\iSucc)) \wedge \\
    &
    ((\neg \Stop(\traceVar_g,i) \wedge \mathtt{guess}(\traceVar_g,i) \wedge \dec(\traceVar,i)) \rightarrow
    ( \mem_c(\traceVar,i) \wedge \neg \mem_c(\traceVar',i) \wedge \mem_c(\traceVar',\iPre)) 
    \end{align}
}
We include the time just before \(i\), \(\iPre\), and just after, \(\iSucc\), because time quantification will precede the quantification of \(\traceVar'\) and, so, we may get a different \(\traceVar'\) at each time point.
In the time-prefixed formula defined next, we encode the requirement that all true valuations of \(\mem_1\) and \(\mem_2\) must be consecutive, starting from the beginning of the trace.
Hence, by using \(\iPre\) and \(\iSucc\) we ensure that the change around time \(i\) is consistent with the operation leading to it.
We combine all requirements above in the following time-prefixed hypertrace logic formula \(\varphi_{\mathcal{M}}\) for a given machine \(\mathcal{M}\mathop{=}(Q,\Delta,\hat{q})\):
\vspace{-1mm}
{\small
\begin{align}
\hspace{-4mm}
\exists i_0 
\forall i\ \exists \iSucc \ &\exists \iPre\ \forall j\ \constrQ{\forall}{\traceVar}\ \constrQ{\exists}{\traceVar'}\ \constrQ{\exists}{\traceVar_{\hat{q}}}\ \exists \traceVar_{g}
\big(\\
&\hspace{-11mm}(i_0\leq j \wedge \iPre\leq i \wedge i < \iSucc \wedge (i<j \rightarrow \iSucc\leq j) \wedge (\iPre=i \rightarrow i=i_0) \wedge (j<i \rightarrow j\leq \iPre)) \rightarrow
\\ 
&
(\mem_1(\traceVar,\iSucc)\rightarrow\mem_1(\traceVar,i)) \wedge (\mem_2(\traceVar,\iSucc)\rightarrow\mem_2(\traceVar,i)) \ \wedge
\\
&\hat{q}(\traceVar_{\hat{q}},i) \wedge (\hat{q}(\traceVar,i) \rightarrow (\neg \mem_1(\traceVar,i) \wedge \neg \mem_2(\traceVar,i)))\ \wedge\\ 
&\mathtt{stopMulitpleGuess}(\traceVar_g,i,\iSucc)\ \wedge
\\
&\mathtt{singleTr}(Q,\traceVar,i) \wedge 
\mathtt{sameTr}(Q,\traceVar,i,\iSucc) \wedge 
\mathtt{validTr}(\Delta,\traceVar,i)\ \wedge 
\\
&
\mathtt{sameState}(Q,\traceVar,i,\iSucc)  \wedge
\mathtt{goodStates}(Q,\traceVar,\traceVar',i)\ \wedge\\
&(\toC_{1}(\traceVar,i) \rightarrow \mathtt{op}(\traceVar,1, \traceVar', \traceVar_g, i, \iSucc, \iPre)) \wedge (\toC_{2}(\traceVar,i) \rightarrow \mathtt{op}(\traceVar,2, \traceVar', \traceVar_g, i, \iSucc, \iPre ))
 \big)
\end{align}
}
A set of traces \(\setTraces\modelsTFOL \varphi_{\mathcal{M}}\) satisfies the following properties, where for \(\PropY\mathop{\subseteq}\Prop\) we define its projection over a trace as \(\trace|_{\PropY}=(\trace[0]\mathop{\cap}\PropY)(\trace[1]\mathop{\cap}\PropY)\ldots\):
\begin{itemize}
    \item exists \(\trace\mathop{\in}\setTraces\) and \((\hat{q},c,op,q')\mathop{\in}\Delta\) s.t.\ \(\trace|_{\{\hat{q},\hat{q}_{\text{from}}, \toC_c, op, q'_{\text{to}},\mem_1, \mem_2\}} \mathop{=} \{\hat{q},\hat{q}_{\text{from}}, \toC_c, op, q'_{\text{to}}\}^{\omega}\) encoding the initial state where both counters are \(0\);
    %
    \item for all \(\trace\mathop{\in}\setTraces\), \(\trace|_{\{\mem_1\}}\mathop{=} \{\mem_{1}\}^{n_1} \{ \}^\omega\), \(\trace|_{\{\mem_2\}}\mathop{=} \{\mem_{2}\}^{n_2} \{ \}^\omega\)
    and there exists \(q\mathop{\in}Q\) and \((q,c,op,q')\mathop{\in}\Delta\) s.t.\ 
    \(\trace|_{\{q,q_{\text{from}},\toC_c,op,q'_{\text{to}}\}}=\{q,q_{\text{from}},\toC_c,op,q'_{\text{to}}\}^{\omega}\), which are unique
    and denoted by 
    \(\text{config}(\trace){=}(q,n_1,n_2)\) and \(\text{trans}(\trace)\mathop{=}(q,c,op,q')\);
    \item for each trace assigned to \(\traceVar\), there can be only one time point where for the unconstrained trace assigned to \(\traceVar_g\) the value of \(\guess\) is true and \(\Stop\) is false;
    %
    \item the trace assigned to \(\traceVar'\) encodes a correct next state from the transition defined by \(\traceVar\).
\end{itemize}
\medskip
\(\setTraces\modelsTFOL\varphi_{\mathcal{M}}\), {iff} 
\(\setTraces\) has a
trace encoding the initial state of \(\mathcal{M}\) and, 
{starting from that trace we can  define a valid infinite computation of \(\mathcal{M}\) using only traces in \(\setTraces\)}.
%
%
\end{proof}

\vspace{-4mm}
\section{Related Work}
\vspace{-2mm}
\label{sec:rel_work}

The first work to explore the connection between hyperlogics (specifically \(\HLTL\)) and classical first-order reasoning is by Finkbeiner and Zimmermann~\cite{hyperFO17}.
They extend \(\FOLOrder\) with the \emph{equal-level} predicate \(E\), which allows comparisons between positions at the same time point across different traces. 
They also use the bounded-promptness (also referred to as bounded-termination) property from~\cite{bozzelli2015unifying} to show that \(FO[<,E]\) is strictly more expressive than HyperLTL.
Later, Bartocci et al. introduce hypertrace logic \cite{bartocci2022flavors}, which takes a different approach by extending \(\FOLOrder\) to a two-sorted logic, explicitly distinguishing between sorts for time and (constrained) traces.
They prove that \(FO[<,E]\) and hypertrace logic (with only constrained trace quantifiers) are equivalent.
%
%
More recently, Beutner and Finkbeiner, in \cite{satHyperFOL25}, presented a tool to check the satisfiability of hyperproperties specified in \(\HLTL\) by translating them to first-order logic.
Their translation follows the same approach as hypertrace logic: it considers separate sorts for traces and time.
Other works have explore reasoning about hyperproperties using second-order quantification. 
For example, \(\HTwoLTL\)~\cite{secondHLTL23} extends \(\HLTL\) by introducing second-order quantification over the set of all possible traces, in addition to first-order quantification over a given trace set. 
These approaches fall outside the scope of our work, as we focus on first-order definable languages.

The boundaries for the decidability of the satisfiability problem of \(\HLTL\) were first investigated in \cite{satHyperLTL16}.
These results were extended, in \cite{keysdecideHyperltl20}, to understand further which \(\HLTL\) fragments have a decidable satisfiability checking.
In \cite{hierarchyHyper19}, Coenen et al. give a complete picture of the hierarchy of hyperlogics based on extensions of temporal logics.
In particular, they compare and summarize results for \(\HLTL\) and \(\HQPTL\).
A different line of work looks into \(\LTL\) interpretation with team semantics, called \(\TeamLTL\).
Team semantics generalizes classical first-order logic by evaluating formulas over sets of assignments, called teams, rather than single assignments.
\(\TeamLTL\),
based on this framework, does not use explicit trace quantifiers. 
Instead, it adopts the modular approach of team semantics, where reasoning over multiple traces is introduced implicitly through the use of generalized atoms.
In \cite{LTLTeam21}, the authors establish the relation between different extensions of \(\TeamLTL\) to fragments of \(\HQPTL^+\), which generalizes \(\HQPTL\) to include non-uniform quantification over propositional variables.

\vspace{-2mm}
\section{Conclusion}
\vspace{-2mm}
In this work, we introduced an extension of hypertrace logic with unconstrained trace quantifiers, effectively extending its expressive power. 
We established expressiveness results connecting fragments of hypertrace logic to well-known temporal logics: unconstrained hypertrace logic is equivalent to \(\SOneS\), while trace-prefixed hypertrace logic corresponds to \(\HQPTL\).
We also identified a decidable fragment of the logic, where constrained trace quantifier alternation is restricted to a single existential-to-universal switch. 
This generalizes a known decidability result for \(\HQPTL\) by allowing arbitrary placements of time quantifiers. 
Finally, we identified a fragment of time-prefixed hypertrace logic with undecidable satisfiability checking.
For future work, it remains open how the trace- and time-prefixed fragments relate in terms of expressiveness. 
Additionally, we plan to explore the connection between hyperlogics and classical first-order reasoning to enable the transfer of techniques and results across these areas.

\bibliography{main}

\section{Full Proofs}
\noindent{\bf Theorem \ref{thm:tracepre_hqptl}.}
\emph{
	For all \(\HQPTL\) sentences \(\varphi_H\) there exists a trace-prefixed hypertrace sentence \(\varphi\) such that
	for all sets of infinite traces \(\setTraces \mathop{\subseteq} (\AllVals)^\omega\),
	 \(\setTraces \modelsHyper \varphi_H\) iff \(\setTraces \modelsTFOL \varphi\).
	For all trace-prefixed hypertrace sentences \(\varphi\) there exists a \(\HQPTL\) sentence \(\varphi_H\) such that
	for all sets of infinite traces \(\setTraces \mathop{\subseteq} (\AllVals)^\omega\),
	 \(\setTraces \modelsHyper \varphi_H\) iff
	\(\setTraces  \modelsQPTL \varphi\).}

\begin{proof}
We denote by \(\mathtt{LTLtoFO}(\psi)\) and \(\mathtt{FOtoLTL}(\psi)\) the translation from \(\LTL\) formulas to \(\FOLOrder\) and vice-versa given by 
 \(\LTL\) and \(\FOLOrder\) being equivalent \cite{gabbay1980temporal}.

We start with the translation from \(\HQPTL\) formulas to an equivalent trace-prefixed hypertrace formula. 
We first define the translation of \(\varphi\mathop{\in}\HQPTL\) for its different quantifiers:
\begin{equation}
\transl_H^{\text{quant}}(\varphi) =
\begin{cases}
 \varphi & \tIf \varphi \text{ is quantifier-free}\\
 \constrQ{\Quant}{\traceVar}\ \transl_H^{\text{quant}}(\varphi')  & \tIf \varphi\mathop{=} \Quant \traceVar\ \varphi', \Quant\mathop{\in}\{\forall,\exists\} \tAnd \traceVar\mathop{\in}\VarTrace\\
 \Quant \traceVar_q\ \transl_H^{\text{quant}}(\varphi') & \tIf \tIf \varphi\mathop{=} \Quant q\ \varphi', \Quant\mathop{\in}\{\forall,\exists\} \tAnd q\mathop{\in}\Prop
\end{cases}
\end{equation}
With \(\mathtt{LTLtoFO}(\psi)\) all propositional variables \(a_{\traceVar}\) in \(\psi\) are mapped to \(P_{a_{\traceVar}}(\timeVar)\), while all \(q\)
are mapped to \(P_{q}(\timeVar)\).
We define the substitution from these predicates to the binary predicates of \(\logic\):
$\sigma \mathop{=} \{P_{a_{\traceVar}}(i) \mapsto \BinaryP_a(\traceVar, \timeVar) \mid a\mathop{\in}\Prop,\traceVar\mathop{\in}\VarTrace , \timeVar\mathop{\in}\VarTime\} \cup
\{P_{q}(i) \mapsto \BinaryP_q(\traceVar_q, \timeVar) \mid q\mathop{\in}\Prop,\traceVar\mathop{\in}\VarTrace, \timeVar\mathop{\in}\VarTime\}
$.
The translation from \(\HQPTL\) formulas to \(\TraceTFOL\) is defined below, where \(\psi\mathop{\in}\LTL\),  \(x_i\mathop{\in}\VarTrace\mathop{\cup}\Prop\), for \(1\mathop{\leq} i\mathop{\leq}n\), and \(\Quant\mathop{\in}\{\forall, \exists\}\):
\begin{align}
\label{thm:tr_hqptl_hypertrae}
\hspace{-4mm}
\transl_H&(\Quant x_1\! \ldots\! \Quant x_n \ \psi) \mathop{=}
\transl_H^{\text{quant}}(\Quant x_1\! \ldots\! \Quant x_n \ ((\mathtt{LTLtoFO}(\psi))[\sigma])).
\end{align}
%
%
Given a trace assignment, \(\traceAssign\), and \(k\mathop{\in}\nat\) we denote \(\traceAssign_k\) any assignment satisfying
\(q\mathop{\in}\traceAssign_k(\traceVar_q)[i]\) iff \(q\mathop{\in}\traceAssign(\traceVar_q)[i+k]\), and otherwise, \(\traceAssign_k(\traceVar)\mathop{=}(\traceAssign(\traceVar))[k\ldots]\).
We want to prove that
given a \(\HQPTL\) formula \(\varphi\mathop{=}\Quant x_1 \ldots \Quant x_n \ \psi\) where
\(x_i\mathop{\in}\VarTrace\mathop{\cup}\Prop\), for \(1\mathop{\leq} i\mathop{\leq}n\), and \(\Quant\mathop{\in}\{\forall, \exists\}\):
\((\traceAssign,{\setTraces}, k)\modelsHyper \varphi\) iff
\((\overline{\setTraces[k\ldots]},(\traceAssign_k, \AssignN^{\emptyset}))\modelsTFOL \transl_{H}(\varphi)\), where \(\AssignN^{\emptyset}\) is the empty time assignment.

We proceed by induction on the structure of the formula, with the base case being the (longest) quantifier-free formula, \(\psi\mathop{\in}\LTL\).
By Proposition~\ref{prop:zipping} and equivalence of \(\LTL\) to \(\FOLOrder\)~\cite{gabbay1980temporal}, \((\traceAssign,{\setTraces}, k)\modelsHyper \psi\) iff
\(\flatT{\traceAssign}[k \ldots ]\models \mathtt{LTLtoFO}(\psi)\).
Let \(\traceAssign'_k\) be the trace assignment defined from \(\flatT{\traceAssign}[k \ldots ]\) as follows: 
\(
a\mathop{\in}\traceAssign'_k(\pi)[i] \tIff
a_{\traceVar}\mathop{\in}\flatT{\traceAssign}[i+k]\)
and
\(q\mathop{\in}\traceAssign'_k(\traceVar_q)[i] \tIff
q\mathop{\in}\flatT{\traceAssign}[i+k]\)
for all \(\timeVar \mathop{\in}\nat\), trace variables \(\traceVar\in \Var\) and propositional variables \(a,q\mathop{\in} \Prop\).
We prove by induction on \(\FOLOrder\) formulas that 
\((\overline{\flatT{\traceAssign}[k \ldots ]}, \AssignN)\models \mathtt{LTLtoFO}(\psi)\) iff 
\((\overline{\setTraces[k\ldots]},(\traceAssign'_k, \AssignN))\modelsTFOL (\mathtt{LTLtoFO}(\psi))[\sigma']\), for any trace assignments \(\traceAssign\), time assignments \(\AssignN\), set of traces \(\setTraces\) and \(\LTL\) formula \(\psi\).
We observe that \(\sigma'\) changes only predicates, thus, we need only to prove the base cases, as the induction cases follow from definitions and induction hypothesis.
For the first base case,
\((\overline{\flatT{\traceAssign}[k \ldots ]}, \AssignN)\models \UnaryP_{a_{\traceVar}}(i)\) iff
\(a_{\traceVar}\mathop{\in}(\flatT{\traceAssign}[k \ldots ])[\AssignN(i)]\)
iff
\(a_{\traceVar}\mathop{\in}\flatT{\traceAssign}[k+\AssignN(i)]\).
By definition of \(\traceAssign'_k\), this is equivalent to 
\(a\mathop{\in}\traceAssign'_k(\traceVar)[\AssignN(i)]\), and so, 
\((\overline{\setTraces[k\ldots]},(\traceAssign'_k, \AssignN))\modelsTFOL \BinaryP_a(\traceVar,i)\).
The other base case, \((\overline{\flatT{\traceAssign}[k \ldots ]}, \AssignN)\models \UnaryP_{q}(i)\) is analogous.

Let's consider now the induction case, \(\exists \traceVar\ \varphi\).
We assume that \((\traceAssign,{\setTraces}, k)\modelsHyper \exists \traceVar\ \varphi\), that is, exists \(\trace\mathop{\in}\setTraces\) s.t.\ \((\traceAssign[\traceVar\mapsto\trace]{\setTraces}, k)\modelsHyper \varphi\).
By induction hypothesis, \((\overline{\setTraces[k\ldots]},(\traceAssign[\traceVar\mapsto\trace]_k, \AssignN^{\emptyset}))\modelsTFOL \varphi\)
and so
\((\overline{\setTraces[k\ldots]},(\traceAssign_k, \AssignN^{\emptyset}))\modelsTFOL \constrQ{\exists}{\varphi} \varphi\).
The other quantifier is analogous.
Hence,
for all \(\HQPTL\) formulas \(\varphi_H\) there exists the trace-prefixed hypertrace formula
\(\text{tr}_{H}(\varphi_H)\) s.t.\ \(\setTraces \modelsHyper \varphi_H\) iff \(\setTraces \modelsTFOL\text{tr}_{H}(\varphi_H)\).

For the translation from trace-prefixed hypertrace formulas to \(\HQPTL\), we use Lemma~\ref{lemma:rewrite:independent}, and give a translation from flattened hypertrace formulas to \(\HQPTL\).
As before we define a substitution 
\(\sigma''\mathop{=}\{\BinaryP_a(\traceVar, \timeVar) \mapsto P_{a_{\traceVar}}(i) \mid a\mathop{\in}\Prop,\traceVar\mathop{\in}\VarTrace, \timeVar\mathop{\in}\VarTime\}\cup
\{\BinaryP_q(\traceVar_q, \timeVar) \mapsto  P_{q}(i) \mid q\mathop{\in}\Prop,\traceVar\mathop{\in}\VarTrace, \timeVar\mathop{\in}\VarTime\}\).
The translation is defined as:
\(\transl_{\traceS}(\varphi) \mathop{=}
\overrightarrow{\Quant} (\mathtt{FOtoLTL}(\psi'[\sigma'']))\) with 
\(\overrightarrow{\Quant} \psi'\mathop{=}\rewriteFlat(\varphi,\Prop,\emptyset)
\)
where \(\overrightarrow{\Quant}\) is a sequence of constrained and unconstrained trace quantifiers and \(\psi'\) is a hypertrace formula with no trace quantifiers.

We prove that for all hypertrace formulas \(\varphi\) with only free trace variables,
for all sets of traces \(\setTraces\) and assignments \(\AssignT\):
\((\overline{\setTraces},(\AssignT, \AssignN^{\emptyset}))\modelsTFOL \varphi\) iff
\((\AssignT',{\setTraces}, 0)\modelsHyper \transl_{\traceS}(\varphi)\), where
\(\AssignT'(\traceVar_q)=(\AssignT(\traceVar_q)[0]\mathop{\cap}\{q\}) (\AssignT(\traceVar_q)[1]\mathop{\cap}\{q\}) \ldots\).
Base-case, where \(\psi'\) has no trace quantifiers and all time quantifiers are bounded: 
\((\overline{\setTraces},(\AssignT, \AssignN^{\emptyset}))\modelsTFOL \psi'\) iff
\((\overline{\flatT{\AssignT}}, \AssignN^\emptyset) \models \psi'[\sigma']\), follows from an analogous proof from the previous case.
As the translation from \(\AssignT\) to \(\AssignT'\) does not change the valuation of \(q\) in the trace assigned to \(\traceVar_q\) and 
by \cite{gabbay1980temporal},
\((\overline{\flatT{\AssignT}}, \AssignN^\emptyset) \models \psi'[\sigma']\) iff
\(\flatT{\AssignT'}\modelsLTL\mathtt{FOtoLTL}(\psi'[\sigma'])\).
By Proposition \ref{prop:zipping}, it is equivalent to
\((\AssignT',{\setTraces}, 0)\modelsHyper \mathtt{FOtoLTL}(\psi'[\sigma'])\).
The induction cases (i.e., constrained and unconstrained quantifiers) follow from induction hypothesis and definitions.
\end{proof}

\end{document}